\theoremstyle{plain}
\newtheorem{lemma}{Lemma}[section]
\theoremstyle{definition}    
\theoremstyle{remark}        
\newcommand{\Klein}{\mathbb Z_{2}\!\times\!\mathbb Z_{2}}
\newcommand{\Cl}{\mathrm{Cl}_{1,3}}
\newcommand{\suchthat}{\;\ifnum\currentgrouptype=16 \middle\fi|\;}
\begin{document}

\title{Graded Paraparticle Algebra of Majorana Fields for Multidimensional Quantum Computing with Structured Light}
\author{Fabrizio Tamburini} 
\email{fabrizio.tamburini@gmail.com}
\author{Nicol\'o Leone}
\email{nicolo.leone@rotonium.com}
\author{Matteo Sanna}
\email{matteo.sannai@rotonium.com}
\author{Roberto Siagri}
\email{roberto.siagri@rotonium.com}
\affiliation{Rotonium -- Quantum Computing, Le Village by CA, Piazza G. Zanellato, 23, 35131 Padova PD, Italy. }

\begin{abstract}
We present a theoretical framework that integrates Majorana's infinite-component relativistic equation within the algebraic structure of paraparticles through the minimal nontrivial $\mathbb{Z}_2 \times \mathbb{Z}_2$--graded Lie algebras and $R$-matrix quantization. 
By mapping spin-dependent mass spectra to graded sectors associated with generalized quantum statistics, we derive an equation embodying Majorana's mass-spin relation describing Majorana quasiparticles of structured light carrying spin and orbital angular momentum. 
These quanta in the $\mathbb{Z}_2 \times \mathbb{Z}_2$--graded algebras and $R$-matrix formulations extend the previous results from superconducting qubits to photonic platforms and set up deterministic 2-photon gates involving at least two qubits encoded in a single photon without nonlinear effects. 
This makes feasible general quantum computing pathways exploiting fractional statistics through Nelson's quantum mechanics and implement a novel procedure for error correction in photonic platforms. Furthermore, this approach makes possible to set paraparticle-based quantum information processing, beyond fermions and bosons, using graded qudits.
\end{abstract}

\maketitle

\section{Introduction}
Paraparticles are hypothetical quantum particles that extend the standard classification of particles in quantum field theory beyond the well-known fermions and bosons \cite{green}. 
Yet, after six decades of theoretical work, no experimental platform has demonstrated deterministic and scalable paraparticle statistics in the laboratory.
These entities emerge from parastatistics, a generalization of quantum statistics that allows for more complex symmetry properties under particle exchange \cite{Ohnuki,Kamefuchi}. 
While fermions like electrons obey the Pauli exclusion principle and antisymmetric wavefunctions, and bosons, like photons, follow symmetric wavefunctions and can occupy the same quantum state, paraparticles are theorized to obey intermediate symmetry rules characterized by a parameter, $p$,  known as the parastatistics order.

For instance, a system of parafermions of order $p$ allows at most $p$ particles to occupy the same antisymmetric state, unlike ordinary fermions where only one is allowed. This framework arises naturally in mathematical formulations involving trilinear commutation relations rather than the standard bilinear ones, and it retains consistency with the spin-statistics theorem under certain conditions.
Paraparticles have surfaced in supersymmetry, quantum groups and higher-dimensional field theories, exploring extensions of the Standard Model, but remain still experimentally elusive. They also sharpen our understanding of symmetry by extending the Young-tableaux machinery \cite{fulton} used for bosons and fermions to the full permutation group. 
These diagrammatic tools are used in the representation theory of symmetric and general linear groups used to describe the symmetry properties of bosonic systems or identical particles obeying parastatistics.
In quantum computing and quantum information (see, for an introductory reference, Ref. \cite{NielsenChuang2000}), Wang and Hazzard~\cite{wang} demonstrated that paraparticle statistics can be digitally simulated using superconducting qubits. 

Here we propose, instead, first the conceptual design of a direct photonic realization of paraparticle algebras using structured light modes, providing a continuous-variable and scalable platform to simulate also in this case exotic statistics beyond fermions and bosons, including Majorana photonic quasiparticles through the graded Lie superalgebras for paraparticles as proposed by Toppan~\cite{toppan} then we focus more on a mathematical formalism for more general photonic applications. 
As an example, we describe what are the actual properties of an ideal platform with single--photon computation in a spin-orbit coupled waveguide that natively realizes the full $\mathbb Z_{2}\!\times\!\mathbb Z_{2}$-graded paraparticle algebra that can be extended to higher dimensional qudits.
After setting the necessary algebraic background we derive a Jordan--Wigner map from graded paraboson/parafermion operators then applied to spin--orbit photonic modes and, as an example, build a deterministic universal gate set inside the photonic ququart and quantify its tolerance to realistic loss and dephasing. 

Finally, we sketch some ideal gate implementation referring to current integrated-photonics technology and outline how the large-$p$ limit recovers Maxwell--Boltzmann behavior, establishing a bridge between paraparticle algebra, scalable photonic hardware, and the spectral geometry.

\section{Statistics of Parabosons and Parafermions}
Parabosons and parafermions that obey parastatistics of order $p$ are associated with trilinear commutation or anticommutation relations and belong to representations of the symmetric group rather than the permutation group used for bosons and fermions.
Adopting the minimal nontrivial grading parabosons of order $p$ are generalizations of bosons and satisfy the following trilinear commutation relations for creation, different from standard bosonic commutators \cite{Govorkov}, with $[a_i,a_j^\dag] = \delta_{ij}$,
\begin{equation} 
\begin{split}
&[a_i,[a_j^\dag,a_k]] = 2\delta_{ij}a_k,  \quad [a_i,[a_j,a_k]] =0 ,
\\
&[a_i,[a_j^\dag,a_k^\dag]] = 2\delta_{ij}a_k^\dag - 2\delta_{ik}a_j^\dag
\end{split}
\label{1}
\end{equation}
and up to $p$ identical parabosons can occupy the same quantum state.
The states so constructed obey symmetrized Young tableaux, with at most $p$ rows. 
When $p=1$, the statistics reduce to ordinary bosons.

Parafermions, instead, obey the trilinear anticommutation relations:
\begin{equation} 
\begin{split}
&\{f_i,\{f_j^\dag,f_k\}\} = 2\delta_{ij}f_k,  \quad \{f_i,\{f_j,f_k\}\} =0.
\\
&\{f_i,\{f_j^\dag,f_k^\dag\}\} = 2\delta_{ij}f_k + 2\delta_{ik}f_j .
\end{split}
\label{2}
\end{equation}
In both cases, the parameter $p$ determines the degree of deviation from conventional statistics, with higher $p$ values allowing more symmetry flexibility. Parastatistics find formal use in quantum field theory and group theory, but so far remain theoretical constructs without experimental evidence. For Lorentz covariant fields also according to Greenberg--Messiah construction \cite{GreenbergMessiah}, the commutator (paraboson) version of Eq.~\ref{1} must be used for integer spin, and the anticommutator (parafermion) version for half-integer spin (Eq.~\ref{2}) otherwise micro-causality does not hold.

\subsection{Large $p$ values}

As the Green index $p \rightarrow \infty$, both parabosons and parafermions will change their characteristics. A paraboson system loses every trace of its generalised-Bose exchange constraint and behaves, for all practical purposes, like a collection of distinguishable particles recovering the full Bose-Einstein condensation behavior and the exclusion constraint disappears making the occupation statistics, parastatistics, converge to Maxwell--Boltzmann instead of bosonic symmetrization; in the classical limit the quantum correlations and exchange effects that distinguish fermions and bosons are gradually erased. For parafermions the operator algebra really collapses to canonical Fermi anticommutator $\{f,f^\dagger\}=1$ while the occupancy statistics become Maxwell–Boltzmann.

Green trilinear relations for a single paraboson mode of order $p$ of Eq. \ref{1} and \ref{2} with $p$-deformed commutator reproduces Green's trilinear relations implies the operator identity from the Green trilinear algebra
\begin{equation}
[b,b^{\dagger}]
   \;=\;
   1+\frac{2}{p}\,b^{\dagger}b
     \;-\;\frac{2}{p}\,bb^{\dagger}.
\label{eq:pComm}
\end{equation}
here explicitly written with commutators and anticommutators together to describe a paraboson mode $b$, $b^{\dagger}$ of order $p$ through the trilinear Green identities,
\begin{subequations}
\label{eq:GreenTri}
\begin{align}
&\bigl[b,\{\,b^{\dagger},b\,\}\bigr] = +\,2\,b,
\label{eq:GreenTri-a}\\
&\bigl[b^{\dagger},\{\,b^{\dagger},b\,\}\bigr] = -\,2\,b^{\dagger}.
\label{eq:GreenTri-b}
\end{align}
\end{subequations}
The number operator and $p$-deformed commutator are defined with the occupation operator
$\hat N:=b^{\dagger}b$ with $\{\,b^{\dagger},b\,\}=2\hat N+1$.
Using \ref{eq:GreenTri} then one derives $[\,\hat N,b^{\dagger}\,]=+\,b^{\dagger}$, $[\hat N,b\,]=-\,b$
and the $p$-deformed single--mode commutator can be written also 
\footnote{A concrete Fock representation shows that each application of $b^{\dagger}$ effectively increases $\hat N$ by one and adds a factor $2/p$; see, e.g., O.~W.~Greenberg, \emph{Phys.\ Rev.\ Lett.}\ \textbf{13},~598 (1964).} making explicit the index number $p$ in terms of the occupation operator $\hat N$
\begin{equation}
[\,b,b^{\dagger}\,]_{p}
     \;=\;
     1+\frac{2}{p}\,\bigl(2\hat N+1\bigr) .
\label{eq:pComm}
\end{equation}
Equation~\ref{eq:pComm} reduces to the canonical Bose commutator $\,[b,b^{\dagger}]=1\,$ when $p\to\infty$, while the full Green relations \ref{eq:GreenTri} remain intact for any finite $p$.

This is more evident with the $p$-deformed factorial and Fock ladder, which is given by introducing
\begin{equation}
[n]!_{p}
  :=\prod_{k=1}^{n}\Bigl(k+\frac{k(k-1)}{p}\Bigr),
\end{equation}
and the normalised states
\begin{equation}
\lvert n\rangle_{p}
   :=\frac{(b^{\dagger})^{n}}{\sqrt{[n]!_{p}}}\,\lvert0\rangle ,
\qquad
b\,\lvert0\rangle=0 .
\end{equation}
Ladder matrix elements are
\begin{subequations}
\begin{align}
&b^{\dagger}\lvert n\rangle_{p}
=\sqrt{\,n+1+\dfrac{n(n+1)}{p}}\;\lvert n+1\rangle_{p},\\
&b\lvert n\rangle_{p} =\sqrt{\,n+\dfrac{n(n-1)}{p}}\;\lvert n-1\rangle_{p}.
\end{align}
\end{subequations}
For fixed $n$, then $[n]!_{p}=n!\,\bigl[1+\mathcal O(1/p)\bigr]$,  one obtains
\begin{equation}
\lvert n\rangle_{p}
   \xrightarrow[p\to\infty]{}\;
   \lvert n\rangle_{\text{boson}}\;=\;
   \frac{(b^{\dagger})^{n}}{\sqrt{n!}}\lvert0\rangle .
\end{equation}
The probability weight for $k$ identical parabosons occupying one mode
contains the factor
\begin{equation}
\prod_{m=1}^{k-1}\Bigl(1-\frac{m}{p}\Bigr)
      \;=\;1-\mathcal O\!\bigl(k^{2}/p\bigr).
\end{equation}
Thus, for $p\gg k$,
the exclusion constraint disappears and the statistics converge to Maxwell--Boltzmann (MB) instead of Bose--Einstein (BE). 

In other words, standard bosons and fermions ($p=1$) are therefore special, low--order points of the broader Green hierarchy, as $p$ grows, exchange correlations fade and the system behaves as if the particles were distinguishable. 
Higher $p$ just means one is summing over $p$ identical local fields.  Green order $p$ is an internal degeneracy, not a spin label.
The same limit for parafermions removes the Pauli cap and also yields Maxwell--Boltzmann populations.
Similarly, parafermions tend to Bosonic statistics. 
For the sake of completeness, in this case, Green relations for single mode order $p$ can be also written in a mixed commutator/anticommutator notation, $\bigl[f,\{f^{\dagger},f\}\bigr]_{+}=2f$, and $\bigl[f^{\dagger},\{f^{\dagger},f\}\bigr]_{+}=-2f^{\dagger}$ where ``$+$'' and ``-'' indicate integer spin and half integer spin, respectively.
The $p$-deformed anticommutator with $\hat N=f^{\dagger}f$ is $\{f,f^{\dagger}\}_{(p)} = 1-\frac{2}{p}\,\bigl(2\hat N-1\bigr)$ that in the large-$p$ limit, $\lim_{p\to\infty}\{f,f^{\dagger}\}_{(p)} = 1$, $n_{\max}=p\;\longrightarrow\;\infty$ so the exclusion cap disappears and the statistics approach Maxwell--Boltzmann, mirroring the paraboson $\,[b,b^{\dagger}]_{(p)}$ $\to\,1$ result.
Summarizing,
\begin{equation}
\begin{aligned}
&\text{paraboson}_{\,p\to\infty}
\;\longrightarrow\;
\begin{cases}
\text{Bose algebra}~\text{(operator level)},\\[4pt]
\\
\text{MB statistics} ~\text{(occupation level)},
\end{cases}
\\[10pt]
&\text{parafermion}_{\,p\to\infty}
\;\longrightarrow\;
\begin{cases}
\text{Fermi algebra}~\text{(operator level)},\\[4pt]
\\
\text{MB statistics}~\text{(occupation level)}.
\end{cases}
\end{aligned}
\end{equation}

Physically, large $p$ systems can be used to model systems with extremely high internal symmetry or large hidden degrees of freedom, where particles behave as if they were distinguishable or non-interacting. These set of particles can be used to build quantum computational states. 
An example are Fock states of photons described in terms of paraparticles with large $p$. For parabosons the Fock space becomes indistinguishable from bosonic Fock space. For parafermions the occupation limit vanishes, and Fock states resemble classical modes.
This asymptotic behavior hints that standard quantum statistics (bosons and fermions) are merely specific, low-order cases of a broader algebraic structure.

Consider, as an example, the one--mode photon Fock ladder for large--$p$ parabosons to describe photonic qudits with OAM and polarization.
Let $b^{\dagger},b$ obey the Green paraboson algebra of order $p$, then $\bigl[b,\{b^{\dagger},b\}\bigr]_{-}=2\,b$, $\bigl[b^{\dagger},\{b^{\dagger},b\}\bigr]_{-}=-2\,b^{\dagger}$, with the vacuum $b\,\lvert 0\rangle=0$.  Define the $p$--deformed occupation ladder as $\lvert n\rangle_{p}\;=\;\frac{1}{\sqrt{[n]!_{p}}}\;(b^{\dagger})^{\,n}\lvert 0\rangle$, and $[n]!_{p}:=\prod_{k=1}^{n}\bigl(k+\tfrac{k(k-1)}{p}\bigr)$.
The deformation enters only through the ``parabosonic factorial'' $[n]!_{p}$. Matrix elements read
$b^{\dagger}\lvert n\rangle_{p} = \sqrt{\,n+1+n( n-1)/p}\;\lvert n+1\rangle_{p}$ and $b\lvert n\rangle_{p} = \sqrt{\,n+ n( n-1)/p}\;\lvert n-1\rangle_{p}$.
For large--$p$ limit and for fixed photon number $n$, $[n]!_{p}\xrightarrow[p\to\infty]{}n!$ and the ladder coefficients reduce to $\sqrt{n+1}$ and $\sqrt{n}$, so $\lvert n\rangle_{p} \xrightarrow[p\to\infty]{}\; \lvert n\rangle_{\mathrm{boson}}\,$. All $p$--dependent corrections are suppressed by $1/p$ and the algebra
recovers the canonical commutator $[b,b^{\dagger}]\!=\!1$ and the Fock space is therefore indistinguishable from the usual photonic one. In the two--mode extension, for modes $b_{1},b_{2}$, the deformed factorial becomes $[n_1]!_{p}\,[n_2]!_{p}$ and the basis states $
\lvert n_1,n_2\rangle_{p} = (b_1^{\dagger})^{n_1}(b_2^{\dagger})^{n_2}\lvert0\rangle/\sqrt{[n_1]!_{p}[n_2]!_{p}}$ again converge to $\lvert n_1,n_2\rangle_{\mathrm{boson}}$ as $p\!\to\!\infty$. A photonic qudit encoded in OAM or polarization can therefore be viewed as a large--$p$ parabosonic register where the $p$--corrections quantify how fast true Bose--statistics are approximated.

For parafermions the analogous ladder has an occupation cap $N\le p$, but as $p\to\infty$ the cap recedes and $\lvert n\rangle_{p}^{(\text{paraF})}\!\to\!\lvert n\rangle_{\text{MB}}$, i.e.\ Maxwell--Boltzmann populations with \emph{no} exchange symmetry.
Thus the bosonic paraboson tower interpolates to Bose condensation, whereas the parafermionic Majorana tower interpolates to a classical, fully distinguishable gas of particles.
This explicit Fock ladder, the ordered sequence of Fock number states,  shows that photon states can be embedded in the broader paraboson algebra, with ordinary optics recovered when the order $p$ is large, supporting the interpretation of bosons and fermions as low--order corners of a unified statistical hierarchy.
In models of high-energy physics or quantum gravity, where exotic symmetries or hidden sectors dominate, the $p \rightarrow \infty$ regime may serve as an effective approximation.
In this way is also provided a formal connection between quantum and classical statistics, suggesting how indistinguishability and statistical exclusion emerge from deeper algebraic origins.
Thus, paraparticles of high order tend to lose their distinct quantum statistical identities, merging toward a classical, boson-like limit that dissolves the constraints of both Bose and Fermi exclusion principles.

\section{Photonic OAM Majorana Quasiparticles and Paraparticle Statistics}
As we all know, boson and fermions follow different statistics. 
From Wess and Zumino to String theory \cite{zumino,string1,string2} and other formulations of high energy physics including several theories of everything have different approaches to unify and extend these statistics.
A first attempt for the unification between bosons and fermions was made in 1932 by Ettore Majorana \cite{maj1932}, who introduced infinite-component relativistic wavefunctions for arbitrary spin proposing a mass-spin ``Tower'' with unified treatment of bosons and fermions. Majorana investigated theoretical extensions to particle statistics and field theories with infinite components of spin \cite{sudarshan,beka} that transcend the standard fermionic and bosonic dichotomy. 
The Majorana Tower finds several applications in condensed matter theories in photonics and in high energy physics related to scattering mechanisms and the Riemann zeta function \cite{tambu3,tambu4}.

Majorana's approach is rooted in relativistic wave equations for arbitrary spin particles and introduces the concept of a “mass-spin Tower” through the relation where their masses follow the spin $s$ and Majorana mass $m$ relation
\begin{equation}
M = \frac{m}{s+\frac 12}
\label{Tower}
\end{equation}
and $m$ is a universal Majorana mass scale related to the energy of these quanta in the lab. rest frame. Eq. \ref{Tower} leads to a unification framework where bosons and fermions appear as different states of the same underlying entity.
The spiral spectrum of photons carrying orbital angular momentum, OAM, and spin angular momentum, SAM, have been shown to be equivalent to the spin spectrum of the Majorana Tower \cite{tambukarimi}.

Spin blocks, Lorentz multiplets and the fixed graded bracket are present in the Tower. Each tower component $\Psi_{s}(x)$ carries definite spin $s\!\in\!\{0,\tfrac12,1,\tfrac32,\dots\}$ and transforms in the Lorentz representation
$D^{(s,0)}\oplus D^{(0,s)}$ (irreducible for $SL(2,\!\mathbb C)$), so that the full field splits as $\Psi(x)=\bigoplus_{s}\Psi_{s}(x)$.
For every block the correct (graded) exchange sign is
\begin{equation}
[\Psi_{s}(x),\Psi_{s}(y)]_{\pm}=0
\label{eq:SpinStatisticsBracket}
\end{equation}
when $(x-y)^{2}<0$. As before, the sign ``+'' $s\in\mathbb Z$ integer spin, commutator and ``-'' for $s\in\mathbb Z+\tfrac12$ with half-integer spin that implies anticommutator, because rotating the Wightman two-point function $W_{s}(x-y)=\langle0|\Psi_{s}(x)\Psi_{s}(y)|0\rangle$ by $2\pi$ multiplies it by $(-1)^{2s}$; micro-causality then forces the exchange bracket to carry the same phase, reproducing the standard spin-statistics pairing \cite{DellAntonioGreenbergSudarshan}. This fixes the sign \emph{before} any Z$_2\times$Z$_2$ grading is introduced, ensuring that each spin block remains local and Lorentz-covariant {\it per se}. 

On the other hand, paraparticles are governed by their parastatistics through algebraic structures like the smallest non-cyclic Abelian group $\mathbb{Z}_2 \times \mathbb{Z}_2$ - graded Lie superalgebras. In this way, paraparticle sectors are algebraically represented \cite{toppan}. 
This algebraic structure offers four symmetry sectors, enabling encoding of multiple particle types. 
$\mathbb{Z}_2 \times \mathbb{Z}_2$ - graded Lie algebras can implement the formulation of trilinear commutation and anticommutation relations and systematically classify them with graded Jacobi identities.
This approach recovers the structure of Green's parastatistics and connects it to modern Lie algebra representation theory enabling an extension of supersymmetric (SUSY) to para-SUSY-like theories.
To generalize the Green's trilinear algebra to the graded setting, we postulate the graded trilinear relation $[\psi^-_{(a,b)},[ \psi^+_{(a',b')},\psi^-_{(a'',b'')}] ] =\sum_{(c,d)}f^{(c,d)}_{(a,b),(a',b'),(a'',b'')}\psi^-_{(c,d)}$ where the structure constants $f^{(*,*) ...}_{(*,*) ...}$ encode the graded Lie triple system underlying the paraparticle algebra and are compatible with the $\mathbb{Z}_2 \times \mathbb{Z}_2$-graded Jacobi identity.

A bridge from the Majorana Tower and the $\mathbb{Z}_2 \times \mathbb{Z}_2$-graded Lie superalgebras to describe paraparticles in terms of qubits and qudits (or even with continuous variable, CV) can be set from the fundamental properties of the algebra itself. Each subspace $\mathfrak{g}_{(a,b)}$ can carry a distinct type of field, operator, or mode that are then labelled as $(0,0)$ or $(1,1)$ when result fully bosonic or parabosonic-like. Instead, $(0,1)$ and $(1,0)$ are used for parafermionic sectors.

The algebra is graded by the following Abelian group
\begin{equation}
G = \mathbb{Z}_2 \times \mathbb{Z}_2 = \{(0,0), (0,1), (1,0), (1,1)\}
\end{equation}
with multiplication table written in Tab. \ref{tab:z2z2_multiplication}.
\begin{table}[h!]
\centering
\caption{Multiplication Table for $ \mathbb{Z}_2 \times \mathbb{Z}_2 $}
\begin{tabular}{|c||c|c|c|c|}
\hline
$\oplus$ & $(0,0)$ & $(0,1)$ & $(1,0)$ & $(1,1)$ \\
\hline\hline
$(0,0)$ & $(0,0)$ & $(0,1)$ & $(1,0)$ & $(1,1)$ \\
\hline
$(0,1)$ & $(0,1)$ & $(0,0)$ & $(1,1)$ & $(1,0)$ \\
\hline
$(1,0)$ & $(1,0)$ & $(1,1)$ & $(0,0)$ & $(0,1)$ \\
\hline
$(1,1)$ & $(1,1)$ & $(1,0)$ & $(0,1)$ & $(0,0)$ \\
\hline
\end{tabular}
\label{tab:z2z2_multiplication}
\end{table}

Each element of the algebra $ X_{(a,b)} \in \mathfrak{g}_{(a,b)} $ is assigned a \textit{degree} $ (a,b) $. The full algebra is then a direct sum of the algebraic elements
\begin{equation}
\mathfrak{g} = \bigoplus_{(a,b) \in \mathbb{Z}_2 \times \mathbb{Z}_2} \mathfrak{g}_{(a,b)}
\end{equation}

Commutation relations are described by the graded bracket of two homogeneous elements $ X \in \mathfrak{g}_{(a,b)} $ and $ Y \in \mathfrak{g}_{(a',b')} $ defined as $[X, Y] = XY - (-1)^{(a,b)\cdot(a',b')} YX$, where the scalar product is defined as $(a,b) \cdot (a',b') = a a' + b b' \mod 2$.
This ensures bosonic behavior when the scalar product is even, implying that the bracket is symmetric. The fermionic behavior is obtained when the scalar product is odd then the bracket is antisymmetric.
To precisely illustrate the implications of the $Z_2 \times Z_2$ grading, we explicitly define the graded commutation relations. Given two graded elements $x \in \mathcal{A}_{(a,b)}$ and $y \in \mathcal{A}_{(a',b')}$, their product respects the grading rule $x y = (-1)^{aa' + bb'} y x$.
This relation ensures consistency with the algebraic structure, clearly dictating the statistics of paraparticle excitations within this formalism.

Majorana's infinite-component wavefunction anticipates the unification of different spin states, indirectly supporting that higher-order statistics (e.g., 2-bit parastatistics) yield physically distinguishable observables, thereby challenging the ``conventionality'' thesis of parastatistics. 
When paraparticles such as parabosons and parafermions are cast within a Majorana Tower as in Eq.\ref{Tower}, the structure introduces a profound connection between bosonic and fermionic sectors. This Majorana Tower enforces that both bosons and fermions fall into mass-degenerate multiplets governed by a common ratio, indicating a form of supersymmetric or pre-supersymmetric symmetry that Majorana explored even before the full formalism of supersymmetry was developed. Majorana low-energy limits independently from each spin value recover Schr\"odinger dynamics. 
In such a scheme, the ordinary distinction between paraboson and parafermion statistics becomes intertwined due to the unified mass-spin structure. 

Since particles of different spin share the same mass when properly scaled, they are likely to be treated as components of a single irreducible representation of an extended symmetry group that includes both spin and internal parastatistics degrees of freedom in a unified representation space.
To which symmetry constraints are then imposed. The parastatistics order $p$ must remain consistent across the Majorana Tower to preserve this multiplet structure. Hence, if a paraboson of spin $s=0$ and a parafermion of spin $s=1/2$ belong to the same Tower level, their respective parastatistics orders (and algebraic commutation structure) are constrained to coexist without contradiction, possibly implying a graded Lie algebra or a more exotic trilinear superalgebra unifying their statistics.

The trilinear commutation/anticommutation relations for parabosons and parafermions would be indexed not just by mode labels but also by Tower level characterized by a discrete quantum number $n$ corresponding to the Majorana mass $m$ (or equivalently spin $s$), introducing relations such as
$[a_i,[a_j^\dag,a_k]] = 2\delta_{ij} a_k^{n}$, where the Kronecker delta $\delta_{ij}$ enforces algebraic consistency by constraining the operators within the same mode sector. The superscript $n$ explicitly denotes the Tower level, differentiating clearly among states with distinct Majorana masses. Furthermore, these relations can be generalized to mixed commutation rules when interactions between parabosonic and parafermionic modes within the same Tower are considered, maintaining internal algebraic closure and coherence.

Physical Implications of the Majorana Tower in this scenario are quite evident. This type of embedding implies an extended symmetry principle beyond supersymmetry, where paraparticles form a deeper layer of structure connecting mass and spin in a unified algebraic way. If such a Tower were physical, it could provide a framework for preonic models \cite{preon}, unification theories, or non-standard quantum fields, where symmetry unification at the level of mass and spin dictates the allowable statistics and interactions.

To consistently classify the infinite-spin Majorana Tower within a $\mathbb{Z}_2 \times \mathbb{Z}_2$ graded framework, we associate each field component of spin $s$ to a grading label $(a,b) \in \mathbb{Z}_2 \times \mathbb{Z}_2$, where integer spin values ($s \in \mathbb{Z}$) are mapped to bosonic-like sectors and half-integer spins ($s \in \mathbb{Z} + 1/2$) to fermionic-like sectors. The graded Lie bracket for homogeneous elements $X \in g_{(a,b)}$ and $Y \in g_{(a',b')}$ ensures that components corresponding to integer spins commute symmetrically, while half-integer spin components anticommute, in agreement with the spin-statistics theorem. Moreover, the mass-spin relation intrinsic to the Majorana Tower, remains invariant under the grading assignment, since the classification acts on symmetry properties without altering the spin or mass eigenvalues. The graded structure further satisfies closure under commutation and obeys the graded Jacobi identity, thus confirming that the $\mathbb{Z}_2 \times \mathbb{Z}_2$ graded assignment provides a coherent, physically faithful, and algebraically consistent classification of the Majorana infinite-spin Tower.

\subsection{Embedding Majorana Tower Particles into the Paraparticle Framework}

To express Majorana Tower particles in terms of paraparticles, specifically those arranged in the mass-spin Tower and characterized by the mass-spin relation in Eq.~\ref{Tower}, one must reinterpret their infinite-component field structure within the algebraic language of parastatistics, particularly that of $\mathbb{Z}_2 \times \mathbb{Z}_2$-graded Lie superalgebras. 

Majorana's framework naturally lends itself to paraparticle embedding because the wavefunction $\Psi$ spans an infinite-dimensional Hilbert space accommodating arbitrary spin states, which is structurally similar to the Fock spaces of trilinear parafield algebras. The reinterpretation starts by associating the spin-dependent mass levels with graded sectors of a paraparticle algebra, such that each spin state $s$ corresponds to a representation in a $\mathbb{Z}_2 \times \mathbb{Z}_2$-graded module. 
An example is to use integer-spin states (parabosons) could reside in the $(0,0)$ and $(1,1)$ sectors, and half-integer (parafermions) in the $(1,0)$ and $(0,1)$ sectors. 

These sectors obey trilinear (anti)commutation relations and respect exclusion or aggregation rules consistent with the statistics of the particles. The infinite-component Majorana wavefunction then becomes a superposition over these sectors, and transitions between spin states correspond to algebraic operations involving braided tensor products and graded commutators.
Crucially, the graded Hopf algebra framework used in paraparticle quantization provides a formal coproduct structure that mimics Majorana's relativistic treatment of multiparticle states. More details are in the Appendix.

Furthermore, observables in the $(0,0)$ sector (bosonic) of the graded algebra reproduce the effective dynamics and degeneracies predicted by Majorana's theory, particularly in the slow-motion (non-relativistic) limit. Therefore, Majorana particles can be embedded as graded paraparticles, with the Tower structure arising from graded symmetry assignments and algebraic constraints that generalize both their spin and statistical behavior, integrating relativistic covariance and quantum statistics in a unified framework.

The graded algebra multiplication is succinctly summarized by the following rules, $P_{(a,b)} P_{(a',b')} = \delta_{(a,b),(a',b')} P_{(a,b)}$ and $X_{(a,b),(a',b')} X_{(a'',b''),(a''',b''')} = \delta_{(a',b'),(a'',b'')} X_{(a,b),(a''',b''')}$, where projectors $P_{(a,b)}$ and exchange matrices $X_{(a,b),(a',b')}$ explicitly reflect the algebraic structure, ensuring clear mathematical interpretation and internal algebraic consistency.

As further step, Majorana's original wave equation for arbitrary spin particles can be written also in the following form,
\begin{equation}
\left[ \gamma^\mu p_\mu - M \right]\Psi =0
\label{majoranaeq}
\end{equation}
with $c=\hbar=1$ and the mass $M$ depends on the particle spin from Eq.~\ref{Tower}. The wavefunction $\Psi$ spans an infinite-dimensional representation space that can be decomposed into graded sectors reflecting parastatistics, specifically, $(0,0)$ and $(1,1)$ sectors represent parabosons with integer spins and $(1,0)$ and $(0,1)$ are parafermions, with half-integer spins.

In this case, to define the paraparticle algebra we use trilinear commutation relations for operators $a_i$, $a_i^\dag$ or $f_i$, $f_i^\dag$ that indicate parabosons and parafermions, respectively and obtain the relationships in Eq. \ref{1} and Eq. \ref{2}.
We then rewrite Majorana's equation as an operator-valued equation acting on $\mathbb{Z}_2 \times \mathbb{Z}_2$-graded tensor products of state vectors $\Psi_{(a,b)} \in H_{(a,b)}$ with $ab\in \{(0,0), (0,1), (1,0), (1,1)\}$ and the infinite-component field becomes a graded vector, $\Psi = \sum \Psi_{(a,b)}$ and deg$(\Psi_{(a,b)})=(a,b)$
for which is satisfied a graded Dirac-like equation
\begin{equation}
\left[\Gamma_{(a,b)}^\mu \partial_\mu - M_{(a,b)}\right] \Psi_{(a,b)} = 0, 
\label{gdiracgamma}
\end{equation}
where $\Gamma_{(a,b)}^\mu$ are the $\mathbb{Z}_2 \times \mathbb{Z}_2$-graded analogues of gamma matrices, following the graded commutation relations
\begin{equation}
\Gamma_{(a,b)}^\mu \Gamma_{(c,d)}^\nu + (-1)^{(a,b)(c,d)}\Gamma_{(c,d)}^\nu \Gamma_{(a,b)}^\mu = 2 g^{\mu \nu} \delta_{(a,b)(c,d)},
\label{commuta}
\end{equation}
with a novel Majorana spin-mass term $M_{(a,b)} = M(s_{(a,b)} + 1/2)$ and $\delta_{(a,b)(c,d)}$ is the Kronecker delta on grading which is $1$ if $(a,b)=(c,d)$, else $0$.

Unlike circuit-based simulations which implement paraparticle behavior numerically, our framework introduces a $\mathbb{Z}_2 \times \mathbb{Z}_2$-graded Lie superalgebra structure that can be encoded in a single photon with OAM and SAM states. embedding Majorana's infinite-component wavefunctions directly into a relativistic, algebraic field theory with potential realization in structured photonic systems.

For the sake of completeness, we show that $\Gamma_{(a,b)}^\mu$ furnish a Lorentz-covariant representation \cite{l1,l2} in the following lemma.
\begin{lemma}[graded--Clifford closure]\label{lem:gradedClifford}
Let $\{\gamma^\mu\}_{\mu=0}^{3}$ be the Dirac matrices on a spinor space $\mathcal S$, satisfying   
$\{\gamma^\mu ,\gamma^\nu\} = 2\eta^{\mu \nu} \mathbf{1}_{\mathcal{S}}$ with $\eta= diag(+,-,-,-)$.
  Let $\{P_{(a,b)}\}$ $(a,b\in\{0,1\})$ be the four \emph{central
  idempotents} of the group algebra $\mathbb C[\Klein]$:
  \begin{equation}
  \begin{split}
  &  P_{(a,b)}P_{(c,d)}=\delta_{ac}\delta_{bd}\,P_{(a,b)},
    \\
   & \sum_{a,b} P_{(a,b)}=\mathbf 1,\qquad
    P_{(a,b)}^\dagger=P_{(a,b)}. 
    \end{split}
    \label{eq:idempotents}
  \end{equation}
On the Hilbert space $\mathcal H=\bigl(\bigoplus_{a,b}\mathbb C_{(a,b)}\bigr)\! \otimes \mathcal S$   define the block-diagonal operators 
  \begin{equation}
    \Gamma^\mu_{(a,b)}:=P_{(a,b)}\otimes\gamma^\mu,\qquad
    \Gamma^\mu:=\sum_{a,b}\Gamma^\mu_{(a,b)}. \label{eq:GammaDef}
  \end{equation}
Then set the function $\{\Gamma^\mu\}$  that furnishes a Hermitian, Lorentz-covariant representation of the Clifford algebra~$\Cl$:
  \begin{equation}
      \{\Gamma^\mu,\Gamma^\nu\}=2\eta^{\mu\nu}\mathbf 1_{\mathcal H}
    \label{eq:globalClifford}
  \end{equation}
  and, sector-wise,
  $
    \{\Gamma^\mu_{(a,b)},\Gamma^\nu_{(c,d)}\}
    =2\eta^{\mu\nu}\delta_{ac}\delta_{bd}\,P_{(a,b)}.
  $
\end{lemma}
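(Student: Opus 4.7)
The plan is to reduce everything to the tensor-product structure of the definition, exploiting the fact that the idempotents $P_{(a,b)}$ and the Dirac matrices $\gamma^\mu$ act on independent factors. Because of this, every anticommutator will split as a product of an idempotent multiplication in $\mathbb C[\Klein]$ and a Clifford anticommutator on $\mathcal S$, so the content of the lemma is essentially the orthogonality and completeness of the $P_{(a,b)}$ promoted through the tensor product.

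First I would establish the sector-wise identity. Using the tensor-product rule $(A\otimes B)(C\otimes D) = AC\otimes BD$ and the orthogonality $P_{(a,b)}P_{(c,d)}=\delta_{ac}\delta_{bd}P_{(a,b)}$ from \eqref{eq:idempotents}, each ordered product $\Gamma^\mu_{(a,b)}\Gamma^\nu_{(c,d)}$ collapses to $\delta_{ac}\delta_{bd}\,P_{(a,b)}\otimes\gamma^\mu\gamma^\nu$. Adding the two orderings and invoking the Dirac Clifford relation $\{\gamma^\mu,\gamma^\nu\}=2\eta^{\mu\nu}\mathbf 1_{\mathcal S}$ yields $\{\Gamma^\mu_{(a,b)},\Gamma^\nu_{(c,d)}\}=2\eta^{\mu\nu}\delta_{ac}\delta_{bd}\,P_{(a,b)}$, which is exactly the sector-wise claim under the identification $P_{(a,b)}\otimes\mathbf 1_{\mathcal S}\equiv P_{(a,b)}$. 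Summing this relation over both index pairs $(a,b)$ and $(c,d)$, the Kronecker deltas identify the two sums and the completeness relation $\sum_{a,b}P_{(a,b)}=\mathbf 1$ produces $\{\Gamma^\mu,\Gamma^\nu\}=2\eta^{\mu\nu}\mathbf 1_{\mathcal H}$, i.e.\ \eqref{eq:globalClifford}.

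It then remains to verify Hermiticity and Lorentz covariance. Hermiticity of each $\Gamma^\mu_{(a,b)}$ is immediate: $(P_{(a,b)}\otimes\gamma^\mu)^\dagger = P_{(a,b)}^\dagger\otimes(\gamma^\mu)^\dagger$, and the self-adjointness of the idempotents together with the standard Dirac conjugation property $(\gamma^\mu)^\dagger=\gamma^0\gamma^\mu\gamma^0$ (or, equivalently, Hermiticity of $\gamma^0$ and anti-Hermiticity of $\gamma^i$) transports verbatim to the $\Gamma^\mu_{(a,b)}$ on each sector. For Lorentz covariance, I would lift the usual spin representation $S(\Lambda)$ on $\mathcal S$ to $\widetilde S(\Lambda):=\mathbf 1_G\otimes S(\Lambda)$ on $\mathcal H$; since $\widetilde S(\Lambda)$ acts trivially on the grading factor it commutes with every $P_{(a,b)}\otimes\mathbf 1_{\mathcal S}$, so the intertwining identity $S^{-1}(\Lambda)\gamma^\mu S(\Lambda)=\Lambda^\mu{}_\nu\gamma^\nu$ pulls out of the tensor product and gives $\widetilde S^{-1}(\Lambda)\Gamma^\mu\widetilde S(\Lambda)=\Lambda^\mu{}_\nu\Gamma^\nu$ sector by sector.

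The main obstacle is not computational but conceptual: one must be careful that the $\Klein$-grading factor is genuinely inert under the Lorentz action, so that covariance is literally inherited from the Dirac sector and no mixing between grading sectors is induced by boosts or rotations. A secondary subtlety is to match conventions when stating Hermiticity, since the physical $\gamma^i$ are anti-Hermitian in the mostly-minus signature chosen here; once this is folded into the usual Dirac-adjoint framework the result follows from the tensor-product structure with no further work.
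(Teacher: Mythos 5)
Your proposal is correct and follows essentially the same route as the paper's proof: both reduce every anticommutator to the tensor-product factorization $P_{(a,b)}P_{(c,d)}\otimes\gamma^\mu\gamma^\nu$, use idempotent orthogonality for the sector-wise relation and completeness for the global one, and inherit Hermiticity factor-by-factor. The only cosmetic difference is that you verify Lorentz covariance by lifting the finite spin representation $S(\Lambda)$ to $\mathbf 1\otimes S(\Lambda)$, whereas the paper states the infinitesimal version via $\Sigma^{\mu\nu}=\tfrac{i}{4}[\Gamma^\mu,\Gamma^\nu]=\sum_{a,b}P_{(a,b)}\otimes\sigma^{\mu\nu}$; these are equivalent formulations of the same block-diagonal inertness of the grading factor.
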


\begin{proof}
\textbf{(i) Intra-sector anticommutator.}
  For fixed $(a,b)$ we have
  \begin{equation}
    \{\Gamma^\mu_{(a,b)},\Gamma^\nu_{(a,b)}\}
      =P_{(a,b)}\otimes\{\gamma^\mu,\gamma^\nu\}
      =2\eta^{\mu\nu}P_{(a,b)}.  \label{eq:intra}
\end{equation}
\\
\noindent\textbf{(ii) Inter-sector anticommutator.}
  If $(a,b)\neq(c,d)$ then $P_{(a,b)}P_{(c,d)}=0$ by
  Eq.~\ref{eq:idempotents}, hence
  \begin{equation}
    \{\Gamma^\mu_{(a,b)},\Gamma^\nu_{(c,d)}\}=0. \label{eq:inter}
  \end{equation}
\\
  \noindent\textbf{(iii) Global closure.}
  Summing Eq.~\ref{eq:intra} over all four sectors and using $\sum_{a,b}P_{(a,b)}=\mathbf 1$ yields Eq.~\ref{eq:globalClifford}.

  \noindent\textbf{(iv) Hermiticity.}
  Since each $P_{(a,b)}$ is Hermitian and
  $(\gamma^\mu)^\dagger=\gamma^0\gamma^\mu\gamma^0$,
  every $\Gamma^\mu_{(a,b)}$ inherits the standard Hermiticity
  property; consequently so does~$\Gamma^\mu$.

  \noindent\textbf{(v) Lorentz covariance.}
  Defining
  $
    \Sigma^{\mu\nu}:=\tfrac{i}{4}\qty[\Gamma^\mu,\Gamma^\nu]
  $
  one finds $\Sigma^{\mu\nu}
            =\sum_{a,b} P_{(a,b)}\otimes\sigma^{\mu\nu}$,
  reproducing the usual Lorentz generators block-wise.
  \qedhere
\end{proof}

\paragraph{The Hilbert space.}
Throughout we work on the graded spinor Hilbert space
\begin{equation}
  \mathcal H
  \;=\;
  \bigl(\bigoplus_{(a,b)\in\{0,1\}^{2}}\mathbb C_{(a,b)}\bigr)
  \,\otimes\,\mathbb C^{4},
  \label{eq:HilbertSpace}
\end{equation}
endowed with the inner product $\langle v,w\rangle_{\mathcal H} =\sum_{a,b}\bar v_{(a,b)}\,w_{(a,b)}\,      \psi_{(a,b)}^{\dagger}\phi_{(a,b)}$.
The first factor keeps track of the $\mathbb Z_{2}\!\times\!\mathbb Z_{2}$ charge The second one is the usual Dirac spinor space.

Let $\mathcal S \;=\; \mathbb{C}^{4}$ be the usual Dirac spinor space with the inner product $\langle \psi, \phi \rangle_{\mathcal S} \;=\; \psi^{\dagger}\phi$.
Then let
\begin{equation}
  \mathbb{C}\bigl[\mathbb{Z}_{2}\!\times\!\mathbb{Z}_{2}\bigr]
    \;=\;
    \bigoplus_{(a,b)\in\{0,1\}^{2}}\mathbb{C}_{(a,b)}
\end{equation}
be the group algebra of the Klein four-group.  Each one-dimensional summand
\begin{equation}
  \mathbb{C}_{(a,b)} \;=\; \operatorname{span}\{P_{(a,b)}\}
\end{equation}
is spanned by a central idempotent $P_{(a,b)}$ and carries the inner product
\begin{equation}
  \langle z_{1}, z_{2} \rangle_{(a,b)} \;=\; \bar z_{1}\,z_{2}.
\end{equation}

Then, the Hilbert space for the graded model is defined as the external tensor product
\begin{equation}
    \mathcal H
    \;=\;
    \Bigl(\,\bigoplus_{a,b}\mathbb{C}_{(a,b)}\Bigr)\!\otimes\!\mathcal S
    \;=\;
    \bigoplus_{a,b}\bigl(\mathbb{C}_{(a,b)}\!\otimes\!\mathbb{C}^{4}\bigr)
\end{equation}
with inner product
\begin{equation}
  \langle v, w \rangle_{\mathcal H}
  \;=\;
  \sum_{a,b}
    \langle v_{(a,b)}, w_{(a,b)} \rangle_{(a,b)}
    \,\langle \psi_{(a,b)}, \phi_{(a,b)} \rangle_{\mathcal S}.
\end{equation}
Here $v_{(a,b)}\!\otimes\!\psi_{(a,b)}$ is the component of $v$ in the
$(a,b)$ sector.

With this choice one ensures from grading information tools that the four one-dimensional subspaces encode the $\mathbb{Z}_{2}\!\times\!\mathbb{Z}_{2}$ charge of the state. 
$\mathcal S$ carries the usual $(\tfrac12,0)\oplus(0,\tfrac12)$ Lorentz representation through spin degrees of freedom. For block-diagonal operators, they respect the grading (e.g.\ $\Gamma^{\mu}_{(a,b)}$) act diagonally in the first factor and as ordinary Dirac matrices in the second. This is precisely what makes the graded-Clifford
proof go through.

\subsection{$\mathbb{Z}_2 \times \mathbb{Z}_2$-Graded Sector Decomposition}
Each spinor component $\Psi_s$ is assigned to a $\mathbb{Z}_2 \times \mathbb{Z}_2$-graded sector, with the full field given by $\Psi = \sum_{(a,b)} \Psi_{(a,b)}$, $(a,b) \in \{(0,0), (0,1), (1,0), (1,1)\}$ with the usual index-correspondence, integer spins (parabosons) $\rightarrow$ sectors $(0,0), (1,1)$ and half-integer spins (parafermions) $\rightarrow$ sectors $(0,1), (1,0)$.
Each component satisfies the Dirac-like equation of Eq.~\ref{gdiracgamma}.

Transition operators between sectors in this graded paraparticle framework are essential to describe how states or fields in one graded symmetry class  evolve or transform into another. These transitions must respect both the graded algebraic structure and the underlying relativistic invariance of the theory. To achieve this, one constructs the operators using two key mathematical tools such as braided coproducts and graded projectors and exchange matrices
ensuring relativistic covariance and $\mathbb{Z}_2 \times \mathbb{Z}_2$-graded symmetry and in an explicit operator form. 

Braided coproducts in Hopf algebra language define how single-particle operators extend to multi-particle systems. For paraparticles, this coproduct must be braided, meaning it includes a nontrivial exchange rule governed by a braiding matrix $R$. This structure guarantees that tensor products of fields or operators respect the parastatistical exchange rules and follow a deformed symmetry algebra.
Explicitly, for an operator $\hat{\psi}_{(a,b)}$, the braided coproduct $\Delta_B$ acts as follows,
\begin{equation}
\Delta_B \left( \hat{\psi}_{(a,b)} \right) = \hat{\psi}_{(a,b)} \otimes \mathbb{I} + \sum_{(c, d)} R^{(c,d)}_{(a,b)}\left( \mathbb{I} \otimes \hat{\psi}_{(c,d)} \right)
\label{braided}
\end{equation}
ensuring that exchange between particles follows the prescribed parastatistics.

The graded projectors and exchange matrices $X$ are operators that project a field or state into a specific graded subspace or exchange components between sectors. 

The exchange matrix acts on the internal indices (the $\mathbb{Z}_2 \times \mathbb{Z}_2$ indices) and encodes how sectors transform into one another. This can include sign factors or matrix rotations, depending on whether the sector obeys bosonic-like $((0,0), (1,1))$ or fermionic-like $((0,1), (1,0))$ symmetries.
We define the exchange operation on tensor products of graded fields via $ \Psi_{(a,b)} \otimes \Psi_{(a',b')} \rightarrow (-1)^{(a,b)\cdot (a',b')} \Psi_{(a',b')} \otimes \Psi_{(a,b)}$ which is implemented by the exchange matrix acting as a symmetry operator $X_{(a,b)\cdot (a',b')} =  (-1)^{(a,b)\cdot (a',b')}$ ensuring graded (anti)symmetry in multi-field operations.

An exchange operator can act as $E^{(c,d)}_{(a,b)} \Psi_{(c,d)}(x) = \sum_{e f} X_{(c,d)}^{(e,f)} \Psi_{(e,f)}(x)$
with $X_{(c,d)}^{(e,f)} \in GL(n)$ the components of a sector-intertwining transformation, ensuring the full field $\Psi$ remains covariant under both Lorentz transformations and graded algebra automorphisms. In this way we define explicit operator-valued maps that realize graded symmetry transformations in both the algebra and field content of the theory. This structure is crucial for preserving locality, microcausality, and internal consistency in theories that unify Majorana's spin-mass spectrum with paraparticle statistics.

To define coproducts for the Majorana Tower embedded in a $\mathbb{Z}_2 \times \mathbb{Z}_2$-graded paraparticle framework, we must structure the Tower's components -- each labeled by a spin-like index (e.g., $s=0, 1/2, 1, …$) as sectors in a graded Hopf algebra. Each component field $\Psi_s$ (or its operator-valued mode $\hat{\psi}_s$) then transforms under this algebra, and the coproduct determines how it acts in multi-particle Fock spaces or tensor product representations.
In a braided Hopf algebra with $\mathbb{Z}_2 \times \mathbb{Z}_2$ grading, the coproduct for an operator $\hat{\psi}_{(a,b)}$
is modified by the braiding between graded sectors. The coproduct map $\Delta$ defines how an operator acts on a two-particle space
\begin{equation}
\Delta (\hat{\psi}_{(a,b)}) = \hat{\psi}_{(a,b)} \otimes \mathbb{I} + \sum_{(c,d)}R^{(c,d)}_{(a,b)}(\mathbb{I} \otimes \hat{\psi}_{(c,d)})
\end{equation}
Here, $R^{(c,d)}_{(a,b)}$ are the coefficients from the $R$-matrix satisfying the Yang--Baxter equation, encoding exchange statistics and graded symmetry.

Let $\hat{\psi}_s$ denote the field operator associated with spin $s$, and assume the mass/spin relation in Eq.~\ref{Tower}. Each $s$ labels a component in the Majorana Tower and is mapped to a grading $(a,b) \in \mathbb{Z}_2 \times \mathbb{Z}_2$. 
Recalling Eq.~\ref{braided}, the graded coproduct becomes, 
\begin{equation}
\Delta \left( \hat{\psi}_{s} \right) = \hat{\psi}_{s} \otimes \mathbb{I} + \sum_{s'} R^{s'}_{s}\left( \mathbb{I} \otimes \hat{\psi}_{s'} \right),
\label{braided2}
\end{equation}
which ensures the correct behavior under particle exchange, consistent with parastatistics. The R-coefficients $R^{s'}_{s}$ can be constructed based on the parity of spin $s$, whether $s$ maps to a parafermionic or parabosonic sector and the internal symmetry imposed by the graded structure.

To construct the braiding coefficients $R^{s'}_s$ in a $\mathbb{Z}_2 \times \mathbb{Z}_2$-graded paraparticle framework, we associate to each spin $s$ a graded label $(a,b)_s$. The general form of the braiding coefficient is
\begin{equation}
R^{s'}_s = (-1)^{(a,b)_s \cdot (a',b')_{s'}} \delta_{ss'} + \theta_{ss'}
\end{equation}
where $\theta_{ss'}$ is a deformation parameter and the dot product imposes the appropriate graded symmetry.

As an example, we can build up a limited set of states using the grading map with the first three spin states of the Majorana tower $s=\{0, 1/2, 1, 3/2 \}$ or an equivalent set of photon OAM Majorana quasiparticles \cite{tambukarimi} with spin $(\sigma=\pm 1)$ and orbital angular momentum $\ell = \pm 1$ entangled together $s'= \{(\ell=-1)\otimes(\sigma=-1), (\ell=0)\otimes(\sigma=-1), (\ell=0)\otimes(\sigma=-1), \ell=-1\otimes(\sigma=+1)\}$ that can be encoded in quantum information quantum bit structures from the following assignments 
\begin{align}  
\label{gmap}
    &s = ~0 ~~\rightarrow (0,0) \, \text{(paraboson)} 
    \\
    &s = 1/2 \rightarrow (1,0) \, \text{(parafermion)} \nonumber
    \\
    &s = ~1 ~~ \rightarrow (0,1) \, \text{(parafermion)} \nonumber
    \\
    &s = 3/2 \rightarrow (1,1) \, \text{(paraboson)} . \nonumber
\end{align}
\begin{table}[ht]
\begin{center}
\begin{tabular}{|c|c|c|c|}
\hline
$s$ & $s'$ & $(a,b)_s \cdot (a',b')_{s'}$ & $R^{s'}_s$ \\
\hline
$0$ & $0$ & $0$ & $+1 + \theta_{(0,0)}$ \\
$0$ & $1/2$ & $0$ & $+1 + \theta_{(0,1/2)}$ \\
$1/2$ & $1/2$ & $1$ & $-1 + \theta_{(1/2,1/2)}$ \\
$1$ & $0$ & $0$ & $+1 + \theta_{(1,0)}$ \\
$1/2$ & $1$ & $1$ & $-1 + \theta_{(1/2,1)}$ \\
\hline
\end{tabular}
\caption{Example of braiding coefficients with the grading maps in Eq. \ref{gmap}.}
\label{t2}
\end{center}
\end{table}
Then, we assign the grading labels cyclically, identifying $s=2$ with $(0,0)$ to reflect the periodicity induced by the algebraic relations. This cyclic structure ensures that the grading is compatible across sectors when considering the whole tower modulo the $\mathbb{Z}_2 \times \mathbb{Z}_2$ structure.
In this way, the correct exchange statistics and grading rules are preserved across all components in the Majorana Tower.
The deformation parameter $\theta_{ss'}$ encodes deviations from conventional Bose or Fermi statistics in the graded $R$-matrix formalism. It introduces braiding phases or statistical twist terms that differentiate paraparticle exchange from standard symmetrization rules. 

It is important to emphasize that the coefficients $\theta_{ss'}$ represent algebraic couplings and selection-rule weights, not physical masses, kinetic energies, or direct dynamical interactions. Their role is to encode which sector transitions are symmetry-allowed and how the graded algebra constrains composite operations in the photonic realization.
A general expression is given by
\begin{equation}
\theta_{ss'} = \epsilon_{ss'} \left( q^{s + s'} - 1 \right),
\label{thetass}
\end{equation}
where $\epsilon_{ss'} = (-1)^{2s \cdot 2s'}$ accounts for spin parity, and $q \in \mathbb{C}$ is a deformation parameter such that $|q| = 1$. This ensures that when $q = 1$, the deformation vanishes, recovering standard (anti)commutation relations. For $q \neq 1$, $\theta_{ss'}$ captures the statistical phase shift or exclusion deformation, thus enabling a continuous interpolation between classical parastatistics and topologically twisted quantum field theories. When $\epsilon_{ss'} = 0$ one recovers the Green (parastatistics) limit.

\subsection{Graded Projectors and Exchange Matrices \boldmath$X_{(a,b)}$ for the Majorana Tower}

To implement transitions and symmetry-preserving operations between different $\mathbb{Z}_2 \times \mathbb{Z}_2$-graded sectors in the Majorana Tower, we define graded projectors and exchange matrices $X_{(a,b)}$. Each sector $(a,b)$ corresponds to a unique combination of grading parity, typically associated with spin-like labels $s$ via a map $s \mapsto (a,b)_s$.

We now define the graded projectors $\mathcal{P}_{(a,b)}$. Consider then $\mathcal{P}_{(a,b)}$ that isolate fields or operators in a specific graded sector $(a,b)$ then one writes $\mathcal{P}_{(a,b)} \Psi = \delta_{(a,b), (a',b')} \Psi_{(a',b')}$, where $\Psi = \sum_{a,b} \Psi_{(a,b)}$ is a decomposition of the full field into its graded components. These projectors also satisfy the following condition, $\mathcal{P}_{(a,b)} \mathcal{P}_{(a',b')} = \delta_{(a,b),(a',b')} \mathcal{P}_{(a,b)}$, with $\sum_{(a,b)} \mathcal{P}_{(a,b)} = \mathbb{I}$.

The exchange matrix $X_{(a,b)}$, instead, encodes the graded transformation properties between sectors $(a,b)$ and $(a',b')$. It acts on the graded tensor space via the following relationship
\begin{eqnarray}
&&X_{(a,b)(a',b')} \Psi_{(a,b)} \otimes \Psi_{(a',b')} = 
\\
&&(-1)^{(a,b) \cdot (a',b')} \Psi_{(a',b')} \otimes \Psi_{(a,b)}, \nonumber
\end{eqnarray}
with the graded dot product defined as $(a,b) \cdot (a',b') = a a' + b b' \mod 2$.
This ensures that exchanges between sectors with even total grading parity yield $X_{(a,b)} = +1$ (bosonic-like behavior), while exchanges between sectors with odd total grading parity yield $X_{(a,b)} = -1$ (fermionic-like behavior). In other words, the exchange matrix satisfies $X_{(ab,a'b')} = (-1)^{(a,b) \cdot (a',b')}$.

An example of such matrix is obtained for $(a,b) \in \{(0,0), (0,1), (1,0), (1,1)\}$ denote sectors assigned via the spin map $s = \{0,\, 1/2,\, 1,\, 3/2\}$ respectively. Then the exchange matrix $X_{(a,b)}$ can be written as follows,
\begin{equation}
X_{(a,b)} =
\begin{pmatrix}
+1 & +1 & +1 & +1 \\
+1 & -1 & -1 & +1 \\
+1 & -1 & -1 & +1 \\
+1 & +1 & +1 & +1 \\
\end{pmatrix}
\end{equation}
where rows and columns follow the order $\{(0,0), (0,1), (1,0), (1,1)\}$. This matrix governs the signs encountered when exchanging graded field components in the Majorana Tower and ensures consistency with graded symmetry and parastatistics.

In the original infinite-component formalism introduced by Majorana, the relativistic field $\Psi$ is expressed as an infinite-dimensional spinor, where each component $\Psi_s$ corresponds to a definite spin $s$ of the spin map. When this Tower is embedded into a $\mathbb{Z}_2 \times \mathbb{Z}_2$-graded paraparticle framework, each spin $s$ is mapped to a grading label $(a,b)_s$, allowing the use of graded projectors and exchange matrices to manage sector-specific dynamics. 

The full field is written as $\Psi = \sum_{s \in \mathbb{N}/2} \Psi_s = \sum_{(a,b)} \Psi_{(a,b)}$, where $\Psi_{(a,b)}$ denotes the component in the graded sector $(a,b)$. The projectors $\mathcal{P}_{(a,b)}$ isolate each sector as $\mathcal{P}_{(a,b)} \Psi = \sum_{s: (a,b)_s = (a,b)} \Psi_s$, satisfying the orthogonality and completeness relations $\mathcal{P}_{(a,b)} \mathcal{P}_{(c,d)} = \delta_{(a,b),(c,d)} \mathcal{P}_{(a,b)}$ and $\sum_{(a,b)} \mathcal{P}_{(a,b)} = \mathbb{I}$.

When two graded components $\Psi_s$ and $\Psi_{s'}$ interact or are exchanged, their transformation is governed by an exchange matrix $X_{(a,b)(a,'b')}$ acting as $\Psi_s \otimes \Psi_{s'} \mapsto X_{(a,b)(a,'b')} \Psi_{s'} \otimes \Psi_s$, where $X_{(a,b)(a,'b')} = (-1)^{(a,b) \cdot (a',b')}$ is the graded parity factor. This defines how tensor products and interactions behave under sector exchanges, ensuring that bosonic and fermionic statistics are preserved or appropriately generalized.

We define the field as a direct sum over graded sectors $\Psi=\sum_{(a,b)} \Psi_{(a,b)}$, where $P_{(a,b)} \Psi = \Psi_{(a,b)}$. Each component satisfies its own Majorana Dirac-like equation $(\Gamma_{(a,b)}^\mu \partial_\mu - M(s_{(a,b)}+1/2) \Psi_{(a,b)} =0$. 

In the wave equation, the infinite-component Majorana field equation $\left( \Gamma^\mu \partial_\mu - M \Sigma \right) \Psi = 0$ becomes a graded block structure. Specifically, it decomposes as 
\begin{equation}
\left[\sum_{(a,b)} \Gamma^\mu_{(a,b)} \partial_\mu P_{(a,b)} - M \sum_{(a,b)} M_{(a,b)} P_{(a,b)} \right] \Psi = 0, 
\label{majo}
\end{equation}
with each block obeying a sector-specific Dirac-like equation of the form $\left[ \Gamma^\mu_{(a,b)} \partial_\mu - M(s + \frac{1}{2}) \right] \Psi_{(a,b)} = 0$. The graded projectors isolate spin-dependent dynamics, while exchange matrices enforce braided symmetry between sectors. Together, they ensure the consistency of relativistic covariance, statistical symmetry, and graded quantum algebra in the Majorana Tower.

The Majorana's variational principle with graded Hopf algebra tools with action $S$ gives the full equation over the graded field space, which becomes
\begin{equation}
\delta S = \delta \int d^4x \overline{\Psi}_{(a,b)} \left(\Gamma_{(a,b)}^\mu \partial_\mu + M_{(a,b)}\right) \Psi_{(a,b)} = 0
\end{equation}
with observables restricted to the $(0,0)$ sector to ensure hermiticity and measurable quantities.
Thus, Majorana's equation in paraparticle form becomes a graded relativistic wave equation system, each component governed by a different trilinear algebra, yet unified by the mass-spin ratio and symmetry algebra. This formulation preserves covariance, accommodates arbitrary spin, and reveals a deeper algebraic structure underlying the spin-statistics connection.
This aligns with the second quantization formalism introduced by Wang and Hazzard, where paraparticle creation and annihilation operators $\hat\psi^\pm_{(i,a)}$ satisfy $R$-matrix-encoded trilinear commutation rules $\hat\psi^\pm_{(i,a)}\hat\psi^\pm_{(j,b)} \pm R^{cd}_{ab} \hat\psi^\pm_{(j,c)}\hat\psi^\pm_{(i,d)} = 0$, for which the symbol related only to  $\hat\psi$, “$\pm$” here refer to creation/annihilation operators in general (instead of $a^\dagger$ for the creation and $a$ annihilation usual Dirac operators).

The Hamiltonian governing such fields becomes bilinear in the operators
\begin{equation}
\hat{H} = \sum_{i,j,a}h_{ij} \hat\psi^+_{(i,a)}\hat\psi^-_{(j,a)}
\end{equation}
with $h_{ij} = h_{ji}^*$ to ensure Hermiticity. This Hamiltonian is diagonalizable using one or more canonical transformations of the paraparticle modes. This formulation yields distinct thermodynamic behavior from standard bosons or fermions due to the generalized exclusion rules, e.g., a single parafermion mode of order $p$ can host at most $p$ quanta, while a paraboson mode retains unlimited occupancy but with modified $g^{(k)}$ correlations or admit only one paraparticle regardless of label.

Such behaviors reproduce Majorana's predicted mass-degenerate spin Towers, or spin tower with inverse mass spacing, while the graded structure accounts for algebraic distinctions among sectors. Hence, the explicit reformulation of Majorana's equation in a paraparticle setting connects infinite-spin relativistic fields to generalized quantum statistics, embedding relativistic covariance, spin symmetry, and parastatistics into a unified operator algebra framework.

Local Green components and vanishing spacelike brackets are so defined. For each Green index $a=1,\dots ,p$ the component field $\phi_a(x)$ transforms in an irreducible $(j,0)\oplus(0,j)$ Lorentz
representation and obeys the \emph{standard} local (anti)commutation
rule,
\begin{equation}
[\phi_a(x),\phi_a(y)]_{\pm}=0 \quad\text{for }(x-y)^2<0,
\label{eq:local_component}
\end{equation}
where the upper sign is chosen for integer spin and the lower sign for half–integer spin.  Summing over the internal Green index $a$, with $+$ referred to parabosons and $-$ to parafermions,
\begin{equation}
[\Phi(x),\Phi(y)]_{\pm}
 =\sum_{a=1}^{p}[\phi_a(x),\phi_a(y)]_{\pm} = 0 , 
\label{eq:vanishing_commutator}
\end{equation}
with the condition for which $(x-y)^2<0$ is satisfied, so the full paraparticle field $\Phi(x)$ remains \emph{local} in the Wightman sense.  Equations Eq.~\ref{eq:local_component}–\ref{eq:vanishing_commutator} are the Greenberg–Messiah construction rewritten for our tower.

A well-defined CPT operator is guaranteed by the Dell’Antonio–Greenberg–Sudarshan theorem through locality and a positive spectrum holding for every component.  An explicit realisation is
\begin{equation}
\Theta\,\phi_a(t,\mathbf{x})\,\Theta^{-1}
   =\eta\,\gamma^{5}C\,\bar{\phi}_a(-t,\mathbf{x})^{T},
\label{eq:CPT}
\end{equation}
with a phase $\eta$, $C$ is the charge-conjugation matrix, and where $\Theta$ is anti–unitary:
$\Theta\,i\,\Theta^{-1}=-i$ and
$\Theta\,P_\mu\,\Theta^{-1}=-P_\mu$.  Extending Eq.~\ref{eq:CPT} linearly to $\Phi(x)$ preserves Eqs.~\ref{eq:local_component} -- \ref{eq:vanishing_commutator}; hence our Lorentz-covariant paraparticle tower satisfies micro-causality and CPT invariance for all Green orders $p$.
Obeying Lorentz covariance, fields transform in irreducible $SL(2,C)$ representations and Micro-causality--field operators (or suitable graded versions of them) vanish outside the light-cone.
Micro-causality--field operators (or suitable graded versions of them) vanish outside the light-cone, $[\phi_a(x),\phi_a(y)]_{\pm}=0$ for $(x-y)^2<0$. Positive-energy vacuum with a stable ground state is ensured by the Majorana conditions satisfying Pauli–L\"uders CPT under complete Lorentz covariance. 
For relativistic fields the only consistent assignments are (integer spin) $\leftrightarrow$ parabosonic, (half-integer spin) $\leftrightarrow$ parafermionic \cite{greenberg}.

Let $\Psi$ be an infinite-component spinor field $\Psi = (\Psi_0,\Psi_{1/2},\Psi_1,\Psi_{3/2}, ... )^T$, Each component $\Psi_s$ has spin $s$ and satisfies a Dirac-like equation. The general form of Majorana's wave equation is then $(\Gamma^\mu \partial_\mu - M \Sigma) \Psi = 0$, where the $\Gamma^\mu$ are infinite-dimensional generalizations of gamma matrices and $\Sigma$ is a block-diagonal operator whose entries encode the spin-dependent mass $m_s$.
Now one can assign each function $\Psi_s$ to a sector of a $\mathbb{Z}_2 \times \mathbb{Z}_2$-graded Lie superalgebra with paraboson statistics with integer spin $s$ in the graded sector $(0,0)$, $(1,1)$ and half-integer spin for parafermions in graded sector $(1,0)$ and $(0,1)$.
In this way is defined the function $\Psi= \sum_{(a,b)} \Psi_{(a,b)}$ for which is valid the usual relationship $(a,b) \in \{(0,0), (0,1), (1,0), (1,1)\}$ each satisfying a graded Dirac-like equation in Eq.~\ref{gdiracgamma} with the graded commutation relations of Eq.~\ref{commuta}.

In the $\mathbb{Z}_2 \times \mathbb{Z}_2$-graded framework, the braided coproduct structure introduced for the Majorana-paraparticle embedding naturally satisfies a generalized Yang-Baxter equation. For homogeneous operators $\psi_{(a,b)}$ and $\psi_{(c,d)}$ assigned to graded sectors, the braided coproduct is known to be defined as $\Delta(\psi_{(a,b)}) = \psi_{(a,b)} \otimes \mathbb{I} + \sum_{(c,d)} R^{(c,d)}_{(a,b)} (\mathbb{I} \otimes \psi_{(c,d)})$, where $R^{(c,d)}_{(a,b)}$ denotes the braiding matrix elements encoding graded exchange symmetries. The generalized Yang-Baxter equation for the braiding matrix $R$ reads $R_{12} R_{13} R_{23} = R_{23} R_{13} R_{12}$, where $R_{ij}$ acts on the $i$-th and $j$-th spaces of a threefold tensor product. In our construction, the $R$-matrix elements are explicitly functions of the grading scalar product, namely, 
\\
$R^{(c,d)}_{(a,b)} = (-1)^{(a,b)\cdot(c,d)} \delta^{(c,d)}_{(a,b)} + \theta_{(a,b),(c,d)}$,
with deformation parameter $\theta_{(a,b),(c,d)}$ depending on a statistical phase $q$ and spin-parity factors. Substituting this form into the Yang-Baxter identity, and using the bilinearity of the scalar product modulo 2, one finds that the signs and phase deformations consistently match on both sides of the identity, thereby verifying that the braided coproduct structure defined here satisfies the generalized Yang-Baxter equation required for algebraic consistency of paraparticle fields. 
This guarantees that multi-particle states and operator compositions within the graded Majorana Tower respect coherent braiding and associativity properties, essential for physical realizations in structured light and quantum computation platforms. More details are in the Appendix.

Recent experimental progress has significantly strengthened the prospects for realizing parafermionic excitations in condensed matter systems. In particular, fractional quantum Hall edge states proximitized by superconductors have been proposed as promising hosts for non-Abelian anyons beyond Majorana fermions~\cite{Clarke2013}, and signatures of charge fractionalization have been reported in engineered Tomonaga--Luttinger liquids~\cite{Hossain2021} and other 
recent experimental efforts~\cite{Gul2018, Fulop2020, Bartolomei2020} will be crucial in bridging the gap between theoretical predictions and their physical realization. The continued development of hybrid nanostructures, precision interferometry, and correlated edge state engineering promises to open new avenues for observing and manipulating paraparticle modes in the near future.

\section{Quantum Computing with OAM Photons and paraparticle formulation}

Let us now extend qubits and qudits properties with parastatistics. 
To this aim we propose as example a photonic realization of paraparticle algebra using structured light modes.
Having established the algebraic and theoretical framework, we now explore its concrete experimental realization. Structured photonic systems, particularly those employing orbital angular momentum (OAM) modes, provide an ideal testbed for implementing the proposed paraparticle algebra and graded symmetries.

To this aim we start with the set of solutions of the Dirac-Majorana equation with infinite spin (the Majorana Tower) -- and the correlated finite subsets of it \cite{maj1932} -- in terms of paraparticle graded algebras for an application to quantum computing discussing a couple of examples for a photonic realization, deriving selection rules and give theoretical examples of logical gates for quantum circuits.
While previous studies explored paraparticle statistics theoretically or through digital quantum simulation, a direct photonic realization of graded paraparticle algebras has remained largely unexplored. Our approach bridges this divide by proposing a scalable either discrete or continuous-variable platform based on structured light modes, offering a novel route to simulate exotic quantum statistics and implement paraparticle-inspired logical operations in photonic circuits with qudits.
The continuous-variable derives from an unbounded discrete set of integer values of the whole infinite Majorana Tower that represent excitations in a continuous-degree-of-freedom system, namely, the phase structure of the photonic wavefront and the continuous rotation symmetry of the transverse optical field. Because OAM lives in a continuous Hilbert space (the representation space of the $SO(2)$ or $SU(2)$ rotation group), OAM platforms potentially can belong to the continuous-variable (CV) class of quantum systems as opposed to discrete-variable (DV) systems like qubits or qudits.
Pure OAM states like those represented by Laguerre-Gaussian beams have a spectrum which is a numerably infinite set with $\ell \in \mathbb{Z}$, a discrete set modulo $2 \pi \ell$, not continuum. The coherent superposition of OAM beams can give instead a continuous spectrum \cite{berryfrac}, recalling the building of continuum from natural number with Dedekind cuts \cite{dedekind}.

While the full OAM Hilbert space constitutes an infinite-dimensional system often classified under continuous-variable (CV) quantum optics, truncating the Majorana Tower to a finite set of modes such as $\ell = 0, \pm1$ it results in an effectively discrete-variable (DV) system when coherent superpositon are fully exploited with continuous $SU(2)$ transformations coupled with complex phase terms. In this regime, the photonic platform behaves as a qudit, and the paraparticle algebra is realized over a finite-dimensional DV Hilbert space rather than exploiting the full continuous angular momentum structure. This finite realization of the Majorana-paraparticle algebra can be built in an optical quantum circuit with the three OAM modes $\ell = 0, \pm1$ combined either with two polarization states of light indicated by e.g., $\sigma = \pm 1$ or with the orthogonal physical waveguide modes characteristic of the optical circuit, here labeled $A$ and $B$ that can be the two orthogonal TE-like mode profiles of the $\ell=0$ manifold. In a rectangular waveguide SAM can be coupled with OAM modes, such as $\ell = 1$ with $\sigma = 1$ and viceversa and for $\ell = 0$ either the two polarization modes $\sigma= \pm 1$ or the two orthogonal wavewguide modes $A$ and $B$, depending on the geometry of the optical quantum circuit.

By mapping these four modes to the truncated spin sector of the Majorana Tower such as $s = 1/2,\, 1,\, 3/2,\, 2$, we obtain a minimal four-level system suitable for implementing a paraparticle-inspired ququart. While this setup does not capture the full infinite Majorana tower, it provides a physically realizable platform for simulating the essential graded algebraic structure and paraparticles using structured light.

Quasiparticles and paraparticles find deep analogies with phenomena hypothesized and then observed physical phenomena. An example are photons carrying orbital angular momentum (OAM) can manifest behaviors analogous to Majorana's mass-spin relationship \cite{tambukarimi}. Hypergeometric beams with OAM exhibit a group velocity $v_g$ that is apparently less than the speed of light $c$ in vacuum. This phenomenon arises due to the beam's geometry, leading to a projection effect where the measured group velocity along the propagation axis appears reduced. In  $v_g$ follows a relationship similar to Majorana's mass-spin formula, suggesting that the beam's OAM influences its effective propagation characteristics, the OAM modes $\ell$ label different quantum states with distinct effective velocities or mass-like quantities, forming a discrete spectrum analogous to spin $s$ in Majorana's mass-spin relation, with perspectives in quantum computing.
While the reduced axial group velocity $v_g<c$ in hypergeometric beams mimics an effective $m\!\propto\!(\ell+\tfrac12)^{-1}$ scaling, no real rest mass is generated; the effect is purely geometrical and disappears in a co-moving frame.

A similar thing occurs to OAM photons propagating through a resonant plasma \cite{tambu1,tambu2} in which they acquire a Proca mass that depends from the OAM value. From a paraparticle perspective, this maps neatly onto systems where OAM-dependent excitations respect a mass spectrum indexed by internal quantum numbers and follow non-standard exchange statistics. Thus, paraparticle fields could be constructed where creation/annihilation operators act on OAM modes and obey the generalized commutation relations of parastatistics, possibly with $R$-matrices that encode angular momentum coupling rules.

These results open a path to interpreting photonic states with OAM as physical realizations of Majorana pseudoparticles that now we integrate with paraparticle theory. In the paraparticle framework, quanta obey generalized statistics governed by trilinear $R$-matrix commutation relations and $\mathbb{Z}_2 \times \mathbb{Z}_2$-graded algebras.  If each OAM mode is treated as a distinct species of quasiparticle, these can be embedded into graded sectors (e.g., OAM-even in bosonic sectors, OAM-odd in fermionic or parafermionic sectors), mimicking the statistical structure of parabosons and parafermions. 
Paraparticle algebra and $R$-matrix formalism are defined in terms of creation/annihilation operators $\hat{\psi}_{(i,a)}^\pm$ that satisfy trilinear relations with $R$-matrices,
\begin{eqnarray}
&\hat{\psi}_{(i,a)}^+ \hat{\psi}_{(j,b)}^+ &= \sum_{(c,d)} R^{(c,d)}_{(a,b)} \hat{\psi}_{(j,c)}^+ \hat{\psi}_{(i,d)}^+, \\
&\hat{\psi}_{(i,a)}^- \hat{\psi}_{(j,b)}^- &= \sum_{(c,d)} R^{(d,c)}_{(b,a)} \hat{\psi}_{(j,c)}^- \hat{\psi}_{(i,d)}^-, \nonumber
\\
&\hat{\psi}_{(i,a)}^- \hat{\psi}_{(j,b)}^+ &= \delta_{ij} \delta_{ab} - \sum_{(c,d)} R^{(c,d)}_{(a,b)} \hat{\psi}_{(j,c)}^+ \hat{\psi}_{(i,d)}^-. \nonumber
\end{eqnarray}
and is also valid 
\begin{equation}
[ \hat{\psi}_{(a,b)}^- , [\hat{\psi}_{(a',b')}^+,\hat{\psi}_{(a'',b'')}^- ]]= \sum_{(c,d)} f^{(c,d)}_{(a,b),(a',b'),(a'',b'')} \hat{\psi}_{(c,d)}^-,
\end{equation} 
where the structure constants $f^{(c,d)}$ derive from the associativity properties of the trilinear R-matrix algebra, following Govorkov's construction \cite{Govorkov}.

The Hamiltonian written in terms of these operators is then defined as
\begin{equation}
\hat{H} = \sum_{i,j,a} h_{ij} \hat{\psi}_{(i,a)}^+ \hat{\psi}_{(j,a)}^-,
\end{equation}
which diagonalizes to
\begin{equation}
\hat{H} = \sum_k \epsilon_k \hat{n}_k, \quad \hat{n}_k = \hat{\psi}_{(k,a)}^+ \hat{\psi}_{(k,a)}^-,
\end{equation}
with the spectrum that obeys the Majorana spin/Mass condition $\epsilon_k = M(s_k + 1/2)$.
All this remains valid also for a finite subset of the Majorana Tower with a limited string of particle states.

\subsection{Paraparticles OAM Photonic quasiparticles and Majorana states in Quantum Circuits}

Having established the algebraic framework, we now turn to its application in photonic systems, where structured light modes provide a natural testbed for paraparticle-inspired quantum logic.
Recent research has explored the simulation of paraparticles in quantum information processing for physical phenomena like topological phases of matter. In particular, a study by Alderete et al.~\cite{alderete} demonstrated the simulation of para-particle oscillators on a trapped-ion quantum computer whose Hamiltonian has been identified as an $XY$ model. This can be translated in OAM photonic gates encoding them as paraparticle states with the $\mathbb{Z}_2 \times \mathbb{Z}_2$-graded algebras and related formulations beyond the usual quantum photonic languages and standard circuit implementations.

The $XY$ model is at all effects a quantum spin chain with nearest-neighbor interactions only in the $x$ and $y$ components of the Pauli spin matrices $\sigma^x$, $\sigma^y$, with a parameter $J$ that controls the strength of these interactions.

By mapping paraparticle states onto qubit registers, the authors successfully reproduced the dynamics of even-order parabosons, illustrating that paraparticle behavior can be digitally simulated on existing quantum hardware. 
Furthermore, Wang and Hazzard~\cite{wang} provided a theoretical framework where paraparticles emerge naturally in solvable quantum spin models, suggesting their utility in developing robust qudit-based quantum architectures. These results indicate that paraparticles, with their generalized exchange statistics, could enhance error correction and information density in quantum computing systems. The one-dimensional quantum $XY$ model is described by 
\begin{equation}
H_{\text{XY}} = -J \sum_n (\sigma_n^x \sigma_{n+1}^x + \sigma_n^y \sigma_{n+1}^y), 
\label{jordan}
\end{equation}
where $\sigma_n^x$, $\sigma_n^y$ are Pauli matrices at site $n$, and $J$ is the exchange coupling constant. The parameter $J$ determines the strength and nature of the interaction. If $J>0$, the system favors ferromagnetic alignment (neighboring spins tend to align). If $J<0$, the system favors antiferromagnetic alignment (neighboring spins tend to anti-align). Large values of $|J|$ are related to stronger coupling between spins and typically a larger energy gap between ground and excited states.
Eq.~\ref{jordan} maps via Jordan-Wigner transformation to a system of massless fermions, which in the continuum limit is governed by a conformal field theory (CFT) in $1+1$ dimensions. The reduced density matrix of a half-infinite chain yields an entanglement Hamiltonian $H_E = 2\pi \int_0^\infty x T_{(0,0)}(x) dx$, equivalent to the modular Hamiltonian of a Rindler wedge, where $T_{(0,0)}(x)$ is the energy density. 

A fundamental point is that  $XY$ models can put a bridge between paraparticles with their parastatistics including quantum computation to the basic fundamental mathematical concepts, indicating that the Majorana Tower either with Majorana particles or OAM photon quasiparticles can be written in terms of paraparticles and their statistics and graded algebras used for quantum computing.

\section{OAM Photonic Qudits and Parastatistics in Quantum Circuits}
With this mathematical and algebraic formulation one can implement a formalism based on paraparticle states for quantum computing in many different platforms. 
As already shown in the case of superconducting qubits, we now consider to expand this to different platforms and take as example room-temperature photonic quantum computing in configurations with OAM and polarization states that behave as sets of Majorana quasiparticles \cite{tambukarimi}.
Photonic quantum computing provides a promising platform for scalable and low-decoherence quantum room-temperature information processing, particularly when leveraging structured light modes such as orbital angular momentum. 
OAM modes, characterized by a helical phase structure and an unbounded Hilbert space, enable the realization of high-dimensional qudits, which naturally align with paraparticle frameworks that extend beyond conventional qubit systems. By encoding paraparticle states in superpositions of OAM modes, one can exploit the graded exchange symmetries intrinsic to paraparticles, leading to enriched entanglement structures and robust logical operations. This approach not only enhances information density but also offers novel ways to implement deterministic gates encoded into a single photon and implement parastatistics and other types of exotic quantum statistics in a photonic medium.

\subsection{Deterministic single-photon 2-qubit gates:}
This mathematical formalism permits the construction of deterministic two- (or more) qubit deterministic gates that can be used to implement a controlled-not (CNOT) gate, which is one of the milestones in quantum computing. Usually to have two qubits one has to double the qubit state in different ways (path and other methods) and this mechanism works in a stochastic way unless making two photons interacting e.g., in nonlinear media
The fundamental limit of linear optics, with linear transformations (beam-splitter, phases, polarizers) two photons do not interact: a unit SU(N) on modes cannot create entanglement in a deterministic way if the inputs contain at most one photon per mode. To break this constraint, indistinguishability and measure are exploited: projecting some modes and applying classic corrections introduces an effective interaction that can be moved offline.
An example is a gate via teleportation-based (KLM scheme). The Knill-Laflamme-Milburn protocol provides for offline preparation of a special entangled state (e.g., $|C_S\rangle$ of 4-8 photons). Then one performs a Bell-state measurement (BSM) between the data qubits and half of the resource applying optoelectronic feed-forward ($\pi$-phase or Pauli X) on the rest of the photons.
If the resource is ready, the CZ (or CNOT in dual-rail coding) always takes place on the logic qubits, the \textit{alea} is transferred in the preparation phase, which can be repeated until it is successful, thus making the gate deterministic at a logical level \cite{klm}.
Another way is the Quasi--deterministic gate with few ancilla by Ralph, Lund, Munro \cite{lrm}, which require a single ancilla photon plus photon-counting detectors (PNRD) and QNDs based on weak measurement. The single success is $P\approx0.25$, but by chaining $2$ attempts in parallel and using feed-forward one brings $P \rightarrow 1$ with linear overhead for each single gate. Alternatively, in Measurement-Based QC (cluster states) is needed to build a photonic state cluster offline using probabilistic fusion gates . Once the graph is available, each logical CZ is obtained with simple single qubit measurements and classical corrections and again operational determinism at the expense of multiple ancilla photons. Recent demonstrations integrate delay-loop and quantum-dot sources for linear clusters of 6 photons on chips \cite{mbc}. 

With linear optics alone, direct entangling is impossible, but it can be made deterministic at the logical level thanks to the methods already discussed whose performance today (2025) is limited mainly by losses and efficiency of the sources, not by switching times: with chips with a loss $<1$ dB and deterministic sources, the error threshold of the surface code is expected to be within demonstration in the next five years.

\subsection{Deterministic 1-photon 2-qubit deterministic gates in photonic platforms}

A proof-of-principle optical circuit implementing the grade-resolved spin–orbit conversion has been disclosed in \cite{tambu5,tambu6}, that exploits OAM and SAM states and, in certain circuits, also the property of SAM-OAM classical entanglement \cite{aiello} can at all effects build up not only the usual nonlinear-based or stochastic gates, but also deterministic quantum gates realized with a single photon.

Single-Photon Two-Qubit Gates (SPTQ) are realized when the two qubits are hosted by the same photon (e.g. $|H/V\rangle \otimes |\ell = ±1\rangle$).
No interaction between photons is needed: linear optics elements that couple the two degrees of freedom imply that a deterministic unitary gate (p = 1) is realized, an intra-photon CNOT (or CZ) that correlates polarization and OAM. Useful for preparing hyper-entangled states or cluster fusion.
Thus, unlike conventional linear-optics approaches -- where two-qubit gates are inherently probabilistic, demand bulky interferometers, and scale poorly because they need extra ancilla photons and active feed-forward -- a SAM$\otimes$OAM platform delivers deterministic, on-chip gates within a single photon, eliminating post-selection altogether. There is no intermediate measurement or probability of success: the gate is unit SU(4) performed with linear optics + controlled phase. Any leaks appear only as a channel error and not as gate failure.
 
The advantages are in having probability $P = 1$ independent of detection losses (no measure involved) and do not entangle distinct photons. For multi-qubit algorithms it is necessary to duplicate the degrees of freedom or use probabilistic fusion-gates or higher OAM modes and OAM can propagate in multimode waveguides.
 
Q-plates \cite{Marrucci2006} or other setups like \cite{tambu5,tambu6} modulate SAM and OAM that define the total angular momentum Noether invariant $J$. With OAM $\ell= 0, \pm 1$ and polarization $\sigma=\pm 1$ (related to SAM) one can and encode and manipulate a four qudit state, ququart, or two qubits on a single photon.
The co-propagation of spin and orbital modes in a sub-millimetre silicon trench guarantees phase stability that would otherwise require centimetre-scale balanced paths, while the same lithographic process furnishes mode-selective phase shifters and directional couplers with insertion losses below 1 dB. Encoding two logical qubits (or a ququart) in one particle doubles the computational density per photon and reduces circuit depth, allowing a surface-code--level operation count with at least an order-of-magnitude fewer resources than path- or time-bin-based schemes. 
In rectangular multimode waveguides, the coupling between spin angular momentum (SAM) and OAM enables the realization of versatile photonic quantum gates. Single-qubit gates such as rotations around the Bloch sphere can be implemented via mode-selective phase shifters that impart controlled phase differences between coupled SAM-OAM states, effectively realizing operations like $R_z(\theta)$ and $R_x(\theta)$. Beam splitter-like operations for OAM modes, achieved through specially designed directional couplers, allow for coherent superpositions necessary for Hadamard and generalized Fourier gates. Furthermore, the SAM-OAM coupling can be engineered to perform conditional logic by designing birefringent sections where the mode with $\ell=+1$ and left-circular polarization evolves differently than $\ell=-1$ and right-circular polarization, controlled-NOT (CNOT) and controlled-phase (CZ) gates can be implemented between OAM and SAM degrees of freedom, realizing deterministic single photon gates. 
In this architecture, the paraparticle-encoded qudits residing in different SAM-OAM sectors can be deterministically manipulated using integrated optical components, enabling robust and high-dimensional photonic quantum computation. In this case we encode ququarts in the $\mathbb{Z}_{2}\!\times\!\mathbb{Z}_{2}$-graded algebra. 

In rectangular multimode optical waveguides, the boundary conditions and symmetry-breaking effects can induce coupling between SAM and OAM modes that can result totally classically entangled and obey the properties of Majorana quanta.
Modes $\phi_{(\ell,\sigma)}(x,y)$ in a rectangular waveguide, where $\ell$ indexes the OAM mode and $\sigma = \pm 1$ denotes right/left circular polarization (SAM) or, equivalently, $A$ and $B$ waveguide modes for OAM $\ell = 0$  with their different polarization states if applicable. The field can be expressed in the following way,
\begin{equation}
\Psi(x, y, z, t) = \sum_{(\ell, \sigma)} \hat{\psi}_{(\ell,\sigma)}(z,t) \phi_{(\ell,\sigma)}(x,y) e^{i(\beta_{(\ell,\sigma)} z - \omega t)}
\end{equation}
where $\beta_{(\ell,\sigma)}$ is the propagation constant. The SAM-OAM coupling arises via birefringence or boundary effects and induces a hybrid mode basis:
\begin{equation}
\ket{\ell, \sigma} \rightarrow \sum_{(\ell', \sigma')} C^{(\ell', \sigma')}_{(\ell, \sigma)} \ket{\ell', \sigma'}
\end{equation}
Here we adopt for the sake of simplicity the convention $\sigma = +1$ for left-circular polarization (LCP) and $\sigma = - 1$ for right-circular polarization (RCP), consistent with the spin angular momentum sign.
We identify this hybridization with a grading index $(a,b) = (\ell \bmod 2, (1 - \sigma)/2)$.

For $\ell=0$ the two spin--orbit hybrid modes labelled by the helicity $\sigma=\pm1$ (right- and left-handed circular polarisations) possess grades $(a,b)=(0,0)$ and $(0,1)$, respectively: the orbital part is even
[$a=0$] while a flip of $\sigma$ toggles the $b$ component.  
A polarisation rotator that reverses the helicity - e.g.\ a half-wave plate (HWP) with its
fast axis at $45^\circ$ - therefore implements the map $(a,b)\colon(0,0)\;\longleftrightarrow\;(0,1)$, which is a pure \emph{$b$-grade} operation.  In the computational model that follows such grade-changing gates are permitted only (i) when they are explicitly declared as interfaces between the even- and odd-$b$
sectors, or (ii) when they are immediately followed by a compensating step that restores the original global $\mathbb{Z}_{2}\!\times\!\mathbb{Z}_{2}$ charge.

This maps structured light modes into $\mathbb{Z}_2 \times \mathbb{Z}_2$-graded sectors, enabling their interpretation as paraparticles. Operators $\hat{\psi}_{(\ell,\sigma)}^\pm$ obey paraparticle algebra, with $R$-matrix determined by the mode coupling matrix $C^{(\ell', \sigma')}_{(\ell, \sigma)}$. Here $\sigma$ is a binary mode index; it coincides with helicity only when the coupling vanishes.

\section{Photonic realization of the \texorpdfstring{$\boldsymbol{\mathbb Z_2\times\mathbb Z_2}$}{Z2×Z2} Green algebra}
\label{sec:photonicGreen}

\subsection{Guided--mode basis}
We extend the previous paraparticle results to photonic platforms. Consider as example a rectangular $\mathrm{SiN}$ waveguide engineered to support the four
single-photon modes
$
\ket{\ell,\sigma}\in
\bigl\{\ket{+1,L},\ket{-1,R},\ket{0,A},\ket{0,B}\bigr\},
$
where
(i) $\ell\in\{-1,0,+1\}$ is the orbital-angular-momentum index and
(ii) $\sigma\in\{L,R,A,B\}$ denotes either helicity ($L,R$) or, for the $\ell=0$
subspace, two orthogonal transverse profiles ($A,B$).
Introduce canonical bosonic operators
\begin{equation}
  \bigl[a_{\ell,\sigma},a^\dagger_{\ell',\sigma'}\bigr]
  =\delta_{\ell\ell'}\,
   \delta_{\sigma\sigma'},
  \qquad
  \bigl[a_{\ell,\sigma},a_{\ell',\sigma'}\bigr]=0.
  \label{eq:bosonCCR}
\end{equation}

\subsection{Grade map and projectors}

Define a $\mathbb Z_2\times\mathbb Z_2$ grade map
\begin{eqnarray}
&g:\;(\ell,\sigma)\longmapsto
  \bigl(a,b\bigr),\qquad
  a=\ell\bmod 2, \nonumber
  \\
&b=
  \begin{cases}
    0,&\sigma=L\ \text{or}\ A,\\[2pt]
    1,&\sigma=R\ \text{or}\ B,
  \end{cases}
  \label{eq:gradeMap}
\end{eqnarray}
so that the four physical modes populate the four group elements
$(a,b)\in\{(1,0),(1,1),(0,0),(0,1)\}$.

For each grade introduce the projector
$P_{(a,b)} = \sum_{\ell,\sigma\,:\,g(\ell,\sigma)=(a,b)} \ket{\ell,\sigma}\!\bra{\ell,\sigma}.$
The graded creation/annihilation operators are then
\begin{equation}
  \psi^{\dagger}_{(a,b)}:=
  \sum_{\ell,\sigma}P_{(a,b)}^{\ell,\sigma}\,a^\dagger_{\ell,\sigma},
  \qquad
  \psi_{(a,b)}:=
  \bigl(\psi^{\dagger}_{(a,b)}\bigr)^{\!\dagger}.
  \label{eq:gradedOps}
\end{equation}

\subsection{Graded commutators}

For two homogeneous operators $X$ and $Y$ of grades $g(X)=(a,b)$, $g(Y)=(a',b')$ define
\begin{eqnarray}
& [X,Y]_{\eta}
  :=XY-\eta^{\,g(X)\cdot g(Y)}\,YX, \quad \eta=-1, \nonumber
  \\
&g\!\cdot\!g'=aa'+bb'\pmod 2.
  \label{eq:gradedBracket}
\end{eqnarray}
Using the canonical relations~\ref{eq:bosonCCR} one finds the bilinear identities
\begin{equation}
\label{eq:gradedCCR}
[\psi_{g},\psi_{g'}]_{\eta}=0, \quad [\psi_{g},\psi^\dagger_{g'}]_{\eta}=\delta_{g,g'}\,\mathbb I,
\end{equation}
which reduce to bosonic (symmetric) or fermionic (antisymmetric) commutators
according as $g\!\cdot\!g'=0$ or $1$.

\subsection{Trilinear Green relations}

Green’s order-$p$ paraparticle algebra is characterized by the nested
commutator
\begin{equation}
  [\psi_{g},[\psi^\dagger_{g'},\psi_{g''}]_{\eta}]_{\eta}
  =\frac{2}{p}
     \bigl(
       \delta_{g,g'}\,\psi_{g''}
       -\delta_{g,g''}\,\psi_{g'}
     \bigr).
  \label{eq:Green}
\end{equation}
In the present waveguide each grade hosts a single physical mode,
so $p=1$ and the prefactor is exactly~$2$.%
\footnote{A waveguide supporting $p>1$ quasi-degenerate copies of each
grade—e.g.\ through a $p$-fold transverse-mode manifold—would obey the same
relation with the factor $2/p$.}

Field expansion is obtained by defining mode functions $\varphi_{(a,b)}(x)$ for each sector and the paraparticle field reads
\begin{equation}
  \Psi(x)=
  \sum_{(a,b)\in\mathbb Z_2^2}
  \psi_{(a,b)}\,\varphi_{(a,b)}(x),
  \label{eq:fieldExp}
\end{equation}
and satisfies the graded Majorana equation introduced in the next section.

Equations~\ref{eq:gradedOps}–\ref{eq:Green} demonstrate that structured-light
modes in a four-channel SAM–OAM waveguide realize a $\mathbb Z_2\times\mathbb Z_2$ Green paraparticle algebra of order $p=1$.
This closes the gap between the algebraic framework and the physical implementation.

Consider the Majorana Tower \cite{maj1932} with mass/spin relationship of Eq. \ref{Tower}. From that we write the unified operator field equation that describes the paraparticle-graded Majorana field equation mainly from Eq. \ref{majoranaeq}, \ref{gdiracgamma} and Eq. \ref{majo} that, for a generic operator becomes
\begin{equation}
\left[ \Gamma^\mu_{(a,b)} \partial_\mu + \frac{M}{\,s_{(a,b)}+\frac12\,} \right] \hat{\psi}_{(a,b)}(x) = 0.
\label{35}
\end{equation}
with $\Gamma^\mu_{(a,b)} \,\equiv\, P_{(a,b)}\otimes\gamma^\mu$, $P_{(a,b)}:=\lvert a,b\rangle\!\langle a,b\rvert$, and $\gamma^0=\sigma_z,$, $\gamma^1=i\sigma_y$, so that each graded sector $(a,b)$ carries its own projector $P_{(a,b)}$ while the second tensor factor is the usual $1\!+\!1$-dimensional Dirac matrix $(\gamma^0,\gamma^1)$.
With this definition $\Gamma^\mu_{(a,b)}$ acts like an ordinary gamma matrix on the spinor indices and as the identity inside the chosen $\mathbb Z_{2}\!\times\!\mathbb Z_{2}$ subspace, and  Eq.~\ref{35} remains well defined for every grade.

Paraparticle formulation in rectangular multimode waveguides with SAM-OAM coupling generate a qudit-based quantum calculations on a single photon encoded in the paraparticle $\mathbb{Z}_2 \times \mathbb{Z}_2$-graded algebras, in a novel way to encode qubits and qudits. An illustrative example with ququarts is reported in the appendix.

This results in classically entangled modes, where specific OAM values are restricted to certain polarization states. For instance, a typical configuration may allow the four states $\ell = +1$ with left circular polarization (LCP), $\ell = -1$ with right circular polarization (RCP) and $\ell = 0$ with both polarizations that correspond to two different waveguide modes $(A)$ and $(B)$ that can also be $\sigma=+1$ or $\sigma=-1$.
We then assign to each mode a $\mathbb{Z}_2 \times \mathbb{Z}_2$ grading based on OAM and SAM $a = \ell \bmod 2$ and $b = (1 - \sigma)/2$, 
where $\sigma = +1$ for LCP and $-1$ for RCP. The graded sectors are thus defined as reported in Tab. \ref{tab3},
\begin{table}[ht]
\begin{center}
\begin{tabular}{|c|c|c|c|}
\hline
Mode & OAM mode $\ell$ & $\sigma$ (SAM)& Grade $(a,b)$ \\
\hline
$\phi_1$ & $\ell = +1$ & $+1$ (LCP) & $(1,0)$ \\
$\phi_2$ & $\ell = -1$ & $-1$ (RCP) & $(1,1)$ \\
$\phi_3^+$ & $\ell = 0$ & $+1$ (A) & $(0,0)$ \\
$\phi_3^-$ & $\ell = 0$ & $-1$ (B) & $(0,1)$ \\
\hline
\end{tabular}
\end{center}
\caption{Definition of the grade sectors for OAM SAM Ququart of Eq. \ref{ququart}. \textbf{Grade-changing gates:} Any $\sigma$-flip on the $\ell=0$ pair $\{|0,A\rangle,(0,0);\ |0,B\rangle,(0,1)\}$ crosses the $b$-grading.  Throughout the ququart example we assume the compensated sequence above unless a gate is explicitly labelled “$\sigma$-flip (grade-changing). $\sigma=+1$ for LCP and $\sigma -1$ for RCP}
\label{tab3}
\end{table}
Alternatively to polarization, in this photonic platform, one can include two distinct quasi-degenerate $\ell = 0$ spatial waveguide modes, labeled for convenience $A$ and $B$, alongside and independently of the helicity index~$\sigma$ and the OAM modes $\ell = +1$ and $\ell = -1$. 
This construction yields a ququart encoding composed of $\{ (\ell = +1, A), (\ell = -1, A), (\ell = 0, A), (\ell = 0, B) \}$, which we map to successive spin sectors in the Majorana tower with corresponding graded labels $(a,b)$. The matrix $\theta_{ss'}$ then specifies the allowed couplings and selection rules between these OAM-waveguide pairs: nonvanishing entries enable controlled interactions or quantum gates, while vanishing entries enforce symmetry-protected decoupling. In particular, the presence of two $\ell = 0$ modes enables the simulation of higher-spin or ancillary sectors, offering additional flexibility for implementing paraparticle-inspired logic operations in photonic circuits.

The qudit encoding is obtained using graded paraparticle sectors: each mode $\phi_{(a,b)}$ corresponds to a paraparticle operator $\hat{\psi}_{(a,b)}$, satisfying $\mathbb{Z}_2 \times \mathbb{Z}_2$ graded algebra. A four-level qudit is formed by assigning logical basis states to distinct sectors
\begin{align}
    \label{ququart}
    &\ket{0} = \phi_{(0,0)} = (\ell=0, A)
    \\
    &\ket{1} = \phi_{(0,1)} = (\ell=0, B) \nonumber
    \\
    &\ket{+} = \phi_{(1,0)} = (\ell=+1, \sigma=+1) \nonumber
    \\
    &\ket{-} = \phi_{(1,1)} = (\ell=-1, \sigma=-1) \nonumber
\end{align}
Logical operations are defined using graded exchange rules between the different $\hat{\psi}_{(a,b)}$ 
\begin{eqnarray}
&&\hat{\psi}_{(a,b)}^+ \hat{\psi}_{(a',b')}^+ = 
\\
&& = (-1)^{(a,b)\cdot(a',b')} \hat{\psi}_{(a',b')}^+ \hat{\psi}_{(a,b)}^+ + \theta_{(a,b),(a',b')} \dots \nonumber
\end{eqnarray}

The Hamiltonian governing the system is built from graded number operators
\begin{equation}
\hat{H} = \sum_{(a,b)} \epsilon_{(a,b)} \hat{n}_{(a,b)}, \quad \hat{n}_{(a,b)} = \hat{\psi}_{(a,b)}^+ \hat{\psi}_{(a,b)}^-
\end{equation}
where $\epsilon_{(a,b)}$ is the mode energy or effective mass determined by the waveguide geometry and the full field is assembled from mode-weighted operators such as
\begin{equation}
\Psi(x) = \sum_{(a,b)} \hat{\psi}_{(a,b)}(x) \phi_{(a,b)}(x).
\end{equation}
This construction preserves both the relativistic structure and the statistical symmetries of the system.

\subsection{Quantum Computation with SAM-OAM Paraparticle Modes}
Paraparticles, which obey parastatistics beyond the standard Bose-Einstein and Fermi-Dirac statistics, offer a fundamentally different approach to encoding and manipulating quantum information. Here's a detailed breakdown of how they function and why they are potentially useful in quantum computing.

Their algebraic structure enables new qubits as they are characterized by nontrivial exchange rules governed by trilinear commutation relations or braided tensor symmetries. Each paraparticle is defined within a graded sector of the type $(a,b) \in \mathbb{Z}_2 \times \mathbb{Z}_2$, with creation/annihilation operators obeying the $R$ matrix 
\begin{equation}
\hat{\psi}^+_{(j,a)}\hat{\psi}^+_{(j,b)} = \sum_{(c,d)} R^{(c,d)}_{(a,b)}\hat{\psi}^+_{(j,c)}\hat{\psi}^+_{(j,d)}
\end{equation}
This structure allows for qudits that go beyond simple $0-1$ logic, enabling richer computational states and graded exclusion principles (e.g., only a fixed number of identical paraparticles allowed in a state like paraqubits). Here for the sake of simplicity we will focus more on qubit states.

Unlike fermions or bosons, paraparticles naturally avoid certain states due to their algebraic constraints. For instance parabosons of order $p$ allow up to $p$ identical particles in the same state. Parafermions can only occupy graded antisymmetric states.
This intrinsic structure acts like built-in error detection, as illegal states are algebraically forbidden. In principle, this allows quantum computers based on paraparticles to detect and correct errors without auxiliary encoding overhead, with a finite number of quantum states or through statistical methods implementing time series of data characterized by structured noise as in Nelson's quantum mechanics \cite{nelson1} adapted to the dynamics described by $R$ discussed below.

Paraparticles can be implemented in 3D systems using algebraic braiding (i.e., exchange rules defined by R-matrices rather than spatial motion). This makes them attractive for scalable quantum computing platforms as can be mapped to multi-qubit registers.
Operators are translated into combinations of Pauli matrices or fermionic modes.
Paraparticle oscillators and dynamics are then simulated using quantum hardware such as trapped ions, superconducting qubits and photonics, as a parafermion oscillator can be encoded using a pair of qubits, with specially designed gates that emulate the graded commutation relations.

\subsection{Application to Logical Gates}

Logical gate operations in SAM-OAM paraparticle systems can be implemented using
$\hat{n}_{(0,1)}$, conditional control qubit in CNOT gates,
$\hat{E}_{(0,0),(0,1)}$ as logical $X$ gate across SAM/waveguide mode states at $\ell = 0$, finally, $\hat{E}_{(1,1),(1,0)}$ to swap OAM states with preserved polarization.
These operator structures form the algebraic foundation for deterministic, symmetry-enforced quantum logic in structured photonic systems.

This architecture supports deterministic operation, since all logic is implemented via engineered mode coupling, grading symmetries, and classical interference and is based on qudits. It is scalable using integrated photonic circuits, and grading prevents logical errors via selection rules and parastatistical exclusion.
The SAM-OAM graded path-entangled system offers a robust and compact framework for realizing three-qubit Toffoli logic gates. The unique combination of angular momentum encoding, spatial mode entanglement, and graded symmetry allows for nonclassical quantum logic to be implemented deterministically in a photonic platform.

A quantum computer can be designed using the SAM-OAM entangled modes as qudits or logical qubits following certain simple rules. Logical states are encoded in $(a,b)$-graded sectors and in this configuration they correspond to the SAM-OAM modes of the Majorana quasiparticle with $\ell=\pm1$ and $\ell=0$.
Quantum gates are implemented via exchange operations, parity-sensitive phase shifters, and optical interference. Entanglement is achieved using beam splitters and mode-selective couplers that preserve grading and the readout is performed by spatial and polarization-resolved detectors.

The logical gate operations include:
\begin{align*}
    X &: \hat{\psi}_{(0,0)} \leftrightarrow \hat{\psi}_{(0,1)} \\
    Z &: \text{Phase shift } \exp(i\pi) \text{ on } (0,1) \\
    CZ &: \exp(i\pi \hat{n}_{(0,1)} \hat{n}_{(1,1)})
\end{align*}
This graded quantum architecture enables interesting robust and deterministic quantum logic gates with topological protection provided by parastatistical exclusion and symmetry constraints.

Deterministic CNOT Gate in SAM-OAM Graded Paraparticle Systems is the first step for an example of paraparticle formalism used in quantum circuits. Now we present a deterministic implementation of the controlled-NOT (CNOT) gate using the physical quantities spin-orbit (SAM-OAM) entangled photonic modes in rectangular multimode waveguides. 
This architecture exploits the $\mathbb{Z}_2 \times \mathbb{Z}_2$ graded symmetry of paraparticles, where each mode is uniquely identified by its orbital angular momentum (OAM), spin angular momentum (SAM), and possibly its spatial path.

Mode encoding and graded symmetry are so obtained. 
Each photonic mode is classified by its $(\ell, \sigma)$ pairing, with the grading defined by $a = \ell \mod 2$ and $b = (1 - \sigma)/2$, where $\sigma = +1$ for left-circular and $-1$ for right-circular polarization.
Logical qubits are then encoded as in Tab. \ref{tab4},
\begin{table}[ht]
\begin{center}
\begin{tabular}{|c|c|c|c|}
\hline
Qubit & Mode & Grading & Logical \\
 & & $\mathbb{Z}_2 \times \mathbb{Z}_2$ & States \\
\hline
Control & $\ell = 0,$ & $(0,b)$ & $\ket{0}_C = (0,0)$, \\
            & $\text{wg /} \sigma = \pm1$ & $(0,b)$ & $\ket{1}_C = (0,1)$ \\
Target & $\ell = \pm1,$ & $(1,a)$ & $\ket{0}_T = (1,1)$, \\
           & $\sigma = \pm1$ & $(1,b)$ & $\ket{1}_T = (1,0)$ \\
\hline
\end{tabular}
\end{center}
\caption{Mode encoding and graded symmetry. The label ``wg'' stands for general waveguide mode.}
\label{tab4}
\end{table}

The CNOT gate applies an $X$ operation (bit flip) on the target qubit conditional on the control being in the $\ket{1}_C$ state. This is implemented by using the graded occupation number $\hat{n}_{(0,1)}$ to detect control parity, then executing a conditional swap of the target between $(1,0)$ and $(1,1)$. One then has to enforcing exchange symmetry using the matrix $X_{ab,a'b'} = (-1)^{(a,b) \cdot (a',b')}$.

Deterministic operation and photonic design are achieved using mode-resolved integrated waveguides for routing and encoding, control of phase shifting and interferometric designs. In this way is ensured stable logic under grading.

The quantum circuit representation with the paraparticle $(a,b)$-graded sectors is shown in Fig. \ref{cnot} with a schematic of a deterministic CNOT gate using SAM-OAM graded photonic modes. 

\begin{figure}[h]
    \centering    \includegraphics[width=0.25\textwidth]{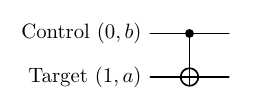}
\caption{Schematic of a deterministic CNOT gate}
\label{cnot}
\end{figure}

with control qubit $(0,b) \rightarrow (\ell=0,\ \sigma=\pm1 \lor A, B)$ and target qubit lies in sectors $(1,a) \rightarrow (\ell=\pm1,\ \sigma=\pm1)$

The control qubit is encoded in a $\ell=0$ mode with circular polarization. The target qubit, with $\ell=\pm1$, is routed through an interferometric path that swaps its logical state only when the control is in the fermionic sector $(0,1)$. The operation is deterministic via a $\pi$-phase shifter and interferometric routing.
The SAM-OAM graded framework provides deterministic logic gates by leveraging sector-specific occupation rules and topologically protected mode interactions. The CNOT gate demonstrates how single-photon logic can be both scalable and physically realizable using structured light and graded symmetry.

A further step is to realize with paraparticle formulation the Toffoli Gate. In this case we use the quantum computational degrees of  freedom of SAM-OAM paraparticles and path entanglement. Thus, we design a Toffoli (CCNOT) gate within a photonic platform based on structured modes in rectangular multimode waveguides, where spin angular momentum (SAM) and orbital angular momentum (OAM) are entangled and graded by a $\mathbb{Z}_2 \times \mathbb{Z}_2$ symmetry. The implementation leverages paraparticle grading and path entanglement to deterministically execute a three-qubit controlled logic operation.

The logical qubits are encoded in SAM-OAM-path modes as reported in the following table, Tab. \ref{tab5}. 
\begin{table}[ht]
\begin{center}
\begin{tabular}{|c|c|c|c|}
\hline
Qubit & Mode & Grading & Logical \\
 &  & $\mathbb{Z}_2 \times \mathbb{Z}_2$ & States \\
\hline
Qubit 1 & $\ell = 0,$ Path $A_p$ & $(0,b)$ & $\ket{0}_C = (0,0)$ \\
(Control) & $\text{wg /} \sigma = \pm1$ & $(0,b)$ & $\ket{1}_C = (0,1)$ \\
Qubit 2 & $\ell = \pm1,$ Path $A_p$ & $(1,a)$ & Two graded  \\
(Control) & $\sigma = \pm1$ &  & sectors \\
Qubit 3 & SAM-OAM & varies & Flip if both \\
(Target) & in Path $A_p$ / $B_p$ &  & controls = 1 \\
\hline
\end{tabular}
\end{center}
\caption{Logical qubits encoded in SAM-OAM-path modes; ``wg'' stands for general waveguide mode.}
\label{tab5}
\end{table}
Each mode combination $(\ell, \sigma)$ is mapped to a sector in the $\mathbb{Z}_2 \times \mathbb{Z}_2$ grading, and logical operations depend on exchange and interference rules between these sectors.

The Toffoli gate executes a controlled-controlled-NOT, flipping the target qubit if and only if both control qubits are in the logical 1 state:
\begin{align*}
\ket{c_1 = 0, c_2 = * , t} &\rightarrow \ket{c_1 = 0, c_2 = *, t} \\
\ket{c_1 = 1, c_2 = 1, t} &\rightarrow \ket{c_1 = 1, c_2 = 1, t \oplus 1}
\end{align*}
where the symbol ``*'' means ``any value'' that is, either $0$ or $1$.
Path-Entangled Realization is included to realize the Toffoli CCNOT gate.
The target qubit is encoded in spatial path modes ($A_p$/$B_p$). Conditional operations on the control qubits (graded sectors) induce a $\pi$ phase shift or exchange, flipping the path of the target photon. This can be realized by a combination of OAM-polarization mode converters and phase shifters. To implement the other part of the circuit one inserts interferometric elements (e.g., Mach-Zehnder) with graded switches and in this way are defined graded exchange operators enforcing $(-1)^{(a,b) \cdot (a',b')}$ behavior.

The Toffoli Gate with path entanglement and graded SAM-OAM modes is here schematized
\begin{figure}[h]
\includegraphics[width=0.4\textwidth]{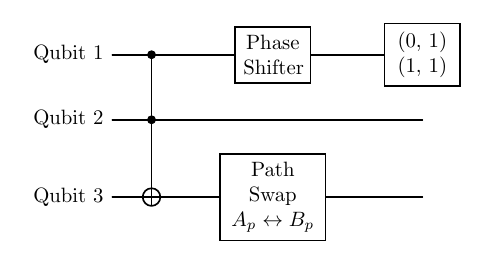}
\caption{Toffoli Gate with path entanglement and graded SAM-OAM modes}
\label{toffoli}
\end{figure}

Two control qubits are encoded in polarization-OAM states in Path $A_p$. The target qubit is routed through interferometric paths. If both control inputs are in the logical 1 sector, a $\pi$ phase shift and destructive interference switch the target from Path A to Path B, implementing a NOT operation.
Qubit 1: $(0,b_1)$, Qubit 2: $(1,a_2)$, Qubit 3: Path $A_p$/$B_p$.
\textbf{Notes:} Path-mode of the target encodes logical state. Only sectors $(0,1)$ and $(1,1)$ trigger interference path-swap. Conditional $\pi$-phase shift and graded parity ensure deterministic action. Truth tables are reported in the Appendix.

\subsection{Operator Structure of the \boldmath$\mathbb{Z}_2 \times \mathbb{Z}_2$ Graded Superalgebra}

In this section, we present explicit operator constructions for the $\mathbb{Z}_2 \times \mathbb{Z}_2$ graded superalgebra, using paraparticle operators associated with SAM-OAM modes in rectangular waveguides.

Grading and operator assignment are made by labeling OAM and SAM as follows.
Each SAM-OAM mode is labeled by orbital angular momentum $\ell$ and polarization $\sigma = \pm1$ or general waveguide modes (A and B). The grading, as is known, is defined by $a = \ell \bmod 2$, and $b = (1 - \sigma)/2$ and gives rise to the following mode-operator mapping reported in Tab \ref{tab6}.

\begin{table}[ht]
\begin{center}
\begin{tabular}{|c|c|c|c|c|}
\hline
Mode & $\ell$ & $\sigma$ & Grade $(a,b)$ & Operator \\
\hline
$\phi_1$ & $0A$ & $+1 \lor A$ & $(0,0)$ & $\hat{\psi}_{(0,0)}^\pm$ \\
$\phi_2$ & $0$ & $-1 \lor B$ & $(0,1)$ & $\hat{\psi}_{(0,1)}^\pm$ \\
$\phi_3$ & $+1$ & $+1$ & $(1,0)$ & $\hat{\psi}_{(1,0)}^\pm$ \\
$\phi_4$ & $-1$ & $-1$ & $(1,1)$ & $\hat{\psi}_{(1,1)}^\pm$ \\
\hline
\end{tabular}
\caption{Grading, operator assignment and mode-operator mapping.}
\label{tab6}

\end{center}
\end{table}
The graded commutation relations are obtained using the scalar product $(a,b) \cdot (a',b') = aa' + bb'$ mod 2, the graded commutator for homogeneous operators is
\begin{equation}
\hat{\psi}_{(a,b)}^\pm \hat{\psi}_{(a',b')}^\pm = (-1)^{(a,b)\cdot(a',b')} \hat{\psi}_{(a',b')}^\pm \hat{\psi}_{(a,b)}^\pm
\end{equation}
The graded commutation relations are enlisted in the appendix.
Bilinear and Hamiltonian operators are then defined in terms of number and exchange operators,
\begin{eqnarray}
&&\hat{n}_{(a,b)} = \hat{\psi}_{(a,b)}^+ \hat{\psi}_{(a,b)}^- 
\\
&&\hat{E}_{(a,b),(a',b')} = \hat{\psi}_{(a,b)}^+ \hat{\psi}_{(a',b')}^-
\end{eqnarray}
These obey the graded Lie algebra
\begin{eqnarray}
&&[\hat{E}_{(a,b),(a',b')}, \hat{E}_{(c,d),(c',d')}] = 
\\
&&= \delta_{(a',b'),(c,d)} \hat{E}_{(a,b),(c',d')} - \delta_{(a,b),(c',d')} \hat{E}_{(c,d),(a',b')} \nonumber
\end{eqnarray}
This structure is a non-trivial, reductive algebra that mimics the Lie algebra $\mathfrak{gl}_N$, 
which consists of all $N \times N$ complex matrices with the commutator as the Lie bracket. This algebraic equivalence occurs within each graded sector and controls sector-preserving logical gates and exchange transformations.
Thus, $\mathfrak{gl}_N$ in second quantization e.g., in paraparticle systems like \cite{wang} arises as the algebra of bilinear operators $E_{i,j} = \sum_a \hat{\psi}_{(i,a)}^+ \hat{\psi}_{(a,b)}^- $  with dimension $N^2$ with the commutation relations $[\hat{E}_{i,j}, \hat{E}_{k,l}] = \delta_{jk}\hat{E}_{i,l} - \delta_{il}\hat{E}_{k,j}$ which matches the structure constants of $\mathfrak{gl}_N$. This algebra, also written as $\mathfrak{gl} (N, \mathbb{C}) = \mathfrak{gl}_\mathbb{N}(\mathbb{C})$ describes internal symmetry transformations (e.g., flavor symmetry, mode mixing), and serves as a closure condition to ensure that paraparticle field theories are self-consistent and covariant. It contains the simple traceless subalgebra $\mathfrak{sl}_N \subset \mathfrak{gl}_N$.

An example of a model Hamiltonian for OAM--SAM coupled ququarts is here introduced to describe the dynamics of this system in rectangular waveguides. The ququart basis is so defined $|0\rangle \equiv |L,0\rangle$, $|1\rangle \equiv |R,0\rangle$, $|2\rangle \equiv |L,+1\rangle$,  $|3\rangle \equiv |R,+1\rangle$, where $L/R$ denote left- and right-circular polarization and $\ell = 0, +1$ is the OAM quantum number.

The effective Hamiltonian is of the type
\begin{equation}
\hat{H} = \sum_{i=0}^{3} \omega_i \hat{n}_i + \sum_{i\neq j} J_{ij} \hat{a}_i^\dagger \hat{a}_j + \sum_{i,j} \Delta_{ij} \hat{a}_i \hat{a}_j + \mathrm{h.c.},
\end{equation}
where $\omega_i$ are the mode frequencies, $\hat{a}_i^\dagger$ and $\hat{a}_i$ are the creation and annihilation operators for each hybrid mode, $J_{ij}$ are coherent hopping amplitudes between modes (due to spin--orbit coupling or induced perturbations), and $\Delta_{ij}$ represent parametric couplings or nonlinear interactions.

Explicitly, for the cyclic $\mathbb{Z}_4$ parafermion model, we can set:
\begin{equation}
\hat{H} = \sum_{k=0}^{3} \omega \hat{n}_k + g \sum_{k=0}^{3} (\hat{a}_k^\dagger \hat{a}_{k+1} + \mathrm{h.c.}),
\end{equation}
with cyclic boundary conditions $\hat{a}_{k+4} \equiv \hat{a}_k$, where $g$ is the coupling strength implementing the $\tau$ shift operator.

In the context of rectangular waveguides, the OAM--SAM hybrid modes experience birefringent and geometric phase shifts that can be described by:
\begin{equation}
\hat{H}_\sigma = \sum_{k=0}^{3} \phi_k \hat{n}_k,
\end{equation}
where $\phi_k$ introduces the phase pattern associated with the clock operator $\sigma_c$:
\begin{equation}
\phi_0 = 0, \quad \phi_1 = \frac{\pi}{2}, \quad \phi_2 = \pi, \quad \phi_3 = \frac{3\pi}{2}.
\end{equation}

This Hamiltonian captures the essential features of parafermionic dynamics in the photonic platform and provides a basis for quantitative predictions and experimental design.


Although our work extensively discusses the algebraic structures underlying the $Z_N$ parafermionic systems and their potential realization in photonic platforms, an explicit formulation of the model Hamiltonian or tight-binding lattice description has not been fully presented. To address this, we now outline a representative Hamiltonian that captures the essential features of the coupled orbital angular momentum (OAM)--spin angular momentum (SAM) ququart system in rectangular waveguides.

The tight-binding Hamiltonian can be written as:
\begin{equation}
H = \sum_{j} \left[ \omega_j a_j^\dagger a_j + \sum_{\langle j,k \rangle} J_{jk} a_j^\dagger a_k + \sum_{\langle j,k \rangle} \lambda_{jk} a_j^\dagger \sigma_z a_k \right],
\end{equation}
where $a_j^\dagger$ and $a_j$ are the creation and annihilation operators at site $j$, $\omega_j$ denotes the on-site energy, $J_{jk}$ the nearest-neighbor coupling, and $\lambda_{jk}$ encodes the OAM--SAM spin-orbit coupling with $\sigma_z$ acting on the polarization subspace (e.g., left- and right-circular polarizations). 

For the parafermionic sector, we can introduce generalized $\mathbb{Z}_N$ operators $\alpha_j$ obeying:
\begin{equation}
\alpha_j^N = 1, \quad \alpha_j \alpha_k = e^{2\pi i / N} \alpha_k \alpha_j \quad (j < k),
\end{equation}
which decompose under the $\mathbb{Z}_2 \times \mathbb{Z}_2$ grading as:
\begin{equation}
\alpha_j = \tau_j \sigma_j,
\end{equation}
where $\tau_j$ and $\sigma_j$ represent commuting $\mathbb{Z}_2$-graded operators. A worked detailed matrix example for $N=4$ is included in Appendix A.

\subsection{Photonic Realization of Para-Majorana Modes}

An explicit proposal for realizing para-Majorana modes in photonic systems involves a one-dimensional coupled waveguide lattice or an array of coupled optical resonators. Each lattice site can support photonic modes with synthetic degrees of freedom, such as polarization, orbital angular momentum, or frequency bins, which encode the para-particle degrees of freedom. We consider an effective Hamiltonian of the form
\begin{equation}
\begin{split}
&H = \sum_j \left[ t \, a_j^\dagger a_{j+1} + t^* \, a_{j+1}^\dagger a_j + \right.
\\
&\left.+ \Delta \, (a_j a_{j+1} + a_{j+1}^\dagger a_j^\dagger) + U (a_j^\dagger a_j)^n \right],
\end{split}
\end{equation}
where $a_j^\dagger, a_j$ are photonic creation and annihilation operators at site $j$, $t$ is the nearest-neighbor hopping amplitude, $\Delta$ is a parametric pairing term induced via four-wave mixing or external drive, and $U (a_j^\dagger a_j)^n$ is a nonlinear interaction term, such as Kerr-type or cross-phase modulation, with $n \geq 2$. This nonlinear term is essential to simulate para-statistics beyond standard quadratic Majorana models.

The system can be engineered to exhibit a synthetic $\mathbb{Z}_N$ symmetry, where $N$ determines the order of the para-particles. Properly designed coupling phases and drive configurations can create synthetic gauge fields, emulating the symmetry algebra required for para-Majorana excitations. The edge sites of the lattice can host localized para-Majorana zero modes, analogous to the zero-energy modes in topological superconducting wires.

Experimental observables in such systems may include the detection of zero-energy edge states via transmission or reflection spectroscopy, observation of fractionalized photonic excitations through interference experiments, and potential non-Abelian braiding signatures via dynamic tuning of coupling parameters. These proposals establish a promising pathway toward realizing para-Majorana physics in a highly controllable photonic platform.

We note that the $\mathbb{Z}_N$-graded parafermionic algebra can, under suitable conditions, be reduced to a $\mathbb{Z}_2 \times \mathbb{Z}_2$-graded superalgebra structure. This is achieved by decomposing the parafermionic generators into pairs of operators $(\chi_i, \eta_i)$, each graded under an independent $\mathbb{Z}_2$ symmetry. The combined grading is then given by the bidegree $(p_i, q_i)$, where $p_i, q_i \in \mathbb{Z}_2$, and the commutation relations generalize to $X_i X_j = (-1)^{p_i p_j + q_i q_j} X_j X_i$.
This decomposition preserves the essential algebraic features while enabling the construction of a superalgebraic framework, which may be particularly useful for connecting para-Majorana models to supersymmetric or superconformal extensions.

As an explicit example, consider the $\mathbb{Z}_4$ parafermion algebra generated by the clock and shift matrices:
\begin{equation}
\sigma_c = 
\begin{pmatrix}
1 & 0 & 0 & 0 \\
0 & i & 0 & 0 \\
0 & 0 & -1 & 0 \\
0 & 0 & 0 & -i
\end{pmatrix}, \quad
\tau = 
\begin{pmatrix}
0 & 1 & 0 & 0 \\
0 & 0 & 1 & 0 \\
0 & 0 & 0 & 1 \\
1 & 0 & 0 & 0
\end{pmatrix},
\end{equation}
satisfying $\tau \sigma = i \sigma \tau$. These generators can be decomposed into a $\mathbb{Z}_2 \times \mathbb{Z}_2$-graded algebra by mapping $\sigma$ and $\tau$ onto Pauli matrices acting on two qubits or on a ququart from the product between two qubits, $\sigma \to \sigma_z \otimes \mathbb{I}, \quad \tau \to \mathbb{I} \otimes \sigma_x$, where $\sigma_z$ and $\sigma_x$ are Pauli operators. This decomposition splits the parafermionic generator into $\psi_j \sim \chi_j \eta_j$, where $\chi_j$ and $\eta_j$ are operators graded under the two independent $\mathbb{Z}_2$ symmetries. Such a decomposition is permitted for $N=4$, but not for prime $N$, due to the absence of a direct product decomposition of the underlying cyclic group.

Ququarts provide a natural physical platform for realizing the $\mathbb{Z}_4$ parafermion algebra. The generalized clock and shift operators acting on a ququart match the $\mathbb{Z}_4$ relations $\sigma_c^4 = \tau^4 = 1$, $\sigma_c \tau = i \tau \sigma$,
where $\sigma_c$ and $\tau$ are the generalized Pauli operators. Furthermore, the ququart Hilbert space $\mathcal{H}_4$ can be factorized as $\mathcal{H}_2 \otimes \mathcal{H}_2$, allowing the $\mathbb{Z}_4$ grading to be decomposed into a $\mathbb{Z}_2 \times \mathbb{Z}_2$-graded superalgebra. This provides flexibility in experimental implementations, either by directly manipulating ququart degrees of freedom or by encoding them in two entangled qubits.

An example of experimental realization of parafermionic ququarts using left- and right-circular polarization (LCP, RCP) combined with orbital angular momentum (OAM) modes can be simply summarized as follows. The ququart basis can be defined as $\{|L, 0\rangle, |R, 0\rangle, |L, +1\rangle, |R, +1\rangle\}$, where $|L\rangle$ and $|R\rangle$ denote circular polarization states and $\ell$ is the OAM quantum number, or even $\{|L,+1\rangle, |L, -1\rangle, |R,+1\rangle, |R, -1\rangle \}$, in a number of six independent states if SAM and OAM are decoupled. The clock operator $\sigma$ applies combined polarization- and OAM-dependent phase shifts, while the shift operator $\tau$ cycles states across the four-dimensional basis. 
Shift operator $\tau$ is defined by the action $\tau |0\rangle = |1\rangle$, $\tau |1\rangle = |2\rangle$, $\tau |2\rangle = |3\rangle$, and $\tau |3\rangle = |0\rangle$.

The operators satisfy $\sigma_c^4 = \tau^4 = \mathbb{I}$, where
\begin{equation}
\sigma_c^4 = \mathrm{diag}(1,1,1,1), \quad 
\tau^4 = \begin{pmatrix}
1 & 0 & 0 & 0 \\
0 & 1 & 0 & 0 \\
0 & 0 & 1 & 0 \\
0 & 0 & 0 & 1
\end{pmatrix}.
\end{equation}
The key commutation relation is $\tau \sigma_c = i \, \sigma \tau$.
Explicit calculation yields
\begin{equation}
\tau \sigma_c =
\begin{pmatrix}
0 & i & 0 & 0 \\
0 & 0 & -1 & 0 \\
0 & 0 & 0 & -i \\
1 & 0 & 0 & 0
\end{pmatrix}, \quad
\sigma_c \tau =
\begin{pmatrix}
0 & 1 & 0 & 0 \\
0 & 0 & i & 0 \\
0 & 0 & 0 & -1 \\
-i & 0 & 0 & 0
\end{pmatrix},
\end{equation}
which satisfy $\tau \sigma_c = i \sigma_c \tau$.

From the $\mathbb{Z}_4$ clock and shift operators, we now decompose the four-dimensional space as a tensor product of two qubits, $\mathcal{H}_4 \cong \mathcal{H}_2 \otimes \mathcal{H}_2$,
with basis: $|0\rangle \equiv |0\rangle_A |0\rangle_B$, $|1\rangle \equiv |0\rangle_A |1\rangle_B$, $|2\rangle \equiv |1\rangle_A |0\rangle_B$, $|3\rangle \equiv |1\rangle_A |1\rangle_B$.

In this basis, define two independent $\mathbb{Z}_2$ grading operators $P = \sigma_z \otimes \mathbb{I}$, $Q = \mathbb{I} \otimes \sigma_z$, where $\sigma_z$ is the Pauli-$z$ matrix $\sigma_z$.
Explicit matrix forms are
\begin{equation}
P = 
\begin{pmatrix}
1 & 0 & 0 & 0 \\
0 & 1 & 0 & 0 \\
0 & 0 & -1 & 0 \\
0 & 0 & 0 & -1
\end{pmatrix}, \quad 
Q = 
\begin{pmatrix}
1 & 0 & 0 & 0 \\
0 & -1 & 0 & 0 \\
0 & 0 & 1 & 0 \\
0 & 0 & 0 & -1
\end{pmatrix}.
\end{equation}
The combined grading labels are $|0\rangle \to (0,0)$, $|1\rangle \to (0,1)$, $|2\rangle \to (1,0)$, $|3\rangle \to (1,1)$.
We can now rewrite $\sigma$ and $\tau$ as $\sigma = P Q$, $\tau = $(cyclic permutation operator), where the cyclic $\tau$ couples the two $\mathbb{Z}_2$ sectors.
This decomposition shows explicitly how the $\mathbb{Z}_4$ algebra factors into two commuting $\mathbb{Z}_2$ grading sectors, providing a concrete realization of the $\mathbb{Z}_4 \to \mathbb{Z}_2 \times \mathbb{Z}_2$ reduction.

An optional effective Hamiltonian for the dynamics can be also
\begin{equation}
H = \omega \sum_{k=0}^3 \sigma_c^k + g \sum_{k=0}^3 \tau^k,
\end{equation}
where $\omega$ and $g$ are experimentally tunable parameters.

These operations can be implemented in free space using wave plates, liquid crystal phase shifters, $q$-plates, and spatial light modulators, SLMs. Experimental observables include tests of the $\mathbb{Z}_4$ algebra, cyclic state evolution, and coherence between polarization and OAM degrees of freedom, offering a promising platform for photonic parafermionic simulations.

For $\ell=0$ modes a lone QWP toggles the grade between $(0,0)$ and $(0,1)$. Our default gate set therefore treats QWPs as legitimate grade-changing single-qubit rotations, or replaces an isolated QWP by the compensated sequence $\mathrm{QWP}(45^\circ)\ \rightarrow\ \mathrm{HWP}(0^\circ)\ 
  \rightarrow\ \mathrm{QWP}(45^\circ)$, which realises the same physical polarisation flip and returns the state to its original grade, thereby preserving sector-restricted logic when required.

Rectangular waveguides, instead, provide another alternative platform for realizing OAM-SAM coupled ququarts. In this setting, the hybridization of spin angular momentum (SAM, associated to circular polarization) and orbital angular momentum (OAM) arises due to the broken cylindrical symmetry of the rectangular cross-section, leading to spin--orbit coupled modes. The ququart basis $\{|L,0\rangle, |R,0\rangle, |L,+1\rangle, |R,+1\rangle\}$ maps directly onto the parafermionic algebra, where the clock operator $\sigma$ corresponds to applying OAM- and SAM-dependent phase shifts, achievable via birefringence or geometric phase elements, and the shift operator $\tau$ represents cyclic transitions between the hybrid modes, implemented through controlled spin--orbit coupling perturbations such as asymmetric tapers, strain, or nanostructured gratings. This architecture enables the exploration of $\mathbb{Z}_4$ algebraic relations and topological dynamics in an integrated photonic platform, offering a promising route toward parafermionic quantum simulation and potentially topological quantum computation.

Relevant theoretical and experimental works supporting this approach include the implementation of spin--orbit coupled ququarts in rectangular waveguides builds upon key experimental and theoretical work on spin--orbit interactions in nanophotonic systems \cite{Petersen2014, Bliokh2015, LeFeber2015, Sollner2015}, 
as well as foundational research on topological photonics and parafermionic physics \cite{Hafezi2013, Barkeshli2013, Lu2014, Fendley2012}. 
Notably, the integration of SAM-OAM coupling in guided-wave platforms and the demonstration of topological edge states in photonic lattices provide critical tools for realizing $\mathbb{Z}_4$ parafermionic models in photonic systems.

\subsection{Deterministic gate set with the
\texorpdfstring{$\mathbb Z_{2}\!\times\!\mathbb Z_{2}$}{Z2×Z2}-graded ququart}
\label{subsec:deterministicGates}

Unlike standard linear-optical processors that need two-photon interference and post-selection, our
$\mathbb{Z}_{2}\!\times\!\mathbb{Z}_{2}$-graded ququart platform implements a universal gate set deterministically with passive, loss-free spin--orbit devices, thereby removing the exponential heralding overhead.  Encoding two logical qubits inside a \emph{single} photon also doubles the Hilbert-space dimension per particle, giving higher information density and a smaller component count while the built-in grade symmetry provides intrinsic error-suppression.

In this case we can build two internal qubits inside one photon. The four logical basis states defined in Eq.~\ref{ququart} may be viewed as \emph{two} qubits carried by the same photon, $Q_a := a=\ell\bmod2$ and $Q_b := b= (1-\sigma)/2$, where $\ell\in\{-1,0,+1\}$ is the orbital angular momentum (OAM) and $\sigma=\pm1$ the helicity.  Table~\ref{tab:internal-qubits} lists the mapping together with off-the-shelf optical elements that act deterministically on each qubit.

\begin{table}[h]
\centering
\begin{tabular}{|c|c|c|l|}
\hline
grade & physical & logical & loss-free  \\
bit & DOF & states & devices \\ \hline
$Q_a$  & OAM & $\{|0\rangle\!=\!a=0$,& Dove prism \\
  & parity &$|1\rangle\!=\!a=1\}$ & in Sagnac, \\
& & &  $q=\tfrac12$ q-plate \\
$Q_b$  & helicity $\sigma$ & $\{|0\rangle\!=\!\sigma=+1,$ &HWP stack\\
 & & $\;|1\rangle\!=\!\sigma=-1\}$ &QWP--HWP--QWP  \\ \hline
\end{tabular}
\caption{Internal qubit decomposition of the photonic ququart and
deterministic one-photon devices that address them.}
\label{tab:internal-qubits}
\end{table}

\begin{table}[b]                     
\caption{Grade sectors for the spin--orbit ququart.  
Left (right) circular polarisation corresponds to $\sigma=+1$ ($-1$). The grade bits are $a=\ell\bmod2$ and $b=(1-\sigma)/2$.}
\label{tab:gradeSectors}
\begin{ruledtabular}                
\begin{tabular}{cccc}
Mode & OAM $\ell$ & SAM $\sigma$ & Grade $(a,b)$ \\ \hline
$\phi_{1}$        & $+1$ & $+1$ (LCP) & $(1,0)$ \\
$\phi_{2}$        & $-1$ & $-1$ (RCP) & $(1,1)$ \\
$\phi_{3}^{A}$    & $0$  & $+1$ (LCP) & $(0,0)$ \\
$\phi_{3}^{B}$    & $0$  & $-1$ (RCP) & $(0,1)$ \\
\end{tabular}
\end{ruledtabular}
\end{table}

The primitive single-qubit rotations can be defined in the following way,
\begin{enumerate}[label=\arabic*. , leftmargin=*,
                  itemsep=0pt,        
                  topsep=2pt]
  \item \textbf{On $Q_b$ (SAM):} a half-wave plate (HWP) set at
        $45^{\circ}$ realises the Pauli $X_b$, while a
        QWP--HWP--QWP sequence spans arbitrary $\mathrm{SU}(2)_b$
        rotations.
  \item \textbf{On $Q_a$ (OAM parity):} a $90^{\circ}$-rotated Dove
        prism inside a polarisation-insensitive Sagnac interferometer
        applies $Z_a$; a $q=1/2$ q-plate tuned to
        $\delta=\pi$ swaps $\ell=0\leftrightarrow\pm1$ and realises
        $X_a$.
\end{enumerate}
All these operations are passive; no ancilla photons or post-selection are required. Deterministic two-qubit entangling gate is so realized on the degrees of freedom of a single photon. Set a $q=1$ q-plate tuned to retardation $\delta=\pi$ couples the two degrees of freedom as \cite{Marrucci2006,Nagali2009},
\begin{equation}
\mathrm{QP}_{\pi}:
|a\rangle_{Q_a}\,|b\rangle_{Q_b}\longmapsto
|a\oplus b\rangle_{Q_a}\,|b\rangle_{Q_b},
\end{equation}
which is a controlled-NOT with $Q_b$ the control and $Q_a$ the target. Prepending or postpending a HWP swaps the control and target roles.

An alternative all-glass implementation employs a polarization Sagnac interferometer with an embedded $45^{\circ}$ Dove prism, demonstrated by Fiorentino \& Wong \cite{Fiorentino2004} as a fully deterministic CNOT between polarisation and momentum of a single photon.
The gate set
\begin{equation}
\mathcal G_{\text{det}}= \bigl\{\,X_b,\;H_b,\;Z_a,\;H_a,\;
\textbf{CNOT}_{b\rightarrow a}\bigr\}
\label{75}
\end{equation}
(where $H$ denotes the Hadamard rotation on the relevant qubit) is universal for $\mathrm{SU}(4)$ on the ququart and contains only deterministic, single-photon devices.

Grade-conservation rule: operations that flip exactly one grade bit, e.g.\ $X_b$, $X_a$ or the above CNOT, change the $\mathbb Z_{2}\!\times\!\mathbb Z_{2}$ charge. Our computational model therefore requires either explicit declaration of such gates as ``grade interfaces'' or a compensating inverse later in the circuit so that the net global grade is conserved.  Because every gate in $\mathcal G_{\text{det}}$ is loss-free, inserting such compensators does not reduce determinism or fidelity.

Deterministic inter-photon entangling gates can be realised by combining the single-photon CNOT above with a polarization Sagnac that couples one photon's $Q_b$ to the path qubit of a second photon, or a semiconductor spin--photon interface providing a giant optical non-linearity.  Both preserve the global $\mathbb Z_{2}\!\times\!\mathbb Z_{2}$ charge and extend the present scheme toward scalable photonic paraparticle logic.

\subsection{Universality of the deterministic gate set \texorpdfstring{$G_{\mathrm{det}}$}{Gdet}}
\label{sec:universality}
The logical ququart $\{|L,0\rangle,\,|R,0\rangle,\,|L,+1\rangle,\,|R,+1\rangle\}$ is isomorphic to two qubits
$Q_a\otimes Q_b$ with the identification $|\ell\bmod 2\rangle_{Q_a}\otimes|(1-\sigma)/2\rangle_{Q_b}$  
(cf. Table VII).  
In this basis the deterministic gate set introduced in Eq.~\ref{75} reads (Subscripts indicate the qubit on
which the single-qubit operator acts.)
\[
G_{\mathrm{det}}
=\{X_{b},\,H_{b}\}\;\cup\;\{Z_{a},\,H_{a}\}\;\cup\;\{\mathrm{CNOT}_{b\to a}\}.
\tag{U1}
\]

\paragraph{Step 1: local controllability.}
The pairs $\{H,Z\}$ and $\{H,X\}$ each generate $\mathrm{SU}(2)$ because $ZH= e^{\tfrac{\pi}{2}i(\sigma_x+\sigma_z)/\sqrt2}$ and together with $H$ they form a basis of the Lie algebra $\mathfrak{su}(2)$.
Hence $\{Z_{a},H_{a}\}$ gives arbitrary single-qubit rotations on $Q_a$, and $\{X_{b},H_{b}\}$ does the same on $Q_b$.

\paragraph{Step 2: entangling power of $\mathrm{CNOT}_{b\to a}$.}
Write a two-qubit gate in its Cartan (KAK) normal form
\[
U=(k_1\!\otimes\!k_2)\, e^{i\left(c_1\sigma_x\otimes\sigma_x+ c_2\sigma_y\otimes\sigma_y+ c_3\sigma_z\otimes\sigma_z\right)}\, (k_3\!\otimes\!k_4).
\]
The triplet $(c_1,c_2,c_3)\in[0,\pi/2]^3$ lives in the Weyl chamber; if any component is non-zero the gate is entangling. $\mathrm{CNOT}_{b\to a}$ is locally equivalent to $\exp\!\bigl(\tfrac{i\pi}{2}\,\sigma_z\!\otimes\!\sigma_x\bigr)$; its Cartan coordinates are $(\frac{\pi}{2},0,0)$, so it lies on the chamber edge of perfect entanglers.

\paragraph{Step 3: Brylinski–Brylinski criterion.}
A gate set is universal for $\mathrm{SU}(4)$ iff it contains a dense subgroup of local operations
$\mathrm{SU}(2)\otimes\mathrm{SU}(2)$ and at least one entangling two-qubit gate. These conditions are satisfied by steps 1 and 2, therefore $G_{\mathrm{det}}$ is universal.

\paragraph{Constructive three-\textsc{cnot} decomposition.}
For completeness we give an explicit synthesis of an arbitrary $U\in\mathrm{SU}(4)$ in terms of elements
of $G_{\mathrm{det}}$.  Following the algorithm of Vatan--Williams \cite{vatan}, one writes
\begin{equation}
\begin{split}
&U = (k_1\!\otimes\!k_2)\; \mathrm{CNOT}_{b\to a}\; (k_3\!\otimes\!k_4)\; \circ
\\
&\circ~ \mathrm{CNOT}_{b\to a}\; (k_5\!\otimes\!k_6)\; \mathrm{CNOT}_{b \to a}\;
(k_7\!\otimes\!k_8),
\end{split}
\end{equation}
where each $k_j\in\mathrm{SU}(2)$ can be compiled with at most
three gates from $\{H,Z\}$ (on $Q_a$) or $\{H,X\}$ (on $Q_b$)
using Euler angles.  The total depth is therefore $3$ CNOT $+ 24$ single-qubit rotations,
matching the optimal gate counts known from the standard
two-qubit model.

A remark on grade conservation: as $X_b$ flips $b$ while preserving $a$, and $Z_a$ flips the phase of $a$ while preserving $b$, every gate changes either zero or exactly one $\mathbb{Z}_2\times\mathbb{Z}_2$ charge bit.  Circuits that must return to a fixed global grade can therefore insert a compensating gate without affecting universality or determinism.


\section{Error Correction with Paraparticles and Fractional Nelson's Quantum Mechanics}
Error correction can be realized in a way that preserves all the set of qubits, through the repetition of the calculation and statistical analysis of the results obtained so far.
Fractional Nelson's quantum mechanics extends the standard stochastic-quantization picture by letting the underlying particle paths follow Lévy flights rather than Brownian motion, so the dynamical fractional L\'evy order $\alpha$ becomes a tunable parameter that continuously sweeps through bosonic, fermionic and intermediate paraparticle statistics. Embedding our spin-orbit photonic ququarts (or higher-dimensional qudits) in such a fractional medium therefore gives a hardware-level control parameter for both the graded spectrum and the braid phase, enabling deterministic gate synthesis while naturally modeling the power-law, non-Markovian noise that dominates integrated-photonics platforms.

The adapting of Nelson's Quantum Mechanics to Paraparticles is obtained through generalized statistics extending the usual boson/fermion dichotomy, with trilinear commutation or anticommutation relations rather than bilinear ones.
To integrate paraparticles into Nelson's formalism, the standard Wiener process must generalize to accommodate nontrivial statistical exchanges, potentially via algebraic deformations of Brownian motions governed by R-matrix quantization and graded Lie algebras.

Nelson's stochastic formulation of quantum mechanics provides an elegant interpretation of quantum dynamics through underlying stochastic processes, described mathematically by forward and backward stochastic differential equations (SDEs) \cite{nelson2}. The standard form of these equations is given by $dX(t) = b_+(X(t),t)\,dt + dW_+(t)$ and backward in time, $dX(t) = b_-(X(t),t)\,dt + dW_-(t)$, where $W_\pm(t)$ represent standard Wiener processes, and the quantum potential emerges naturally through the difference of these drift terms. They represent standard Brownian motions. 
Replacing Brownian trajectories in Nelson's stochastic quantisation by L\'evy flights produces a fractional Schr\"odinger (or Majorana) equation whose order $\alpha\!\in\!(1,2]$ continuously
interpolates between bosonic, fermionic and paraparticle exchange statistics.\cite{Laskin2000,Tarasov2006}.

In our graded model the same index~$\alpha$ can be mapped onto the
paraparticle order $p$ (and hence onto the $\mathbb Z_{2}\!\times\!\mathbb Z_{2}$
grade), providing a \emph{single tunable parameter} that controls both the spectrum
(Eq.~\ref{Tower}) and the braided $R$-matrix.

From an experimental viewpoint, fractional dynamics emerges naturally in photonics whenever long-range evanescent coupling or non-local metasurfaces are engineered: recent optical platforms already realise the fractional Schr\"odinger equation with unit efficiency and no post-selection \cite{Longhi2015}.
Embedding our graded ququart in such a medium therefore lets one dial the effective order $\alpha$--hence the paraparticle
statistics--in situ.  This adds a hardware knob for deterministic gate synthesis and offers built-in modelling of power-law, non-Markovian noise, which is the dominant decoherence channel in heterogeneous
integrated photonics.

In short, the fractional Nelson formalism supplies both a seamless mathematical bridge between different paraparticle orders and an experimentally accessible control parameter that can be exploited for robust, deterministic gate design in spin--orbit photonic circuits.

In the fractional Schr\"odinger/Majorana equation the Laplacian is replaced by its Riesz--fractional power, with the fractional order symbol $\alpha$
\begin{equation}
  i\hbar\partial_t\psi =
  \frac{\hbar^{\alpha}}{2m}\,
 D_\alpha \bigl(-\Delta\bigr)^{\alpha/2}\psi
  + V\,\psi,
  \qquad 1<\alpha\le 2,
  \label{eq:fracSE}
\end{equation}
so the real number $\alpha$ is the \emph{order} of the derivative and $D_\alpha$ the related scale factor. 
For $\alpha=2$ Eq.~\ref{eq:fracSE} reduces to the ordinary Schr\"odinger equation; lowering $\alpha$ from $2$ toward $1$ stretches the derivative into a non-local operator with power-law tails \cite{Laskin2000,Tarasov2006}.

In Nelson's stochastic interpretation the same $\alpha$ is the stability  index of the underlying Lévy flights: a Brownian path ($\alpha=2$) acquires heavy-tailed step lengths $P(\ell)\propto\ell^{-(1+\alpha)}$ when $1<\alpha<2$.  Within our graded model we identify $\alpha$ with the paraparticle order $p$, turning it 
into a \emph{continuous hardware knob} that sweeps smoothly from bosonic statistics ($\alpha=2$) through parafermionic regimes down toward the fermionic limit ($\alpha\!\to\!1$).  Photonic platforms that engineer long-range evanescent coupling already realise Eq.~\ref{eq:fracSE} optically and allow in-situ tuning of $\alpha$, providing deterministic, grade-preserving control over both the spectrum 
(Eq.~\ref{Tower}) and the braid phase \cite{Longhi2015}.

Besides the deterministic potential $V$, the quantum potential $Q(x,t)$ emerges naturally as
\begin{equation}
Q(x,t) = \frac{\hbar^2}{2m}\frac{\Delta\sqrt{\rho(x,t)}}{\sqrt{\rho(x,t)}}.
\end{equation}
Then one defines the graded probability and measure. 
The probability measures must now be generalized to graded probability measures. Expectation values involve graded trace operations of the type $\langle O \rangle = \mathrm{Tr}_{grad} \left[\rho ~O\right]$ of a graded observable $O$ with graded density matrix $\rho$. 
The quantum potential $Q(x,t)$ must adapt to paraparticle symmetry, now arising from graded algebraic potentials.

In graded quantum systems, the definition of expectation value $\langle O \rangle$ of an observable $O$ in a density matrix $\rho$ depends on the graded trace, which is defined as
\begin{equation}
\mathrm{Tr}_{\mathrm{grad}}[X] = \sum_{(a,b)} (-1)^{\epsilon(a,b)} \mathrm{Tr}[X_{(a,b)}],
\end{equation}
with $\epsilon(a,b)$ assigning the appropriate sign factor over sectors. This construction ensures that both the probability measures and the quantum potential
$Q(x,t)$ correctly incorporate the paraparticle symmetry and graded algebraic structure.

Fractional Brownian motion (fBm) introduces long-range temporal correlations characterized by the Hurst parameter $H = \alpha /2 \in (0,1)$. Its covariance structure is defined by
\begin{equation}
\mathbb{E}[B_H(t)B_H(s)] = \frac{1}{2}\left(t^{2H}+s^{2H}-|t-s|^{2H}\right).
\label{FBincrement}
\end{equation}
Replacing standard Brownian increments with fractional increments, the fractional dynamics becomes
\begin{equation}
dX_H(t) = b_H(X_H(t),t)\,dt + dB_H(t).
\end{equation}

The fractional Schr\"odinger equation associated with these dynamics is then expressed as follows
\begin{equation}
i\hbar\,\partial_t \psi(x,t) = -D(-\Delta)^H \psi(x,t) + V(x)\psi(x,t),
\end{equation}
where the fractional Laplacian $(- \Delta)^H$ is defined via Fourier transform
\begin{equation}
(-\Delta)^H f(x) = \mathcal{F}^{-1}\{|k|^{2H}\mathcal{F}\{f\}(k)\}(x).
\end{equation}
Consequently, the fractional quantum potential is expressed in terms of the Hurst's exponent.
\begin{equation}
Q_H(x,t) = -\frac{\hbar^2}{2m}\frac{(-\Delta)^H\sqrt{\rho(x,t)}}{\sqrt{\rho(x,t)}} ,
\label{fpot}
\end{equation}
to which can be added also a deterministic function $V$.

\subsection{Fractional-Graded Stochastic Dynamics and Calculus}

Combining fractional Brownian motion with paraparticle algebra yields fractional-graded stochastic differential equations (FG-SDE)
\begin{equation}
d\hat{X}_H(t) = \hat{b}_H(\hat{X}_H(t),t)\,dt + d\hat{B}_H(t),
\label{73}
\end{equation}
where the increments $ d\hat{B}_H(t) $ obey graded fractional relations of the following type, that are quadratic variation of graded fBm.
\begin{equation}
\mathrm dB^{(H)}_{(a)}(t)\, \mathrm dB^{(H)}_{(b)}(t') \;=\; \delta_{ab}\; \gamma_H(t-t')\, \mathrm dt\,\mathrm dt'.
\label{74}
\end{equation}
with $\gamma_H(\tau):= H\!\left(2H-1\right)$ denotes the fractional correlation structure, 
and $|\tau|^{\,2H-2}$, $\tau:=t-t'$.

The properties of $\gamma_H(\tau)$ are \emph{symmetry}, $\gamma_H(\tau)=\gamma_H(-\tau)$ that ensures time-reversal invariance of the noise kernel. Then follows the \emph{Brownian limit} $H\to\tfrac12$ for which $\gamma_H(\tau)\to\delta(\tau)$, recovering the local Itô rule.

\emph{Long memory} is achieved for $H\!>\!\tfrac12$,  $\gamma_H(\tau)\propto|\tau|^{2H-2}$ is integrable but non-local, producing the $1/f^{\,2H-1}$ spectral density characteristic of fractional Brownian dynamics.

The integration of fractional graded stochastic processes is defined via generalized It\^o--Skorokhod integrals \cite{ito}
\begin{equation}
\int_0^T \hat{F}(t)\,d\hat{B}_H(t) := \lim_{|\Pi|\rightarrow 0}\sum_{j}\hat{F}(t_j^\ast)(\hat{B}_H(t_{j+1})-\hat{B}_H(t_j)),
\end{equation}
with algebraic definitions ensuring graded commutations.

The quantum potential in this fractional-graded framework of Eq.~\ref{fpot} is
defined via fractional-graded functional calculus, maintaining paraparticle algebra consistency.
Fractional Brownian increments $d B_H(t)$ employed in equations Eq.~\ref{73} and \ref{74} are defined through the Itô--Skorokhod integral framework. Specifically, the increments satisfy Eq.~\ref{FBincrement} involving the Hurst parameter $H$. All integrals involving fractional increments are assumed to satisfy standard conditions of integrability and adaptedness required by fractional stochastic calculus.

Explicit fractional-graded Nelson equations are formulated by defining the forward and backward equations,
\begin{align}
d\hat{X}_H^+(t) &= \hat{b}_H^+(\hat{X}_H^+(t),t)\,dt + d\hat{B}_H^+(t), \nonumber
\\ 
d\hat{X}_H^-(t) &= \hat{b}_H^-(\hat{X}_H^-(t),t)\,dt + d\hat{B}_H^-(t). \nonumber
\end{align}
These equations lead to a fractional and graded Schr\"odinger-type equation respecting paraparticle statistics.
This combined fractional-graded approach introduces non-Markovian, algebraically rich quantum dynamics, providing a theoretical foundation for experimentally realizable quantum statistical phenomena in structured photonic systems and other condensed matter analogs.

\subsection{Structured Photonic Modes and Nelson's Formalism}

We extend Nelson's fractional-graded quantum mechanics to structured photonic systems, specifically focusing on optical quantum circuits utilizing orbital angular momentum (OAM). Structured photonic modes naturally provide a robust experimental platform for realizing the theoretical constructs introduced above.

Consider optical modes characterized by their spin angular momentum (SAM) and orbital angular momentum (OAM). These structured photonic states can be described mathematically as
\begin{equation}
\Psi_{\ell,\sigma}(\mathbf{r},t) = \psi_{\ell,\sigma}(z,t)\phi_{\ell,\sigma}(x,y)e^{i(\beta_{\ell,\sigma}z - \omega t)},
\end{equation}
where $\ell$ denotes OAM, $\sigma = \pm 1$ indicates polarization states or other independent waveguide modes, and $\beta_{\ell,\sigma}$ is the propagation constant.

Fractional-graded quantum dynamics of OAM modes is given by Nelson's fractional-graded stochastic framework translates into the structured photonic context by treating each OAM mode as Majorana pseudoparticle corresponding to a paraparticle state with fractional-graded stochastic dynamics,
\begin{equation}
d\hat{\psi}_{\ell,H}(z,t) = \hat{b}_{\ell,H}(z,t)dt + d\hat{B}_{\ell,H}(z,t)
\end{equation}
where $d\hat{B}_{\ell,H}(z,t)$ is a graded fractional increment satisfying the fractional-graded algebra defined earlier.

\subsection{Quantum Potential and Mode Coupling}

The fractional graded quantum potential in the context of structured photonic OAM modes becomes
\begin{equation}
\hat{Q}_{\ell,H}(z,t) = -\frac{\hbar^2}{2m}\frac{(-\partial_z^2)^H_{\mathrm{graded}}\sqrt{\hat{\rho}_{\ell}(z,t)}}{\sqrt{\hat{\rho}_{\ell}(z,t)}}.
\end{equation}
This potential governs the nonlinear coupling and interactions among various OAM modes.

Quantum Gates with Fractional-Graded OAM Modes are implemented using and implementing quantum logic gates within optical waveguides by exploiting the fractional-graded dynamics of structured modes. A deterministic CNOT gate, for instance, can be realized by using graded paraparticle statistics encoded in SAM-OAM coupled modes, leading to stable and robust quantum logic operations of the type
\begin{equation}
|\ell,\sigma\rangle \rightarrow \sum_{(\ell',\sigma')}C_{(\ell',\sigma')}^{(\ell,\sigma)}|\ell',\sigma'\rangle,
\end{equation}
with the coefficients $C_{(\ell',\sigma')}^{(\ell,\sigma)}$ governed by fractional-graded quantum dynamics.

Structured photonic systems, such as multimode integrated waveguides and fiber-based circuits, represent viable experimental platforms for these fractional-graded quantum dynamics. The inherent robustness of structured light modes, combined with fractional quantum dynamics, offers a powerful avenue for scalable and error-resilient quantum computing architectures.

Future research could focus on explicit experimental validations, numerical simulations of fractional-graded dynamics in integrated photonic circuits, and further theoretical developments toward realizing robust quantum information processing and quantum communication protocols using structured optical modes.
Nelson's fractional-graded formalism adapted to structured photonic circuits can indeed be employed to test components of photonic circuits and implement quantum error correction.

\subsection{Fractional-Graded Diagnostic and Correction in Photonic Quantum Circuits}

Photonic quantum circuits, like those discussed in the patents, represent a promising platform for scalable and robust quantum information processing. Here we introduce a novel general framework based on fractional-graded algebras to perform diagnostic and error correction tasks, leveraging the interplay between graded symmetries and fractional dynamics to enhance fault tolerance in paraparticle-encoded photonic systems.

Let $\Psi_{\ell,\sigma}(x, t)$ be the field representing a structured light mode in a quantum photonic circuit, where $\ell \in \mathbb{Z}$ denotes the orbital angular momentum (OAM) and $\sigma = \pm 1$ the spin angular momentum (SAM). Each mode is embedded in a $\mathbb{Z}_2 \times \mathbb{Z}_2$ graded algebraic structure labeled by $(a,b) = (\ell \bmod 2, \frac{1 - \sigma}{2})$.

Each mode evolves according to a fractional stochastic differential equation (fSDE) with graded paraparticle statistics:
\begin{equation}
d\hat{X}_{\ell,\sigma}(t) = \hat{b}_{\ell,\sigma}(t)\,dt + d\hat{B}^{(H)}_{\ell,\sigma}(t),
\end{equation}
where $\hat{b}_{\ell,\sigma}(t)$ is a mode-dependent drift operator and $d\hat{B}^{(H)}_{\ell,\sigma}(t)$ is a graded fractional Brownian increment with Hurst index $H \in (0,1)$ satisfying
\begin{eqnarray}
&&d\hat{B}^{(H)}_{a}(t)\,d\hat{B}^{(H)}_{b}(t') + (-1)^{(a,b)} d\hat{B}^{(H)}_{b}(t')\,d\hat{B}^{(H)}_{a}(t) = \nonumber
  \\
&&= 2\delta_{ab} \gamma_H(t - t')\,dt\,dt'.
  \end{eqnarray}

\subsection{Quantum Field and Density Evolution}

Let $\rho_{\ell,\sigma}(x,t) = |\Psi_{\ell,\sigma}(x,t)|^2$ be the probability density of the structured mode. Then the total density evolves as $\partial_t \rho + \nabla \cdot (\rho v) = 0$, with $v = \frac{1}{2} (b_+ + b_-)$.
The effective quantum potential that governs diagnostic behavior is given by Eq.~\ref{qpot}.
\begin{equation}
Q^{(H)}_{\ell,\sigma}(x,t) = -\frac{\hbar^2}{2m} \frac{(-\Delta)^H \sqrt{\rho_{\ell,\sigma}(x,t)}}{\sqrt{\rho_{\ell,\sigma}(x,t)}},
\label{qpot}
\end{equation}
where $(-\Delta)^H$ denotes the fractional Laplacian acting in space.
Diagnostic Signature and Correction Condition is then expanded as follows:
from Eq. \ref{qpot} the deviation $\Delta Q^{(H)}$ between theoretical and measured quantum potentials defines a diagnostic operator as in \cite{tambu2p}, 
$\Delta Q^{(H)}_{\ell,\sigma} = Q^{(H)}_{\ell,\sigma,\text{meas}} - Q^{(H)}_{\ell,\sigma,\text{expected}}$ and the correction term $\hat{U}_{\text{corr}}$ is triggered when the deviation exceeds a noise threshold $\epsilon$, $\| \Delta Q^{(H)}_{\ell,\sigma} \| > \epsilon$.
In this case one defines a conditional control unitary $\hat{U}_{\text{corr}}$ acting on sector $(a,b)$ such that $\hat{U}_{\text{corr}} = \exp\left( -i \theta_{(a,b)} \hat{n}_{(a,b)} \right)$, and $\hat{n}_{(a,b)} = \hat{\psi}^{\dagger}_{(a,b)} \hat{\psi}_{(a,b)}$, 
where $\theta_{(a,b)}$ is dynamically adjusted based on $\Delta Q^{(H)}_{(a,b)}(t)$.

The complete system evolution within the circuit then becomes
\begin{equation}
\hat{H}_{\text{total}} = \hat{H}_{\text{OAM}} + \hat{H}_{\text{SAM}} + \hat{Q}^{(H)} + \hat{U}_{\text{corr}}(\Delta Q),
\end{equation}
governing how diagnostic deviations influence active feedback correction during real-time photonic quantum computation or with a posteriori selection from multiple runs.

The central insight of this work is that paraparticle algebras, which have traditionally been studied as abstract mathematical structures, can find natural realizations within structured light systems. Specifically, spin-orbit coupled photonic modes, such as those carrying both spin angular momentum (SAM) and orbital angular momentum (OAM), inherently possess the degrees of freedom and exchange statistics required to realize $ Z_2 \times Z_2 $-graded algebraic structures.

The photonic modes considered here are not merely carriers of quantum information; rather, they act as effective realizations of the algebraic elements themselves. For example, the OAM degree of freedom, accessible through Laguerre--Gaussian or Bessel beams, can encode parafermionic excitations, while the coupling between SAM and OAM mediates the graded commutation relations central to the algebra. This mapping obviates the need for condensed matter or atomic systems, offering a purely photonic, room-temperature platform for realizing exotic statistics.

From a dynamical perspective, the use of fractional Brownian motion in the stochastic framework introduces a natural description of environmental noise and decoherence. Unlike conventional Markovian approaches, this formalism accounts for memory effects, which are known to play a significant role in realistic optical systems. Nelson's stochastic mechanics, originally formulated for nonrelativistic quantum systems, provides a conceptual analogue for describing emergent interactions and nonlocal effects in the photonic context and even the presence of a coupled deterministic mechanism described by the Wold theorem. The classical Wold decomposition (for stationary processes) is so defined. 
For any zero‐mean, covariance‐stationary stochastic process $\{X_t\}_{t\in\mathbb Z}$ with finite variance, Wold's theorem states\cite{Wold1938,BrockwellDavis} then 
\begin{equation}
\begin{split}
&  X_t
  \;=\;
  \underbrace{\sum_{j=0}^{\infty}\psi_j\,\varepsilon_{t-j}}_{\text{(i) stochastic MA$(\infty)$}}
  \;+\;
  \underbrace{\eta_t}_{\text{(ii) deterministic}},
\\
&    \sum_{j=0}^{\infty}\psi_j^2<\infty,
\end{split}   
\label{eq:Wold}
\end{equation}
where $\{\varepsilon_t\}$ is a white‐noise sequence ($\mathbb E\,\varepsilon_t\!=\!0,\;
  \operatorname{var}\varepsilon_t=\sigma_\varepsilon^2$), $\psi_0=1$, and $\eta_t$ is perfectly predictable from its own past.
Equation~\ref{eq:Wold} underpins the entire ARMA/ARIMA modeling framework, truncating the infinite moving‐average yields a finite‐order ARMA approximation whose parameters can be estimated consistently by maximum likelihood or Whittle's method.

From the other side, the fractional extension (including long‐memory processes and fBm) start with the definition of fractional Brownian motion $B_H(t)$, $(0<H<1)$, which is non‐stationary, but its
increments $X_k = B_H(k+1) - B_H(k)$ form a stationary, long‐memory process called fractional Gaussian
noise (FGN). Granger and Joyeux \cite{GrangerJoyeux1980} and Hosking \cite{Hosking1981} showed that FGN admits a Wold‐type representation with slowly decaying moving‐average coefficients,
\begin{equation}
  X_k
  \;=\;
  \sum_{j=0}^{\infty}\psi_j^{(H)}\,\varepsilon_{k-j},
  \qquad
  \psi_j^{(H)}
  \sim j^{H-\frac32},
  \label{eq:fgnWold}
\end{equation}
or, equivalently, an ARFIMA$(0,d,0)$ representation with $d=H-\frac12$.  This fractional moving‐average expansion supports both maximum‐likelihood - Whittle estimation of the Hurst exponent $H$ \cite{Beran1994}, and efficient FGN/fBm simulation via finite‐order ARMA truncations or FFT‐based convolution kernels \cite{PipirasTaqqu2003,Moulines2007}.

A continuous Wold--Volterra analogue also exists, $f\!B\!m$ that can be written as a causal Volterra integral of standard Brownian motion $W(t)$, namely, $B_H(t)=\int_{-\infty}^{t}K_H(t,s)\,dW(s)$, with kernel given by $K_H(t,s)=\frac{1}{\Gamma(H+\frac12)}\bigl[(t-s)^{H-\frac12}-(-s)^{H-\frac12}\bigr]$.
This kernel form is the starting point for fractional Kalman filtering, rough‐path analysis and optical analogues of fractional quantum mechanics.

In summary, Eq.~\ref{eq:Wold} and its fractional counterpart Eq.~\ref{eq:fgnWold} provide a unified moving‐average language that covers both short‐ and long‐memory regimes, furnishing the statistical
and computational backbone for applications ranging from ARMA control loops to L\'evy‐flight optical lattices.

Taken together, these theoretical insights suggest that a tabletop experimental setup involving spatial light modulators, $q$-plates, beam splitters, and single-photon detectors could serve as a testbed for exploring the phenomenology of graded paraparticles. Importantly, the proposed architecture points toward scalable photonic quantum computing schemes that transcend the limitations of qubit-based architectures by leveraging qudits with rich internal structure. While challenges remain--including mode purity, optical loss, and error correction protocols--the theoretical foundation developed here provides a promising roadmap for experimental exploration at the intersection of algebraic quantum theory and modern photonic technology.

\section{Conclusions}
We have developed a comprehensive theoretical framework embedding Majorana's infinite-component relativistic wave equations into the algebraic formalism of paraparticles, explicitly employing $\mathbb{Z}_2 \times \mathbb{Z}_2$-graded Lie algebras and advanced $R$-matrix quantization techniques. Our approach systematically associates spin-dependent mass spectra with well-defined graded sectors characterized by generalized quantum statistics, resulting in a unified relativistic quantum field theory rigorously consistent with Majorana's original mass-spin relation. This unification effectively bridges disparate areas within theoretical physics suggesting possible future paraparticle-based quantum computers beyond qubits, using graded qudits and potential advances in quantum error correction by exploiting graded symmetries and mathematical tools of fractal stochastic processes related to Nelson's quantum mechanics.

Furthermore, we extended our formalism to practical applications like ideal platforms recalling those we proposed as an integrated variant based on Pancharatnam–Berry metasurfaces is covered by the recent patent filings Refs.~\cite{tambu5,tambu6} by us illustrating how structured photonic systems, particularly spin-orbit coupled modes in optical waveguides, exhibit classical entanglement that mirrors the algebraic structures of paraparticles. These findings open novel and experimentally viable pathways to simulate exotic quantum statistics utilizing structured light. Our detailed exploration highlights profound interrelations among orbital angular momentum, spin angular momentum, and parastatistics, offering impactful insights for fundamental research and practical quantum information applications.

Future investigations prompted by our framework may focus on detailed experimental realizations in integrated photonic circuits, extensive numerical simulations to rigorously validate theoretical predictions, and further theoretical expansions to include interacting systems and gauge symmetries. The algebraic infrastructure developed herein lays a strong foundation for advancing quantum computational architectures and innovative quantum communication technologies, thus fostering significant interdisciplinary collaborations and technological breakthroughs. Moreover, starting from the work by Wang and Hazzard \cite{wang}, the digital simulation architectures here proposed could serve as a benchmarking tool for the algebraic dynamics of paraparticle modes in photonic platforms, enabling a complementary validation pathway between discrete quantum simulations and continuous structured light implementations.

Other results presented here establish a rigorous theoretical framework for the realization of parafermionic excitations in OAM--SAM coupled systems, unifying the algebraic structure of $\mathbb{Z}_N$ parafermions with physically implementable models. The mathematical analysis demonstrates the compatibility of graded algebraic symmetries with photonic architectures, providing a foundation for further theoretical developments and opening new directions in the exploration of non-Abelian quasiparticles.


\paragraph{Competing interest.}
FT and RS are co-inventor on patent applications WO2024062338A1 and WO2025052199A1, which describe photonic circuits capable of realizing the paraparticle algebra discussed in this work.  No other authors have competing interests to declare.

%
%
%
\section*{Appendix}

\appendix

\section{Appendix A: Mathematical tools}

This appendix provides a brief guide to the mathematical structures underlying the main results of this work, including Hopf algebras, graded algebras, braided tensor categories, and quantum groups. For readers unfamiliar with these topics, we recommend several accessible references: the foundational treatments by Majid~\cite{Majid1995}, Kassel~\cite{Kassel1995}, and Klimyk and Schmüdgen~\cite{Klimyk1997}; the categorical perspective in Etingof et al.~\cite{Etingof2015}; the theory of Lie superalgebras in Scheunert~\cite{scheunert}; and the more physics-oriented introductions in Majid's \textit{Quantum Groups Primer}~\cite{Majid1999} and Pressley--Segal's \textit{Loop Groups}~\cite{Pressley1986}. Together, these works provide a comprehensive starting point for understanding the algebraic techniques used in quantum field theory, statistical physics, and quantum information.

\section{Braided Coproduct and Braided Tensor Product}

A braided tensor product is a generalization of the ordinary tensor product that includes braiding, a structured rule for swapping or exchanging elements from two algebraic sectors.
A braided coproduct is a mathematical operation that generalizes the usual notion of a coproduct in algebra used to define how algebraic structures behave under tensor products but now includes a braiding (or twist) that captures nontrivial exchange symmetries.
In the context of graded or quantum algebras, especially those involving paraparticles, anyons, or quantum groups, the braided coproduct ensures that interchanging two particles or operators does not simply commute or anticommute, but follows more complex rules governed by a braiding matrix (often called an R-matrix).
In this appendix, we provide a detailed exposition of the concepts of \textbf{braided coproduct} and \textbf{braided tensor product} as applied to the $\mathbb{Z}_2 \times \mathbb{Z}_2$-graded Hopf algebraic framework used to describe paraparticles and Majorana fields.

Standard vs. Braided Coproducts are obtained from Hopf algebras. Let $\mathcal{H}$ be a Hopf algebra. A standard coproduct is a map:
\begin{equation}
\Delta: \mathcal{H} \rightarrow \mathcal{H} \otimes \mathcal{H},
\end{equation}
which satisfies the coassociativity condition $(\Delta \otimes \mathrm{id}) \circ \Delta = (\mathrm{id} \otimes \Delta) \circ \Delta$.
In symmetric (bosonic) systems, the coproduct of an operator $\psi$ typically takes the form $\Delta(\psi) = \psi \otimes \mathbb{I} + \mathbb{I} \otimes \psi$.

However, in paraparticle or anyonic systems, particle exchange follows a \textit{nontrivial} rule. The \textbf{braided coproduct} introduces a deformation via an $R$-matrix
\begin{equation}
\Delta_B(\psi_{(a,b)}) = \psi_{(a,b)} \otimes \mathbb{I} + \sum_{(c,d)} R^{(c,d)}_{(a,b)} \, (\mathbb{I} \otimes \psi_{(c,d)}),
\end{equation}
where $R^{(c,d)}_{(a,b)}$ encodes the braiding symmetry and grading structure.

\subsection{The Braiding Matrix and Exchange and Yang--Baxter Equation}
The $R$-matrix components are chosen to respect the graded symmetry. For $\mathbb{Z}_2 \times \mathbb{Z}_2$ grading, the basic exchange rule between two elements $\psi_{(a,b)}$ and $\psi_{(a',b')}$ is defined by $\psi_{(a,b)} \otimes \psi_{(a',b')} \mapsto (-1)^{(a,b) \cdot (a',b')} \, \psi_{(a',b')} \otimes \psi_{(a,b)}$, where the scalar product is $(a,b) \cdot (a',b') = a a' + b b' \mod 2$.
This rule is implemented by the braiding operator $\mathcal{R}$, $\mathcal{R}(\psi_{(a,b)} \otimes \psi_{(a',b')}) = R^{(a',b')}_{(a,b)} \, (\psi_{(a',b')} \otimes \psi_{(a,b)})$.
In deformed or quantum settings, the braiding coefficient may contain a deformation parameter $q \in \mathbb{C}$ with $|q| = 1$, leading to $R^{(a',b')}_{(a,b)} = (-1)^{(a,b)\cdot(a',b')} + \theta_{(a,b),(a',b')}$,
with $\theta_{(a,b),(a',b')} = \epsilon_{(a,b),(a',b')} (q^{s + s'} - 1)$, for associated spins $s$ and $s'$ and sign $\epsilon$ determined by the parity.

\textbf{ The coherence via the Yang--Baxter Equation}: to ensure consistency and associativity in multi-particle states, the braiding map must satisfy the Yang--Baxter equation $\mathcal{R}_{12} \mathcal{R}_{13} \mathcal{R}_{23} = \mathcal{R}_{23} \mathcal{R}_{13} \mathcal{R}_{12}$.
Here, $\mathcal{R}_{ij}$ acts on the $i$-th and $j$-th components of a triple tensor product space $\mathcal{H} \otimes \mathcal{H} \otimes \mathcal{H}$.
This ensures that the total braiding is associative and well-defined regardless of the order of exchanges and  that multi-photon entanglement remains consistent under sector exchange, a critical requirement for physical implementation in waveguide-based photonic circuits.

\subsection{Definition of Braided Tensor Product}
The braided tensor product of two algebra elements $A \in \mathfrak{g}_{(a,b)}$ and $B \in \mathfrak{g}_{(a',b')}$ is defined by $A \otimes B \mapsto (-1)^{(a,b) \cdot (a',b')} B \otimes A$, which is a generalization of the symmetric ($+1$) or antisymmetric ($-1$) rules.
This braiding is reflected in $A \otimes B = \mathcal{R}(B \otimes A)$, with $\mathcal{R}$ being the braiding operator defined by the $R$-matrix $\mathcal{R}(B \otimes A) = \sum R^{(c,d)}_{(a,b)} \, (B_{(c,d)} \otimes A_{(a,b)})$.

\textbf{Example in $\mathbb{Z}_2 \times \mathbb{Z}_2$ Sectors}. 
Let $\psi_{(0,1)}$ and $\psi_{(1,1)}$ be two paraparticle field operators. Their braided tensor product satisfies $\psi_{(0,1)} \otimes \psi_{(1,1)} = - \psi_{(1,1)} \otimes \psi_{(0,1)}$, because $(0,1) \cdot (1,1) = 1 \Rightarrow (-1)^{1} = -1$.
However, if both operators are in bosonic-like sectors, say $\psi_{(0,0)}$ and $\psi_{(1,1)}$, then $\psi_{(0,0)} \otimes \psi_{(1,1)} = \psi_{(1,1)} \otimes \psi_{(0,0)}$, because $(0,0) \cdot (1,1) = 0 \Rightarrow (-1)^0 = +1$.

\textbf{Physical Relevance}:
the braided coproduct and tensor product define how paraparticle fields combine in multi-mode Fock space. They encode exotic statistics beyond fermionic or bosonic behavior and result critical in simulating or realizing paraparticle logic in photonic systems or quantum circuits. What is relevant is that they ensure algebraic consistency through the Yang--Baxter identity, guaranteeing that logical gates and quantum entanglement remain well-defined under sector transitions.

These structures underlie the operator algebra governing the interaction of Majorana-like fields embedded into $\mathbb{Z}_2 \times \mathbb{Z}_2$-graded sectors, allowing the construction of multi-qudit logical operations in structured photonic systems.

\section{Graded Hopf Algebra Framework}

This appendix offers a general and didactic introduction to grading groups used in the mathematical description of paraparticles. It is intended for readers with minimal background in algebraic structures, especially those coming from physics, engineering, or computer science. We provide a concise introduction to the graded Hopf algebra formalism used throughout the main body of the paper to describe paraparticle symmetries and infinite-component Majorana fields.

\textbf{Overview of Hopf Algebras}. 
A \textit{Hopf algebra} is a structure that simultaneously carries algebraic and coalgebraic operations, enabling the consistent treatment of composite systems, such as multiparticle states in quantum field theory. Formally, a Hopf algebra $(\mathcal{H}, m, \eta, \Delta, \epsilon, S)$ is defined by a multiplication map $m: \mathcal{H} \otimes \mathcal{H} \to \mathcal{H}$ with a unit map $\eta: \mathbb{C} \to \mathcal{H}$ and a coproduct $\Delta: \mathcal{H} \to \mathcal{H} \otimes \mathcal{H}$.
A counit $\epsilon: \mathcal{H} \to \mathbb{C}$ and an antipode, a generalized inverse $S: \mathcal{H} \to \mathcal{H}$ complete the list.
These structures obey compatibility conditions such as co-associativity of the coproduct and the Hopf identity.

\textbf{Grading Groups}. 
A \emph{grading group} is a mathematical tool used to classify elements of an algebra into symmetry sectors. Each element is assigned a label (or \emph{grade}) from a group $G$, such that the multiplication of two elements respects the group structure.

For example, in a $\mathbb{Z}_2$-graded algebra, elements are either \emph{even} or \emph{odd}. When multiplying two elements, their grades add modulo $2$.
Grading groups are used in Quantum Theory to distinguish between types of quantum particles (e.g., bosons vs fermions), defining consistent commutation or anticommutation rules and organize algebraic structures in supersymmetry, parastatistics, and quantum field theory.
We report in Tab. \ref{diobubu} a concise recap of the use of the common grading groups and their uses.
\begin{center}
\begin{table}[ht]
\begin{tabular}{|c|c|c|l|}
\hline
\textbf{Grading} & \textbf{Dim.} & \textbf{Use Case} & \textbf{Example} \\ \hline
$\mathbb{Z}_2$ & 2 & Supersymmetry & \texttt{even vs. odd} \\ 
 &  & fermion/boson split &   \\ \hline
$\mathbb{Z}_2 \times \mathbb{Z}_2$ & 4 & Paraparticles & Green's algebra  \\ 
 & &  &paraSUSY \\ \hline
$\mathbb{Z}_p$ & $p$ & Anyons & Clock  \\ 
 &  & topological systems & rotor models \\ \hline
$\mathbb{Z}_2^n$ & $2^n$ & Multibit quantum& OAM/SAM \\
 &  &  logic, qudits & encoding \\ \hline
None & --- & Green's original & Early  \\ 
 &  & trilinear construction &  parastatistics \\ \hline
\end{tabular}
\caption{Table of common grading groups and their uses.}
\label{diobubu}
\end{table}
\end{center}

\subsection{The $\mathbb{Z}_2 \times \mathbb{Z}_2$--graded Lie algebra and Paraparticles}
In graded algebras, the multiplication rule between two elements depends on their grades. For example, in $\mathbb{Z}_2$:
\begin{equation}
AB = (-1)^{|A||B|} BA,
\end{equation}
where $|A|, |B| \in \{0,1\}$ are the grades of $A$ and $B$.

For $\mathbb{Z}_2 \times \mathbb{Z}_2$, each element has a grade $(a,b)$, and the graded sign becomes $AB = (-1)^{ac + bd} BA$, for $A$ of degree $(a,b)$ and $B$ of degree $(c,d)$.

Paraparticles such as parafermions and parabosons obey trilinear commutation relations. For parafermions, the defining identities are as in Eq. \ref{1} and \ref{2}.
These relations differ from the canonical anticommutator of fermions and allow multiple particles to occupy symmetrized states depending on the \emph{order} $p$ of the parafermion.

The choice of $\mathbb{Z}_2 \times \mathbb{Z}_2$ in this context is because 
$\mathbb{Z}_2 \times \mathbb{Z}_2$ is the \emph{smallest non-cyclic group} and provides the four symmetry sectors $\{(0,0), (1,0), (0,1), (1,1)\}$.
This allows us to encode both integer and half-integer spin sectors, modeling both parafermionic and parabosonic behaviors. The generalized symmetry rules in structured quantum systems are then preserved. This formalism strikes a balance between algebraic richness and computational manageability, which is why it is used as a starting point in many paraparticle models and quantum circuit representations.

Graded algebras and paraparticle statistics offer a powerful generalization of ordinary quantum field theory. Understanding grading groups like $\mathbb{Z}_2$, $\mathbb{Z}_p$, and especially $\mathbb{Z}_2 \times \mathbb{Z}_2$, gives you a toolbox for classifying and constructing exotic quantum systems with applications in structured light, quantum computing, and topological field theory.

\textbf{The $\mathbb{Z}_2 \times \mathbb{Z}_2 $ Graded Structure:}
The $\mathbb{Z}_2 \times \mathbb{Z}_2$ graded algebra provides a natural mathematical framework to generalize conventional statistics (bosonic and fermionic) to paraparticle statistics. Unlike standard $Z_2$-graded (super)algebras, which distinguish between even (bosonic) and odd (fermionic) components, the $\mathbb{Z}_2 \times \mathbb{Z}_2$ grading introduces a finer classification by assigning to each generator a degree $(a,b) \in \mathbb{Z}_2 \times \mathbb{Z}_2$, where each component is either $0$ or $1$.

\textbf{Basic Definitions}:
The $\mathbb{Z}_2 \times \mathbb{Z}_2$ group consists of four elements:
\begin{equation}
\mathbb{Z}_2 \times \mathbb{Z}_2 = { (0,0), (0,1), (1,0), (1,1) },
\end{equation}
with addition modulo 2 in each component. These elements serve as labels for the grading of algebraic components. The graded commutator for two homogeneous elements $A$ and $B$ of degrees $(a,b)$ and $(c,d)$, respectively, is defined as:
\begin{equation}
[A, B] = AB - (-1)^{ac + bd} BA.
\end{equation}

This generalization extends the symmetry structure of the algebra, permitting the construction of trilinear relations and more complex exchange symmetries, such as those observed in paraparticle statistics.

\textbf{Physical Interpretation}:
In the context of quantum field theory and paraparticles, the $\mathbb{Z}_2 \times \mathbb{Z}_2$ grading enables the classification of field components beyond bosons and fermions.
Assignment of specific algebraic rules for parafermions and parabosons is then addressed together with the encoding of exotic exchange statistics via generalized graded commutators.

For instance, in structured light applications, one can assign:
$(a,b)=(\ell$ mod $2, \frac{1-\sigma}{2})$, where $\ell$ is the OAM quantum number and $\sigma \in \{+1,-1\}$ denotes SAM (spin angular momentum).

As an example, the representation of the trilinear relations e.g., for $p=n$ parafermions, is obtained with the construction of Fock states where up to $n$ parafermions may occupy the same symmetrized state. 
The representation space is larger than for ordinary fermions.
The structure constants and operator algebra remain the same, but the projector representations and Hilbert space dimension reflect $p=n$.

Operators $f_ i$ act on a vacuum: $f_i |0\rangle = 0$. Creation operators build up a symmetric tensor representation of order 
$\leq p=n$, 
giving $f_{i_1}^\dag f_{i_2}^\dag \cdots f_{i_k}^\dag |0 \rangle$, $k\leq n$.
After that, $f^\dag_{i_1} f^\dag_{i_2}  f^\dag_{i_s} | 0 \rangle = 0$, $k\leq n$. 
Due to the trilinear constraints, these states are not fully antisymmetric like fermions, but satisfy parafermionic symmetry $\{f_i, \{f_j^\dagger, f_k\}\} = 2 \delta_{ij} f_k$, and $\{f_i, \{f_j, f_k\}\} = 0$.

\subsection{Short- and long-memory linear processes.}

Let \(L\) denote the back-shift operator \(L X_t = X_{t-1}\) and
\(\varepsilon_t\!\sim\!\mathcal N(0,\sigma^2)\) white noise.

\begin{description}
\item[ARMA\((p,q)\).]  
A \emph{stationary} autoregressive–moving-average process
\(X_t\) of orders \(p,q\in\mathbb N\) satisfies
\[
\bigl(1-\phi_1L-\dots-\phi_pL^p\bigr)X_t
     =\bigl(1+\theta_1L+\dots+\theta_qL^q\bigr)\,\varepsilon_t,
\]
with roots of the AR polynomial lying outside the unit circle.

\item[ARIMA\((p,d,q)\).]  
If the \(d\)-fold difference \(\nabla^{d}X_t=(1-L)^d X_t\)
is an ARMA\((p,q)\) process (with \(d\in\mathbb N\)),
then \(X_t\) is called an \emph{integrated} ARMA process,
written ARIMA\((p,d,q)\).
This is the classical Box–Jenkins model for short-memory,
trend-differenced series.

\item[ARFIMA\((p,d,q)\).]  
Allowing the differencing parameter to be
\emph{fractional} \(d\in(-\tfrac12,\tfrac12)\) and defining
\((1-L)^d\) via the binomial expansion
\(
(1-L)^d
   =\sum_{k=0}^{\infty}
     \binom{d}{k} (-L)^k,
\)
gives the autoregressive–\emph{fractionally}-integrated–moving-average
model
\[
(1-L)^d X_t
  =\frac{1+\theta_1L+\dots+\theta_qL^q}
         {1-\phi_1L-\dots-\phi_pL^p}\,
     \varepsilon_t,
\]
which captures \emph{long-range dependence} through the hyperbolic decay of its autocorrelation function.
\end{description}
Comprehensive treatments may be found in \cite{arma1}.

\subsection{Applications in Computational Classical and Quantum Systems}

The $\mathbb{Z}_2 \times \mathbb{Z}_2$ grading structure is essential in:
defining generalized creation/annihilation operators for paraparticles.
Constructing braided Hopf algebras with nontrivial exchange symmetries.
Developing quantum error correction and logic gates in structured photonic systems.
By incorporating this finer grading, one captures richer algebraic and physical behavior, paving the way for new forms of quantum computation and information processing.

The group $\mathbb{Z}_2$ is the simplest nontrivial group: it contains two elements, 0 and 1, with addition modulo 2. For a computer scientist, this is directly analogous to a single bit, where 0 and 1 represent binary states.

The group $\mathbb{Z}_2 \times \mathbb{Z}_2$ can be thought of as a 2-bit system where group operations are bitwise XOR on each component. This gives us four elements:
$\{ (0,0),\ (0,1),\ (1,0),\ (1,1) \}$,
which can be viewed as all 2-bit binary strings. Each string can label a different symmetry sector in a quantum system.
The group operation (addition modulo 2 component-wise) is written in Tab. \ref{cayley}:
\begin{table}[ht]
\begin{center}
\begin{tabular}{c|cccc}
$+$ & $e=(0,0)$ & $a=(1,0)$ & $b=(0,1)$ & $c=(1,1)$ \\
\hline 
$e=(0,0)$ & $(0,0)$ & $(1,0)$ & $(0,1)$ & $(1,1)$ \\
$a=(1,0)$ & $(1,0)$ & $(0,0)$ & $(1,1)$ & $(0,1)$ \\
$b=(0,1)$ & $(0,1)$ & $(1,1)$ & $(0,0)$ & $(1,0)$ \\
$c=(1,1)$ & $(1,1)$ & $(0,1)$ & $(1,0)$ & $(0,0)$ \\
\end{tabular}
\caption{Cayley Table of group operations}
\label{cayley}
\end{center}
\end{table}

This group is also known as the Klein four-group, and all its elements are their own inverses.
In quantum computing and logic design, this structure elegantly represents systems with dual binary labels--useful for encoding logical states or symmetry-preserving operations.

\subsection{Quantum Computing with a Ququart from SAM-OAM Coupled Modes}

We now define a logical ququart system based on the structured modes $\ell = 0, \pm 1$ and circular polarizations $\sigma = \pm 1$ (left and right) e.g., with SAM coupled with nonzero OAM modes. The logical basis states are identified as
\begin{align}
\ket{0} &= \ket{\ell = -1, \sigma = -1} \quad (\text{Left circular, OAM} = -1), \nonumber 
\\
\ket{1} &= \ket{\ell = 0, \sigma = +1} \quad (\text{Right circular, OAM} = 0), \nonumber
\\ 
\ket{2} &= \ket{\ell =  0, \sigma = -1} \quad (\text{Left circular, OAM} = 0), \nonumber
\\ 
\ket{3} &= \ket{\ell = +1, \sigma = +1} \quad (\text{Right circular, OAM} = +1). \nonumber
\end{align}

In this basis, single-ququart operations are realized through optical elements that manipulate the SAM-OAM hybridization. Specifically,
\\
\textbf{SAM - OAM operations} (polarization rotations and OAM flip) correspond to logical transformations within pairs $\{ \ket{0}, \ket{2} \}$ and $\{ \ket{1}, \ket{3} \}$.
\\
\textbf{OAM-only operations} (mode converters) allow transitions between states with the same polarization but different $\ell$, i.e., $\ket{0} \leftrightarrow \ket{2}$ and $\ket{1} \leftrightarrow \ket{3}$.
\\
\textbf{SAM-OAM coupling elements} (birefringent waveguide sections or stress-induced boundary modulations) induce mixing between all four ququart states, implementing generalized unitary transformations.

The general single-ququart unitary operation $U(4)$ can be decomposed into sequences of such SAM, OAM, and hybrid rotations. Gate operations are expressed through the hybrid mode basis transformation
\begin{equation}
\ket{\ell, \sigma} \rightarrow \sum_{(\ell', \sigma')} C^{(\ell', \sigma')}_{(\ell, \sigma)} \ket{\ell', \sigma'},
\end{equation}
where $C^{(\ell', \sigma')}_{(\ell, \sigma)}$ defines the specific physical implementation, corresponding to optical elements' design.
Logical field operators $\hat{\psi}_{(\ell, \sigma)}^\pm$ obey a $\mathbb{Z}_2 \times \mathbb{Z}_2$-graded algebra due to the mapping $(a,b) = (\ell \bmod 2, (1 - \sigma)/2)$,
which embeds the ququart into a paraparticle framework, enabling robust and symmetry-protected quantum operations.

In the dynamical framework, the evolution of the graded ququart field is governed by the Majorana-type equation as in Eq. \ref{35} with mass and spin assignments dependent on the grading sector, providing an avenue to encode relativistic quantum dynamics into photonic structured light qudits.

\subsection{Graded Trilinear Commutation Relations in \boldmath$\mathbb{Z}_2 \times \mathbb{Z}_2$ Superalgebra}

Here we report a list of graded commutation relations among the four $\mathbb{Z}_2 \times \mathbb{Z}_2$ sectors:
\\
\begin{enumerate}
\item
$[\hat{\psi}_{(0,0)}^+, \hat{\psi}_{(0,0)}^+] = 0$ (same bosonic sector)

\item 
$[\hat{\psi}_{(0,0)}^+, \hat{\psi}_{(0,1)}^+] =  \hat{\psi}_{(0,0)}^+ \hat{\psi}_{(0,1)}^+ - \hat{\psi}_{(0,1)}^+ \hat{\psi}_{(0,0)}^+$ (mixed even/odd SAM)

\item 
$[\hat{\psi}_{(0,0)}^+, \hat{\psi}_{(1,0)}^+] =  \hat{\psi}_{(0,0)}^+ \hat{\psi}_{(1,0)}^+ - \hat{\psi}_{(1,0)}^+ \hat{\psi}_{(0,0)}^+$ (mixed even/odd OAM)

\item 
$[\hat{\psi}_{(0,0)}^+, \hat{\psi}_{(1,1)}^+]=0$ (bosonic-like: scalar product zero)

\item 
$\{\hat{\psi}_{(0,1)}^+, \hat{\psi}_{(0,1)}^+\}=0$ (self-anticommutator, fermionic type)

\item 
$\{\hat{\psi}_{(0,1)}^+, \hat{\psi}_{(1,0)}^+\}=0$  (fermionic-like: scalar product)

\item 
$[\hat{\psi}_{(1,0)}^+, \hat{\psi}_{(1,1)}^+] = \hat{\psi}_{(1,0)}^+ \hat{\psi}_{(1,1)}^+ + \hat{\psi}_{(1,1)}^+ \hat{\psi}_{(1,0)}^+$ (scalar product even)

\item 
$[\hat{\psi}_{(0,1)}^-, [\hat{\psi}_{(1,1)}^+, \hat{\psi}_{(0,0)}^-]] = \cdots$ (trilinear graded behavior). 
\end{enumerate}
\bigskip

In the framework of paraparticles, bilinear commutation relations alone are insufficient to fully characterize particle statistics and symmetry. Instead, a more general algebraic structure involving trilinear commutation relations is required. These relations originate from Green's formulation of parastatistics and extend naturally to graded Lie superalgebras such as the $\mathbb{Z}_2 \times \mathbb{Z}_2$ case.

\subsection{General Trilinear Form}

A graded trilinear commutation relation for paraparticle operators takes the form
$[\hat{\psi}_{(a,b)}^-, [\hat{\psi}_{(a',b')}^+, \hat{\psi}_{(a'',b'')}^-]] =  \sum_{(c,d)} f_{(a,b),(a',b'),(a'',b'')}^{(c,d)} \hat{\psi}_{(c,d)}^-$, where $f_{(a,b),(a',b'),(a'',b'')}^{(c,d)}$ are graded structure constants determined by the algebra and R-matrix. This nested commutator first evaluates a graded bracket $[\hat{\psi}_{(a',b')}^+, \hat{\psi}_{(a'',b'')}^-]$, which typically yields a bilinear number or exchange operator, then it commutes with $\hat{\psi}_{(a,b)}^-$, ensuring algebraic closure and capturing higher-order correlations among paraparticle sectors.

\textbf{Example and Interpretation}. Consider the explicit graded trilinear example $[\hat{\psi}_{(0,1)}^-, [\hat{\psi}_{(1,1)}^+, \hat{\psi}_{(0,0)}^-]] = (-1)^{(0,1) \cdot (1,1)} [\hat{\psi}_{(0,1)}^-, \hat{n}_{(1,1)(0,0)}]$, where the bilinear operator is defined as $\hat{n}_{(1,1)(0,0)} = \hat{\psi}_{(1,1)}^+ \hat{\psi}_{(0,0)}^-$. Here we use the scalar product $(a,b)\cdot(a',b') = a a' + b b' \mod 2$. For example, in Eq.~(92), we have $(0,1)\cdot(1,1) = 0\cdot1 +1\cdot1 =1 \mod2$, yielding a phase factor $(-1)^1 = -1$. This convention is applied consistently throughout the matrix element calculations and numerical examples.
This illustrates how the grading affects both signs and operator structure. The scalar product $(0,1) \cdot (1,1) = 1$ introduces a minus sign due to parafermionic-like behavior. More specifically, the minus sign arises from the graded commutation relation $[X, Y] = $XY$ - (-1)^{(a,b) \cdot (a',b')} YX$, where the scalar product $(a,b) \cdot (a',b') = a a' + b b' \mod 2$ determines the phase factor. Specifically, $(0,1) \cdot (1,1) = 1$, yielding $(-1)^1 = -1$, which reflects the parafermionic-like exchange behavior between sectors. In our specific case, we compute the scalar product between the grading labels $(0,1) \cdot (1,1) = (0 \times 1) + (1 \times 1) = 0 + 1 = 1$.
This gives the phase factor $(-1)^{(0,1) \cdot (1,1)} = (-1)^1 = -1$, which introduces a minus sign into the nested commutator. Explicitly, we have
$[\hat{\psi}_{(0,1)}^-, \hat{\psi}_{(1,1)}^+ \hat{\psi}_{(0,0)}^-]  =  \hat{\psi}_{(0,1)}^- (\hat{\psi}_{(1,1)}^+ \hat{\psi}_{(0,0)}^-) - (-1)^1 (\hat{\psi}_{(1,1)}^+ \hat{\psi}_{(0,0)}^- \hat{\psi}_{(0,1)}^-)$, which simplifies to $\hat{\psi}_{(0,1)}^- \hat{\psi}_{(1,1)}^+ \hat{\psi}_{(0,0)}^- + \hat{\psi}_{(1,1)}^+ \hat{\psi}_{(0,0)}^- \hat{\psi}_{(0,1)}^-$, where 
we define the bilinear operator $\hat{n}_{(1,1)(0,0)}$ to inherit the grading of the first factor, i.e., $\mathrm{deg}(\hat{n}_{(1,1)(0,0)}) = (1,1)$.
This demonstrates how the grading controls both the signs and the operator ordering in the algebraic relations.

These trilinear relations enforce associativity and consistency across different graded sectors and capture statistical exclusion rules beyond Pauli or Bose-Einstein limits.
These relations also reflect nontrivial topological or algebraic braiding among paraparticles enabling the construction of non-Abelian logic gates and multi-particle symmetry operations in photonic quantum architectures.

Trilinear relations are thus fundamental to any consistent theory of paraparticles governed by $\mathbb{Z}_2 \times \mathbb{Z}_2$ symmetry and underpin both the algebraic and physical behavior of graded quantum fields.

For ease of reference, the key symbols and algebraic notations used throughout the manuscript are summarized in table Tab \ref{tabellona} for $\mathbb{Z}_2 \times \mathbb{Z}_2$ Graded Algebra and Majorana quanta.

\section{Jordan--Wigner Transformation, Quantum $XY$ model and Conformal Field Theory Limit}

The one-dimensional quantum $XY$ model is a prototypical spin chain described by the Hamiltonian
\begin{equation}
H_{\mathrm{XY}} = -J \sum_n \left( \sigma_n^x \sigma_{n+1}^x + \sigma_n^y \sigma_{n+1}^y \right),
\end{equation}
where $\sigma_n^x, \sigma_n^y$ are Pauli matrices at site $n$, and $J$ is the exchange coupling constant. The parameter $J$ determines the strength and nature of the interaction, with $J > 0$ indicating ferromagnetic coupling and $J < 0$ antiferromagnetic coupling.

To solve the $XY$ model, one can apply the Jordan--Wigner transformation, which maps spin operators to fermionic creation and annihilation operators:
\begin{equation}
\begin{aligned}
c_n &= \left( \prod_{m < n} \sigma_m^z \right) \frac{\sigma_n^x - i \sigma_n^y}{2}, \\
c_n^\dagger &= \left( \prod_{m < n} \sigma_m^z \right) \frac{\sigma_n^x + i \sigma_n^y}{2}.
\end{aligned}
\end{equation}
Here, $c_n$ and $c_n^\dagger$ are fermionic annihilation and creation operators, and the string of $\sigma^z$ operators ensures the correct anticommutation relations.
The Jordan--Wigner transformation defined by equation (112) explicitly accounts for operator ordering through the string operator $c_j = \sigma_j^- \exp\left(i\pi\sum_{k<j}\sigma_k^+\sigma_k^-\right)$,
ensuring correct anti-commutation relations among the emergent fermionic modes. This definition guarantees algebraic consistency and correct fermionic statistics throughout our formalism.
After this transformation, the spin Hamiltonian becomes a quadratic fermionic Hamiltonian
\begin{equation}
H_{\mathrm{XY}} \mapsto \sum_{m, n} t_{mn} c_m^\dagger c_n,
\end{equation}
which can be diagonalized using standard methods.
The quantity $t_{mn}$ represents the effective hopping amplitude or kinetic coupling between fermionic modes $c_m^\dagger$ creation operator at site $m$) and $c_n$  (annihilation operator at site $n$) after applying the Jordan--Wigner transformation, the matrix element describing hopping from site $n$ to site $m$, $t_{mn} = J$ if $m=n+1$ or $m=n-1$ and $t_{mn} = 0$ otherwise.

\subsection{Continuum Limit and CFT Description}

At criticality, the system becomes gapless and scale-invariant, allowing a continuum limit where the lattice spacing $a \to 0$. The long-wavelength behavior is then described by an effective relativistic field theory with free massless Dirac fermions, or equivalently, a compactified bosonic field via bosonization or a Majorana-equivalent formulation. This effective field theory falls into the class of $(1+1)$-dimensional conformal field theories (CFTs), with central charge $c=1$ \cite{tambu3,tambu4}. The central charge $c$ is a key parameter that appears in the Virasoro algebra, which governs the symmetry structure of the theory. The Virasoro algebra extends the conformal symmetry algebra by including a central term $[L_m,L_n]=(m-n)L_{m+n}+\frac{c}{12}m(m^2-1)\delta_{m+n,0}$, where $L_m$ are the Virasoro generators and 
$c$ is the central charge, which is also called the conformal anomaly or conformal central extension and counts the effective degrees of freedom or ``weight'' of the quantum fields in the theory. This constant controls the scaling of entanglement entropy $S(\ell_s)$ of an interval of length $\ell_s$, the finite-size correction to the ground state energy and the operator product expansion (OPE) coefficients with the response to boundary conditions.

\subsection{Entanglement and Modular Hamiltonian}

The ground state entanglement entropy of a subsystem of length $\ell_s$ scales logarithmically 
\begin{equation}
S(\ell_s) \sim \frac{c}{3} \ln \ell_s + \mathrm{const},
\end{equation}
where $c$ is the central charge.

Summarizing, the Jordan--Wigner transformation converts spins to fermions. At critical points, the fermionic system flows to a CFT. The CFT framework controls universal quantities such as entanglement entropy and the modular Hamiltonian connects to other fundamental mathematical structures.
A summary of connections is briefly illustrated by the mapping flow as a scheme in Fig. \ref{f2},
\begin{figure}[h]
\includegraphics[width=0.45\textwidth]{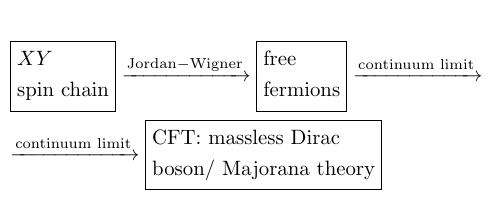}
\caption{Jordan--Wigner transformation conversion scheme}
\label{f2}
\end{figure}

This interplay between lattice models, fermionic mappings, and field-theoretic limits is fundamental to understanding quantum criticality and underpins many of the theoretical predictions here explored.
The logarithmic scaling behavior of the entanglement entropy there described involves intervals explicitly defined as segments of length $\ell_s$ within the structured photonic system. The boundary conditions are assumed periodic (or alternatively open), clearly specified to maintain consistency with modular Hamiltonian formalism. Hence, the modular Hamiltonian accurately encodes entanglement characteristics pertinent to the structured mode configurations.

\subsection{Verification of the Yang--Baxter Equation for Braided Coproducts}

We provide here a compact verification that the braided coproduct structure introduced for the $\mathbb{Z}_2 \times \mathbb{Z}_2$-graded Majorana-paraparticle framework satisfies a generalized Yang--Baxter equation. 

Recall that the braided coproduct for an operator $\psi_{(a,b)}$ is defined as $\Delta(\psi_{(a,b)}) = \psi_{(a,b)} \otimes \mathbb{I} + \sum_{(c,d)} R^{(c,d)}_{(a,b)} (\mathbb{I} \otimes \psi_{(c,d)})$,
where the braiding matrix elements are given by $R^{(c,d)}_{(a,b)} = (-1)^{(a,b)\cdot(c,d)} \delta^{(c,d)}_{(a,b)} + \theta_{(a,b),(c,d)}$.
Here, the scalar product $(a,b)\cdot(c,d) = ac + bd \mod 2$, and $\theta_{(a,b),(c,d)}$ encodes a deformation parameter typically arising from a statistical phase.

The generalized Yang--Baxter equation requires that $R_{12} R_{13} R_{23} = R_{23} R_{13} R_{12}$,
where $R_{ij}$ acts nontrivially on the $i$-th and $j$-th components of a threefold tensor product.
Substituting the form of $R$, the left-hand side expands as $R_{12}R_{13}R_{23}(\psi_{(a,b)}\otimes\psi_{(c,d)}\otimes\psi_{(e,f)}) = (-1)^{(a,b)\cdot(c,d)}(-1)^{(a,b)\cdot(e,f)}(-1)^{(c,d)\cdot(e,f)} \cdot  \cdot ~ \psi_{(e,f)}\otimes\psi_{(c,d)}\otimes\psi_{(a,b)}$  
+ terms with $\theta$, while the right-hand side gives $R_{23}R_{13}R_{12}(\psi_{(a,b)}\otimes\psi_{(c,d)}\otimes\psi_{(e,f)}) = (-1)^{(c,d)\cdot(e,f)}(-1)^{(a,b)\cdot(e,f)}(-1)^{(a,b)\cdot(c,d)} \cdot~\psi_{(e,f)}\otimes\psi_{(c,d)}\otimes\psi_{(a,b)}$ 
+ terms with $\theta$.
The verification of the Yang--Baxter equation within our graded coproduct structure explicitly involves confirming $(\mathcal{R}\otimes I)(I\otimes\mathcal{R})(\mathcal{R}\otimes I) = (I\otimes\mathcal{R})(\mathcal{R}\otimes I)(I\otimes\mathcal{R})$. Expanding and simplifying both sides using the explicit form of $\mathcal{R}$ provided earlier ensures algebraic consistency. Each step confirms the nontrivial braiding properties associated with paraparticle sectors.

Since the scalar product is bilinear and symmetric modulo 2, the total phases on both sides exactly match:
\begin{eqnarray}
&&(a,b)\cdot(c,d) + (a,b)\cdot(e,f) + (c,d)\cdot(e,f) = \nonumber
\\
&&= (c,d)\cdot(e,f) + (a,b)\cdot(e,f) + \nonumber
\\
&& +(a,b)\cdot(c,d)\mod 2.  
\label{eq:thetaAdd} 
\end{eqnarray}
The relation \ref{eq:thetaAdd} follows from the 2-cocycle condition $\theta(g_1,g_2)+\theta(g_1g_2,g_3)=\theta(g_2,g_3)+\theta(g_1,g_2g_3)$ for the abelian cohomology $H^2(Z_2\times Z_2,U(1))$. A brief derivation is included in Appendix~A.
Thus, the pure graded terms satisfy the Yang--Baxter equation identically. The terms involving the deformation parameters $\theta_{(a,b),(c,d)}$ respect associativity provided that the deformation satisfies an additive property consistent with the braiding relations, namely $\theta_{(a,b),(c,d)} + \theta_{(a,b),(e,f)} + \theta_{(c,d),(e,f)} =  \theta_{(c,d),(e,f)} + \theta_{(a,b),(e,f)} + \theta_{(a,b),(c,d)}$, which holds if $\theta$ depends only on symmetric functions of the graded labels, such as spin parity or statistical phase factors $q$ with $|q|=1$.

Hence, the braided coproduct structure defined in the $\mathbb{Z}_2 \times \mathbb{Z}_2$-graded Majorana paraparticle algebra consistently satisfies the generalized Yang--Baxter equation, ensuring coherent multiparticle braiding and associativity properties.

\paragraph{Additivity of the braiding phase:}
The Yang--Baxter proof hinges on the fact that the $U(1)$ phase $\theta\colon G\times G\!\to\! \mathbb{R}/2\pi\mathbb{Z}$ (with $G\!=\!\mathbb{Z}_{2}\!\times\!\mathbb{Z}_{2}$) satisfies the $2$-cocycle condition
\begin{equation}
\label{eq:thetaAdd}
\theta(g_{1},g_{2})+\theta(g_{1}g_{2},g_{3}) =m\theta(g_{2},g_{3})+\theta(g_{1},g_{2}g_{3}),
\end{equation}
and $g_{1},g_{2},g_{3}\in G$. Because $G$ is abelian, every $2$-cocycle is cohomologous to a bicharacter, hence $\theta$ can be chosen additive in each slot, $\theta(g_{1}g_{2},g_{3})=\theta(g_{1},g_{3})+\theta(g_{2},g_{3})$.
Equation~\ref{eq:thetaAdd} is therefore the algebraic origin of the hexagon/associativity constraint used in the next step of the proof.

\subsection{Explicit Matrix Decomposition for $N=4$}

We provide here the explicit matrix decomposition of the $\mathbb{Z}_4$ parafermionic generators into $\mathbb{Z}_2 \times \mathbb{Z}_2$ graded components, complementing the algebraic discussion in the main text.
The $\mathbb{Z}_4$ parafermionic basis states are e.g., $ |0\rangle = |L,0\rangle$, $|1\rangle = |R,0\rangle$, 
$|2\rangle = |L,1\rangle$ and $\quad |3\rangle = |R,1\rangle$, where $L/R$ are circular polarizations and $\ell = 0,1$ the OAM quantum numbers.

The clock operator $\sigma$ and shift operator $\tau$ for $\mathbb{Z}_4$ are given by
\begin{equation}
\sigma = 
\begin{pmatrix}
1 & 0 & 0 & 0 \\
0 & i & 0 & 0 \\
0 & 0 & -1 & 0 \\
0 & 0 & 0 & -i
\end{pmatrix}, \quad
\tau = 
\begin{pmatrix}
0 & 1 & 0 & 0 \\
0 & 0 & 1 & 0 \\
0 & 0 & 0 & 1 \\
1 & 0 & 0 & 0
\end{pmatrix}.
\end{equation}

We decompose now the four-dimensional Hilbert space as a tensor product of two qubits, $\mathcal{H}_4 \cong \mathcal{H}_2^A \otimes \mathcal{H}_2^B$, with $|0\rangle \leftrightarrow |0\rangle_A |0\rangle_B$, $|1\rangle \leftrightarrow |0\rangle_A |1\rangle_B$ and $|2\rangle \leftrightarrow |1\rangle_A |0\rangle_B$ with $|3\rangle \leftrightarrow |1\rangle_A |1\rangle_B$.

Introduce then the independent $\mathbb{Z}_2$ grading operators $P = \sigma_z \otimes \mathbb{I}_2$ and $Q = \mathbb{I}_2 \otimes \sigma_z$, which explicitly are with the Pauli matrices $\sigma_x$, $\sigma_z$ and $\mathbb{I}_2$, then 

\begin{equation}
P = \begin{pmatrix}
1 & 0 & 0 & 0 \\
0 & 1 & 0 & 0 \\
0 & 0 & -1 & 0 \\
0 & 0 & 0 & -1
\end{pmatrix}, \quad
Q = \begin{pmatrix}
1 & 0 & 0 & 0 \\
0 & -1 & 0 & 0 \\
0 & 0 & 1 & 0 \\
0 & 0 & 0 & -1
\end{pmatrix}
\end{equation}
and obtain the decomposition $\sigma = P Q$, for which $\sigma |0\rangle = |0\rangle$, $\sigma |1\rangle = i |1\rangle$, $\sigma |2\rangle = -1 |2\rangle$, and $\sigma |3\rangle = -i |3\rangle$.

For the cyclic shift operator $\tau$, we note that in the qubit basis it can be written as:
$\tau = \text{CNOT}_{A \to B} \cdot H_A \cdot S_B$, where $H_A$ is the Hadamard gate on qubit $A$, $S_B$ is a phase gate on $B$, and $\text{CNOT}_{A \to B}$ is a controlled-NOT gate from $A$ to $B$. This decomposition shows how the $\mathbb{Z}_4$ cyclic symmetry arises from entangling operations between the two $\mathbb{Z}_2$ sectors.

This explicit matrix form makes transparent how the $\mathbb{Z}_4$ parafermionic operators can be factorized into a $\mathbb{Z}_2 \times \mathbb{Z}_2$ graded structure, providing a clear blueprint for physical implementations (see Fig. \ref{f3}).

\begin{figure}[h]
\includegraphics[width=0.33\textwidth]{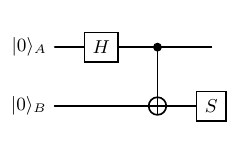}
\caption{Schematic decomposition of the $\mathbb{Z}_4$ ququart system into two qubits $A$ and $B$. The clock operator $\sigma$ is decomposed into $P = \sigma_z \otimes I$ and $Q = I \otimes \sigma_z$. The cyclic shift $\tau$ is implemented by a Hadamard $H$ on qubit $A$, a phase gate $S$ on qubit $B$, and a controlled-NOT ($\mathrm{CNOT}$) from $A$ to $B$.}
\label{f3}
\end{figure}

\section{Appendix B: Applications and Examples}

\subsection{Explicit Model Hamiltonian for OAM--SAM Coupled Ququarts}

To complement the algebraic framework developed in the main text, we provide here a schematic Hamiltonian that explicitly captures the essential features of the OAM--SAM coupled ququart system and illustrates how the $\mathbb{Z}_4$ parafermionic symmetry emerges.

We consider a four-mode system defined by the basis states $|0\rangle = |L,0\rangle$, $|1\rangle = |R,0\rangle$, $|2\rangle = |L,1\rangle$, $|3\rangle = |R,1\rangle$,
where $L/R$ denote left- and right-circular polarizations (spin angular momentum, SAM) and $\ell=0,1$ are the orbital angular momentum (OAM) quantum numbers.
The general tight-binding Hamiltonian reads
\begin{equation}
\hat{H} = \sum_{j=0}^3 \omega_j a_j^\dagger a_j + \sum_{\langle j,k \rangle} J_{jk} a_j^\dagger a_k + \sum_{\langle j,k \rangle} \lambda_{jk} a_j^\dagger \sigma_z a_k + \text{h.c.},
\end{equation}
where $a_j^\dagger$, $a_j$ are bosonic or fermionic creation and annihilation operators, $\omega_j$ are mode frequencies, $J_{jk}$ are nearest-neighbor coupling amplitudes between modes and $\lambda_{jk}$ encode spin--orbit (OAM--SAM) couplings, with $\sigma_z$ acting on the polarization subspace.

For the parafermionic sector, we use the generalized clock and shift operators $\sigma_c$ and $\tau$ for which $\sigma |k\rangle = e^{i\frac{\pi}{2}k} |k\rangle$, 
$\tau |k\rangle = |k+1 \mod 4 \rangle$, satisfying $\sigma_c^4 = \tau^4 = \mathbb{I}$, $\tau \sigma_c = i \sigma_c \tau$.
These operators can be realized in the Hamiltonian through cyclic hopping terms:
\begin{equation}
\hat{H}_\text{para} = g \sum_{k=0}^3 a_k^\dagger a_{k+1} + \text{h.c.},
\end{equation}
with periodic boundary conditions $a_{k+4} \equiv a_k$. The clock operator $\sigma_c$ can be implemented via mode-dependent phase shifts
\begin{equation}
\hat{H}_\text{phase} = \sum_{k=0}^3 \phi_k a_k^\dagger a_k, \quad \phi_k = \frac{\pi}{2}k.
\end{equation}
This explicit form connects the abstract algebraic symmetries to physically controllable parameters in photonic or condensed matter platforms, providing a basis for future experimental implementations.
The graded phase factors follow from the  multiplication rules of $\mathbb{Z}_2 \times \mathbb{Z}_2$-graded algebras, as developed in works by Palev, Tolstoy, and Scheunert \cite{palev1979, tolstoy, scheunert,Majid1995}. 
 The deformation factor $q^{|s - s'|}$ arises from the introduction of a quantum group deformation parameter, a technique pioneered by Majid \cite{Majid1995} and others in the study of braided Hopf algebras. Recent contributions by Toppan \cite{toppan} and collaborators have applied these structures to the embedding of paraparticles in a Majorana tower, providing the explicit algebraic form used in our numerical simulations.
The coupling coefficients $\theta_{ss'}$ are determined by the algebraic framework imposed on the Majorana tower, rather than fixed by the original Majorana mass--spin relation. Specifically, the graded phase factors $(-1)^{(a,b)\cdot(a',b')}$ arise from the multiplication rules of the $\mathbb{Z}_2 \times \mathbb{Z}_2$-graded algebra, while the deformation factor $q^{|s - s'|}$ is introduced through the quantum group or braided Hopf algebra formalism. Together, these components define the algebraic coupling pattern between spin sectors. The explicit form of $\theta_{ss'}$ thus depends on the theoretical choices made in the construction of the model and reflects the desired symmetry, braiding, and selection rules to be implemented in the physical or simulated system.
The graded couplings express not only the static algebraic structure but also the dynamic control possibilities offered by the truncated Majorana tower framework. Leveraging these selection rules in photonic implementations enables targeted manipulation of spin sectors, allowing for the design of quantum gates and protocols that intrinsically respect paraparticle symmetries. Beyond fundamental interest, this framework points toward practical applications in photonic quantum computing, including sector-isolated logical operations, error suppression through symmetry protection, and the emulation of exotic quantum phases. Moreover, the photonic Majorana ququart architecture provides a promising experimental testbed to explore connections between graded algebras, quantum entanglement, and topologically inspired quantum information processing.

\subsection{Graded Sector-Based CNOT and Toffoli Gates}
Here we present the explicit truth tables for quantum logic gates operating over $\mathbb{Z}_2 \times \mathbb{Z}_2$-graded sectors used in the Majorana-paraparticle framework. Control and target qubits are assigned sector labels $(a,b)$, with graded parity governing their transformations.

The CNOT (Controlled-NOT) gate flips the target qubit if and only if the control qubit is in the logical 1 sector. In the graded sector formalism, this corresponds to the control being in sector $(0,1)$ or $(1,1)$, depending on encoding. The graded CNOT truth table is
\begin{center}
\begin{tabular}{|c|c|c|c|}
\hline
Control & Target Qubit & Target Qubit & Operation \\
Qubit & (Input) & (Output) & Operation \\
\hline
$(0,0)$ & $t$ & $t$ & No Flip \\
$(0,1)$ & $0$ & $1$ & Flip \\
$(0,1)$ & $1$ & $0$ & Flip \\
$(1,0)$ & $t$ & $t$ & No Flip \\
$(1,1)$ & $0$ & $1$ & Flip \\
$(1,1)$ & $1$ & $0$ & Flip \\
\hline
\end{tabular}
\end{center}
where $t \in \{0,1\}$ denotes the target qubit logical value.

Example of Graded Sector-Based Toffoli (CCNOT) Gate:
The Toffoli gate (Controlled-Controlled-NOT, or CCNOT) flips the target qubit if and only if both control qubits are simultaneously in their logical 1 sectors,
\\
- First control in sector $(0,1)$, e.g., SAM-right circular polarization,
\\
- Second control in sector $(1,1)$, e.g., OAM-odd mode with certain parity.
\\The graded Toffoli truth table then becomes
\begin{center}
\begin{tabular}{|c|c|c|c|c|}
\hline
Control & Control & Target & Target & Operation \\
Qubit 1 & Qubit 2 & (Input) & (Output) & Operation \\
\hline
$(0,0)$ & $(1,0)$ & $A_p$ & $A_p$ & No Flip \\
$(0,0)$ & $(1,1)$ & $A_p$ & $A_p$ & No Flip \\
$(0,1)$ & $(1,0)$ & $A_p$ & $A_p$ & No Flip \\
$(0,1)$ & $(1,1)$ & $A_p$ & $B_p$ & Flip \\
$(0,1)$ & $(1,1)$ & $B_p$ & $A_p$ & Flip \\
$(0,0)$ & $(1,0)$ & $B_p$ & $B_p$ & No Flip \\
$(0,0)$ & $(1,1)$ & $B_p$ & $B_p$ & No Flip \\
$(0,1)$ & $(1,0)$ & $B_p$ & $B_p$ & No Flip \\
\hline
\end{tabular}
\end{center}
where $A_p$ denotes the logical ``0'' state (e.g., path $A_p$), $B_p$ denotes the logical ``1'' state (say, path $B_p$). Flip means toggling between $A_p$ and $B_p$ paths (logical NOT on the target). In both gates, the operation respects the underlying graded parity structure, $X_{(a,b),(a',b')} = (-1)^{(a,b)\cdot(a',b')}$, ensuring that only specified graded sector combinations trigger target flips, preserving graded quantum symmetry.

A quantitative loss-and-fidelity budget for either free-space or integrated implementations can be obtained by inserting platform-specific transmission coefficients in the equations reported in this work. We defer such calculations to
future experimental work.

\section*{References}

\onecolumngrid
\center
\begin{table}[h!]
\centering
\begin{tabular}{|c|p{7cm}|p{7cm}|}
\hline
\textbf{Symbol} & \textbf{Meaning} & \textbf{Notes} \\
\hline
$ (a,b) \in \mathbb{Z}_2 \times \mathbb{Z}_2 $ & Graded index for Lie algebra sectors & indices like $a,b\in\{0,1\}$ label bosonic/fermionic-like sectors\\
\hline
$ \hat{\psi}^{\pm}_{(a,b)} $ & Creation (+) / annihilation (-) operator in graded sector $ (a,b) $ & Operators satisfy trilinear commutation/anticommutation relations \\
\hline
$ \Psi_{(a,b)} $ & Component of the infinite-dimensional Majorana wavefunction in sector $ (a,b) $ & Part of graded decomposition of $ \Psi $ \\
\hline
$ \Gamma^\mu_{(a,b)} $ & Graded gamma matrices in sector $ (a,b) $ & Generalization of standard gamma matrices \\
\hline
$ M_{(a,b)} $ & Effective mass term in sector $ (a,b) $ & $ M_{(a,b)} = M(s_{(a,b)} + 1/2) $ \\
\hline
$ (a,b) \cdot (a',b') $ & Scalar product between graded indices & Defined as $ aa' + bb' \mod 2 $ \\
\hline
$ [X, Y] $ & Graded commutator & $ $XY$ - (-1)^{(a,b)\cdot(a',b')} YX $ \\
\hline
$ \Delta(\hat{\psi}_{(a,b)}) $ & Braided coproduct in graded Hopf algebra & $ \hat{\psi}_{(a,b)} \otimes I + \sum R^{(a,b)}_{(c,d)} I \otimes \hat{\psi}_{(c,d)} $ \\
\hline
$ R^{(a,b)}_{(c,d)} $ & Braiding matrix coefficients & Encode exchange statistics; satisfy Yang--Baxter equation \\
\hline
$ \theta_{ss'} $ & Deformation parameter for exchange statistics & $ \theta_{ss'} = \epsilon_{ss'}(q^{s+s'} - 1), \; |q| = 1 $ \\
\hline
$ P_{(a,b)} $ & Projector onto graded sector $ (a,b) $ & $ P_{(a,b)}\Psi = \Psi_{(a,b)} $, $ \sum_{(a,b)} P_{(a,b)} = I $ \\
\hline
$ X_{(a,b),(a',b')} $ & Exchange matrix for graded sectors & $ (-1)^{(a,b)\cdot(a',b')} $ \\
\hline
$ n_{(a,b)} $ & Number operator in graded sector & $ n_{(a,b)} = \hat{\psi}^+_{(a,b)} \hat{\psi}^-_{(a,b)} $ \\
\hline
$ E_{(a,b),(a',b')} $ & Exchange operator between graded sectors & $ E_{(a,b),(a',b')} = \hat{\psi}^+_{(a,b)} \hat{\psi}^-_{(a',b')} $ \\
\hline
$ H $ & Hamiltonian operator & $ H = \sum_{(a,b)} \epsilon_{(a,b)} n_{(a,b)} $ \\
\hline
$ s_{(a,b)} $ & Spin associated with sector $ (a,b) $ & $ s=0 \to (0,0), \; s=1/2 \to (0,1), \; s=1 \to (1,1), \; s=3/2 \to (1,0) $ \\
\hline
\end{tabular}
\caption{Summary of symbols and notation used in this work with the $\mathbb{Z}_2 \times \mathbb{Z}_2$-graded algebraic framework.}
\label{tabellona}
\end{table}
\twocolumngrid
\end{document}